\documentclass[12pt,english]{article}
\usepackage[T1]{fontenc}
\usepackage[latin9]{inputenc}
\usepackage{geometry}
\usepackage{color}
\usepackage{babel}
\usepackage{amsmath}
\usepackage{amsthm}
\usepackage{amssymb}
\usepackage{graphicx}
\usepackage{setspace}
\usepackage[authoryear]{natbib}
\usepackage[unicode=true,pdfusetitle,
 bookmarks=true,bookmarksnumbered=false,bookmarksopen=false,
 breaklinks=false,pdfborder={0 0 1},backref=false,colorlinks=true]
 {hyperref}
\hypersetup{
 citecolor=blue,  urlcolor=cyan}

\makeatletter

\providecommand{\tabularnewline}{\\}

\theoremstyle{plain}
    \ifx\thechapter\undefined
      \newtheorem{prop}{\protect\propositionname}
    \else
      \newtheorem{prop}{\protect\propositionname}[chapter]
    \fi
\theoremstyle{plain}
    \ifx\thechapter\undefined
      \newtheorem{lem}{\protect\lemmaname}
    \else
      \newtheorem{lem}{\protect\lemmaname}[chapter]
    \fi

\newcommand\independent{\protect\mathpalette{\protect\independenT}{\perp}}
\def\independenT#1#2{\mathrel{\rlap{$#1#2$}\mkern2mu{#1#2}}}
\definecolor{green}{rgb}{0.05, 0.5, 0.06}
\usepackage[superscript,biblabel]{cite}
\usepackage{babel}

\makeatother

\providecommand{\lemmaname}{Lemma}
\providecommand{\propositionname}{Proposition}

\begin{document}
\date{}
\title{\doublespacing{}\noindent {\Huge{}Efficient semiparametric estimation of marginal
treatment effects with genetic instrumental variables}\thanks{Ashish Patel (\protect\href{mailto:ashish.patel@mrc-bsu.cam.ac.uk}{ashish.patel@mrc-bsu.cam.ac.uk});
Francis J.\@ DiTraglia (\protect\href{mailto:francis.ditraglia@economics.ox.ac.uk}{francis.ditraglia@economics.ox.ac.uk});
Stephen Burgess (\protect\href{http://sb452@medschl.cam.ac.uk}{sb452@medschl.cam.ac.uk}).
We thank Jack Bowden for helpful discussions. }}
\author{Ashish Patel{\small{} }\textsuperscript{{\small{}a}}{\small{}, }Francis
J.\@ DiTraglia{\small{} }\textsuperscript{{\small{}b}}{\small{},}
\& Stephen Burgess{\small{} }\textsuperscript{{\small{}a,c}}}
\maketitle
\noindent \begin{center}
{\small{}\vskip -2em}\textsuperscript{{\small{}a}}{\small{} MRC
Biostatistics Unit, University of Cambridge}\\
{\small{}}\textsuperscript{{\small{}b}}{\small{} Department of Economics,
University of Oxford}\\
{\small{}}\textsuperscript{{\small{}c}}{\small{} Cardiovascular
Epidemiology Unit, University of Cambridge}{\small\par}
\par\end{center}
\begin{abstract}
Alcohol misuse is a key target of public health strategies aimed at
reducing cardiovascular risk. The effect of excessive alcohol consumption
on blood pressure may vary systematically with individuals' unobserved
propensity to engage in heavy drinking, complicating causal inference
with observational data. The marginal treatment effects framework
uses an instrumental variable for treatment choice (excessive alcohol
consumption) to study how selection into treatment is linked with
the treatment effect. We explore the use of a genetic instrument within
this framework, which is challenging because genetic compliers (individuals
for whom a change in the instrument changes their treatment choice)
are likely to be a small proportion of the overall sample. This can
lead to greater sampling uncertainty in the tails of the propensity
score distribution, i.e.\@, the conditional probability of choosing
treatment, and in turn poor estimation of causal estimands that measure
heterogeneous treatment effects. We show that the use of efficient
influence functions of target estimands improves estimation in terms
of robustness to sampling uncertainty in nonparametrically estimated
propensity scores. We find evidence of \textsl{reverse selection on
gains}: individuals most prone to excessive alcohol consumption experience
larger adverse effects on blood pressure. 
\end{abstract}
\newpage{}

\section{Introduction}

Excessive alcohol consumption is a modifiable behaviour that is responsible
for an estimated 2.6 million deaths per year globally according to
World Health Organization estimates \citep{Anderson2023}. Alcohol
misuse features prominently in public health recommendations aimed
at reducing cardiovascular risk, with clinical trial evidence establishing
a causal link between alcohol consumption and higher blood pressure
levels \citep{Tasnim2020}. 

Individuals differ substantially in their alcohol drinking habits
and responsiveness to health information \citep{DiMatteo2004}, suggesting
the effects of excessive alcohol consumption could also be heterogeneous
and closely tied to drinking behaviour. For example, those with high
blood pressure may be advised to lower alcohol intake which may attenuate
any positive association. Understanding this relationship between
the effects of excessive alcohol consumption (the \textsl{treatment
effect}) and the choice to drink excessively (the \textsl{treatment
choice}) is crucial for the design of effective public health interventions
\citep{Heckman2001}.

In practice, we cannot randomly assign individuals to drink alcohol
excessively, which leaves the concern that an observed association
between treatment choice and the outcome (blood pressure) does not
reflect only the causal effect of treatment but may instead be distorted
by unmeasured factors that correlate the two, such as dietary, smoking,
and exercise habits \citep{DiFederico2023}. The method of instrumental
variables (IVs) is an important tool in applied research to infer
causal effects by leveraging sources of variation in treatment choice
that are unrelated to unobserved confounders. 

In genetic epidemiology, Mendelian randomization uses genetic variants
as IVs, and aims to exploit the quasi-random allocation of genetic
variants at conception. In particular, genetic variants may be less
prone to issues of reverse causation and confounding \citep{Smith2003}.
The approach is most plausible when the biological mechanism by which
the genetic instrument influences treatment choice is known \citep{Larsson2023}.
Our instrument is a genetic risk score for alcohol consumption in
which a genetic variant from the \textsl{ADH1B} gene region receives
the largest weight, and is known to affect the ability to metabolize
alcohol and thus the risk of alcohol dependence \citep{Rietschel2013}. 

Recent years have seen a sharp rise in the number of Mendelian randomization
studies being performed \citep{Hemani2025}, with the vast majority
of studies estimating a simple linear IV model, even when the treatment
choice is binary \citep{Wu2024}. A large literature has focused on
clarifying the interpretation of IV estimators in simple linear models
that assume a constant treatment effect, when in fact there is unobserved
heterogeneity linking treatment effects with treatment choice. \citet{Imbens1994}
discussed conditions under which the IV estimand can be interpreted
as an average treatment effect for \textsl{compliers}, a subgroup
of the population for whom a change in the instrument changes their
treatment choice. 

There are two reasons why the IV estimand is unlikely to be useful
for Mendelian randomization analyses under unobserved heterogeneity
in treatment effects. First, genetics are likely to explain a low
proportion of variability in treatment choice, and so genetic compliers
may be only a small, and potentially unrepresentative, proportion
of the population of interest \citep{Burgess2018}. Second, the IV
estimand may not be interesting from a policy perspective because
it is not possible to directly manipulate the instrument (genetic
variants). 

A closely related strategy that directly estimates the relationship
between treatment effects and treatment choice is the marginal treatment
effects (MTE, \citealt{Heckman1999,Heckman2005}) framework. The MTE
estimand measures how treatment effects vary with a continuous latent
variable representing individual-specific reluctance to take treatment,
and it can also be reweighted to derive conventional estimands such
as the average treatment effect. 

In this paper, we apply the MTE framework to study treatment effect
heterogeneity with a genetic instrumental variable. MTE estimation
relies on first stage estimation of a propensity score (defined as
the conditional probability of treatment given observed covariates
and instrument). Due to the potentially complex nature of how genetic
variants influence excessive alcohol consumption \citep{Edenberg2013},
we will not impose a parametric assumption on the propensity score.
Instead, we will study the general case where the propensity score
is estimated nonparametrically.

The conventional approach to estimation in the MTE framework uses
a simple linear regression and does not account for uncertainty in
first-stage propensity score estimation. When the instrument induces
only limited variation in treatment, the estimated propensity score
is concentrated over a narrow range, leading to poor estimation of
the marginal treatment effects curve. These conventional estimates
of target estimands will \textsl{not }be robust nor efficient. 

Thus, for improved estimation in the MTE framework with genetic instruments,
we derive semiparametrically efficient estimates of target estimands
under plug-in nonparametric propensity scores. Importantly, these
efficient estimates have low variance, and are robust to sampling
uncertainty in the nonparametrically estimated propensity score. We
illustrate the benefits of this robustness property in desensitizing
estimation to both weak instruments and the choice of tuning parameters
for nonparametric propensity score estimation. In our linear model,
calculating efficient estimates and their standard errors is computationally
straightforward. 

Proofs of technical results are provided in Supplementary Material.
An R package to implement the methods discussed in this paper is provided
at \href{http://github.com/ash-res/efficient.mte/}{github.com/ash-res/efficient.mte/}. 

\section{Model and Estimands }

\subsection{Model}

To estimate the causal effects of a binary treatment, the majority
of Mendelian randomization applications continue to estimate a simple
linear instrumental variable (IV) model \citep{Wu2024}. However,
when unobserved heterogeneity in treatment choice is related to treatment
effects, the standard IV estimand can be difficult to interpret \citep{Imbens1994}.
In particular, the IV estimand reflects an average treatment effect
for compliers, a small and potentially unrepresentative subgroup of
the overall population \citep{Burgess2018}. Moreover, the IV estimand
may be of limited policy relevance since the genetic instrument cannot
be altered. 

The marginal treatment effects framework offers a way to move beyond
this complier average, and to directly estimate how treatment effects
vary with individuals' propensity to choose treatment, as well as
to recover conventional estimands (such as the average treatment effect)
as simple weighted averages. 

We start by introducing the generalized \citet{Roy1951} model which
captures unobserved heterogeneity linking treatment effects with the
choice to take treatment. Let $A$ denote a binary indicator for treatment,
which equals 1 if treated, and 0 if untreated. Let $Z$ denote an
instrumental variable for treatment, and $\boldsymbol{X}$ denote
a vector of observed covariates. To relate unobserved heterogeneity
in treatment effects with treatment choice, it is useful to start
by modelling the selection process. We assume treatment choice is
described by the model 
\begin{equation}
A=I\{\pi_{{\cal P}}(\boldsymbol{X},Z)>V\}
\end{equation}

where $\pi_{{\cal P}}(\boldsymbol{X},Z)$ is an \textsl{unknown} function
of $\boldsymbol{X}$ and $Z$, but is monotone in $Z$ conditional
on $\boldsymbol{X}$. The subscript ${\cal P}$ denotes that it is
the true or population average, rather than an estimate or sample
average. The \textsl{unobserved} variable $V$ represents unobserved
heterogeneity across individuals which we can think of as describing
their reluctance to take treatment. For any fixed level of covariates
$\boldsymbol{X}$ and instrument $Z$, those individuals with high
levels of $V$ are less likely to choose treatment than those with
low levels of $V$. Since the function $\pi_{{\cal P}}(\boldsymbol{X},Z)$
is left unspecified, Equation (1) is a quite general specification. 

For each individual, let $Y_{0}$ denote their untreated potential
outcome, and $Y_{1}$ their treated potential outcome. The generalized
Roy model allows the individual-specific treatment effect $Y_{1}-Y_{0}$
to vary with $V$, which captures the dynamic that unobserved heterogeneity
in treatment effects and treatment choice are correlated. To estimate
treatment effects in this setting, we will make use of a valid instrument
$Z$ such that 
\begin{equation}
Z\,\text{is conditionally independent of }(Y_{0},Y_{1},V)\,\text{given}\,\boldsymbol{X}.
\end{equation}
In the language of IVs, Equation (2) is an \textsl{exclusion restriction}:
for any fixed level of covariates, the instrument is independent of
the potential outcomes, and so the only way it may influence the outcome
is through its effect on treatment choice. The instrument can influence
the propensity to take treatment by shifting $\pi_{{\cal P}}(\boldsymbol{X},Z)$
without changing unobserved heterogeneity $V$, and thus can be used
to capture variation in treatment choice that is unrelated to unobserved
confounding.

To analyze treatment effect heterogeneity, it is easier to think about
how the treatment effect varies with quantiles of unobserved heterogeneity
$V$, rather than its absolute values. To this end, it is standard
to use the probability integral transform to normalize $V$ as a standard
uniform random variable conditional on $\boldsymbol{X}$, so that
$\pi_{{\cal P}}(\boldsymbol{X},Z)={\cal P}(A=1\vert\boldsymbol{X},Z)$
in Equation (1) becomes the propensity score of treatment. We emphasize
that this transformation is not a modelling assumption, but a normalization
given Equations (1) and (2) if the conditional distribution of $V$
given covariates $\boldsymbol{X}$ is continuous. For our empirical
investigation, since the genetics of excessive alcohol consumption
is complex and polygenic \citep{Edenberg2013}, we will not impose
a parametric assumption on the propensity score $\pi_{{\cal P}}(\boldsymbol{X},Z)$,
but instead estimate it nonparametrically.

Next, we model the dependence of the potential outcome under each
treatment choice on unobserved reluctance to take treatment $V$.
We assume a simple linear model for conditional potential outcome
means such that 
\begin{equation}
E[Y_{a}\vert\boldsymbol{X},V]=\underset{\text{constant term}}{\underbrace{\alpha_{a}-0.5\zeta_{a}}}+\boldsymbol{\beta_{a}}^{\prime}\boldsymbol{X}+\zeta_{a}V
\end{equation}
for $a=0,1$, where $\alpha_{a}$, $\boldsymbol{\beta_{a}}$, and
$\zeta_{a}$ are unknown parameters. The unusual form of the constant
term is without loss of generality, and allows the error term $Y_{a}-E[Y_{a}\vert\boldsymbol{X}]$
for $a=0,1$ to have mean zero which simplifies derivations. Equation
(3) is the typical model used in practice where potential outcomes
can vary linearly in observed covariates $\boldsymbol{X}$ and unobserved
reluctance to take treatment $V$.

The so-called marginal treatment effect \citep*[MTE;][]{Heckman1999}
is a function of observed covariates $\boldsymbol{X}$ and individual-specific
reluctance to take treatment $V$, 
\begin{eqnarray*}
\theta_{\text{MTE}}(\boldsymbol{X},V) & = & E[Y_{1}-Y_{0}\vert\boldsymbol{X},V]\\
 & = & \underset{\text{constant term}}{\underbrace{\alpha_{1}-\alpha_{0}-0.5(\zeta_{1}-\zeta_{0})}}+(\boldsymbol{\beta_{1}}-\boldsymbol{\beta_{0}})^{\prime}\boldsymbol{X}+(\zeta_{1}-\zeta_{0})V.
\end{eqnarray*}
Within this MTE function, determining the sign of $\zeta_{1}-\zeta_{0}$,
if any, is typically a main focus of MTE estimation since it reveals
the direction of correlation between treatment effects and unobserved
reluctance to take treatment. 

To understand how this model of potential outcomes relates to the
standard linear IV model used in the majority of Mendelian randomization
applications, note that since the observed outcome is $Y=(1-A)Y_{0}+AY_{1}$,
we can write $E[Y\vert X,\boldsymbol{V}]=\delta(\boldsymbol{X})+\theta_{\text{MTE}}(\boldsymbol{X},V)\cdot A+\zeta_{0}V$
where the term $\zeta_{0}V$ is independent with the instrument $Z$
by Equation (2), and $\delta(\boldsymbol{X})$ is a function of covariates.
In the standard linear IV model under constant treatment effects,
the IV estimator would be consistent for $\alpha_{1}-\alpha_{0}$,
the average treatment effect over everyone in the population. However,
more generally this IV estimate will not be useful for inferring how
treatment effects vary with unobserved heterogeneity in treatment
choice. 

\subsection{Marginal Treatment Effects and Target Estimands}

One issue with directly estimating the MTE function $\theta_{\text{MTE}}(\boldsymbol{X},V)$
is that it depends on the unobserved variable $V$. However, an observable
quantity is the conditional mean of the observed outcome $m(\boldsymbol{x},p)=E_{{\cal P}}[Y\vert\boldsymbol{X}=\boldsymbol{x},\pi_{{\cal P}}(\boldsymbol{X},Z)=p]$,
which under Equations (1)--(3) can be written as 
\begin{equation}
m(\boldsymbol{x},p)=\boldsymbol{r}(\boldsymbol{x},p)^{\prime}\boldsymbol{\gamma}
\end{equation}
where $\boldsymbol{r}(\boldsymbol{x},p)$ is a known vector, and $\boldsymbol{\gamma}$
is an unknown vector such that 
\[
\boldsymbol{r}(\boldsymbol{x},p)=\begin{bmatrix}1\\
\boldsymbol{x}\\
p\\
\boldsymbol{x}p\\
p^{2}
\end{bmatrix},\,\,\text{and}\,\,\boldsymbol{\gamma}=\begin{bmatrix}\alpha_{0}\\
\boldsymbol{\beta_{0}}\\
\alpha_{1}-\alpha_{0}-0.5(\zeta_{1}-\zeta_{0})\\
\text{\ensuremath{\boldsymbol{\beta_{1}}-\boldsymbol{\beta_{0}}}}\\
0.5(\zeta_{1}-\zeta_{0})
\end{bmatrix}.
\]

\citet{Heckman1999} showed that the MTE function is identified as
the derivative 
\begin{eqnarray*}
\theta_{\text{MTE}}(\boldsymbol{x},v) & = & \partial m(\boldsymbol{x},p)\big/\partial p\vert_{p=v}\\
 & = & \underset{\text{constant term}}{\underbrace{\alpha_{1}-\alpha_{0}-0.5(\zeta_{1}-\zeta_{0})}}+(\boldsymbol{\beta_{1}}-\boldsymbol{\beta_{0}})^{\prime}\boldsymbol{x}+(\zeta_{1}-\zeta_{0})v
\end{eqnarray*}
and it represents the treatment effect for an individual who is indifferent
between being treated and untreated, at their given value of instrument
and covariates. This specification of the MTE means that the treatment
effect can vary linearly with covariates $\boldsymbol{X}$ and unobserved
heterogeneity $V$, and the intercept of the linear-in-$v$ MTE slope
can vary with covariates. 

By taking weighted averages of the vector of parameters $\boldsymbol{\gamma}$,
we can also calculate several scalar treatment effect estimands that
may be of interest; a detailed list of possible target estimands is
discussed in \citet{Heckman2005} and \citet{Mogstad2018}. In our
analysis, we focus on the following four target estimands. First,
the \textsl{average treatment effect} (ATE) effect measures the average
effect of treatment across the whole population, 
\begin{equation}
\theta_{\text{ATE}}(\boldsymbol{\gamma},{\cal P})=E_{{\cal P}}[Y_{1}-Y_{0}]=E_{{\cal P}}[\boldsymbol{r}_{\text{ATE}}(\boldsymbol{X})]^{\prime}\boldsymbol{\gamma}
\end{equation}
where $\boldsymbol{r}_{\text{ATE}}(\boldsymbol{X})=(0,\boldsymbol{0}^{\prime},1,\boldsymbol{X}^{\prime},1)^{\prime}$. 

The \textsl{average treatment effect on the treated} (ATT) measures
the average treatment effect for those who choose treatment, 
\begin{equation}
\theta_{\text{ATT}}(\boldsymbol{\gamma},{\cal P})=E_{{\cal P}}[Y_{1}-Y_{0}\vert A=1]={\cal P}(A=1)^{-1}E_{{\cal P}}[\boldsymbol{r}_{\text{ATT}}(\boldsymbol{X},Z,{\cal P})]^{\prime}\boldsymbol{\gamma}
\end{equation}
where $\boldsymbol{r}_{\text{ATT}}(\boldsymbol{X},Z,{\cal P})=(0,\boldsymbol{0}^{\prime},\pi_{{\cal P}}(\boldsymbol{X},Z),\pi_{{\cal P}}(\boldsymbol{X},Z)\boldsymbol{X}^{\prime},\pi_{{\cal P}}(\boldsymbol{X},Z)^{2})^{\prime}.$
Analogously, the \textsl{average treatment effect on the untreated}
(ATU) measures the average effect of treatment for those who choose
to be untreated, 
\begin{equation}
\theta_{ATU}(\boldsymbol{\gamma},{\cal P})=E_{{\cal P}}[Y_{1}-Y_{0}\vert A=0]={\cal P}(A=0)^{-1}E_{{\cal P}}[\boldsymbol{r}_{\text{ATU}}(\boldsymbol{X},Z,{\cal P})]^{\prime}\boldsymbol{\gamma}
\end{equation}
where $\boldsymbol{r}_{ATU}(\boldsymbol{X},Z,{\cal P})=(0,\boldsymbol{0}^{\prime},1-\pi_{{\cal P}}(\boldsymbol{X},Z),(1-\pi_{{\cal P}}(\boldsymbol{X},Z))\boldsymbol{X}^{\prime},1-\pi_{{\cal P}}(\boldsymbol{X},Z)^{2})^{\prime}$.

The \textsl{average selection on gains} (ASG) estimand simply finds
the difference between the ATT and the ATU, and summarizes how selection
into treatment is influenced by the treatment effect on average, 
\begin{eqnarray}
\theta_{\text{ASG}}(\boldsymbol{\gamma},{\cal P}) & = & E_{{\cal P}}[Y_{1}-Y_{0}\vert A=1]-E_{{\cal P}}[Y_{1}-Y_{0}\vert A=0]\nonumber \\
 & = & (E_{{\cal P}}[\boldsymbol{r}_{\text{ATT}}(\boldsymbol{X},Z,{\cal P})]{\cal P}(A=1)^{-1}-E_{{\cal P}}[\boldsymbol{r}_{\text{ATU}}(\boldsymbol{X},Z,{\cal P})]{\cal P}(A=0)^{-1})^{\prime}\boldsymbol{\gamma}.
\end{eqnarray}
In our application, if $\theta_{\text{ASG}}$ is negative, then this
would be consistent with health conscious behaviour where individuals
may act on private knowledge of their pre-existing health conditions,
and so those choosing not to consume alcohol excessively would be
more likely to be adversely affected in terms of higher blood pressure
if they did. 

It is easy to extend the potential outcomes model in Equation (3)
to allow for a more complex specification of the MTE function $\theta_{\text{MTE}}(\boldsymbol{X},V)$.
Although this would see the dimension of $\boldsymbol{\gamma}$ increase,
the target estimands in Equations (5)--(8) would still be weighted
averages of $\boldsymbol{\gamma}$. We derive our results under a
more general model that allows for higher-order polynomial terms and
interactions between observed covariates $\boldsymbol{X}$ and unobserved
heterogeneity $V$ (see Supplementary Material for further details).

\section{Efficient Estimation and Inference}

This section outlines two approaches to estimate the MTE parameters
$\boldsymbol{\gamma}$ and the target estimands in Section 2.2. First,
the ``conventional'' approach, and the one that is most widely used
in current practice, estimates the MTE parameters $\boldsymbol{\gamma}$
through a simple linear regression, and then calculates the sample
counterparts of Equations (5)--(8) to estimate the target estimands. 

Second, we propose an alternative ``efficient'' approach by calculating
efficient influence functions of $\boldsymbol{\gamma}$ and target
estimands to derive semiparametrically efficient estimates. Compared
with the conventional approach, our proposed efficient estimates of
target estimands will have a lower asymptotic variance and be robust
to sampling uncertainty in first stage propensity score estimation.
This robustness is particularly important when genetic instruments
only weakly influence treatment choice. 

\subsection{Estimation of Individual MTE Parameters}

The previous section shows that the estimands we are interested in
are weighted averages of the parameters in $\boldsymbol{\gamma}$.
Given the observed mean in Equation (4), $\boldsymbol{\gamma}$ is
equal to the coefficient of the \textsl{population-level} linear regression
of $Y$ on $\boldsymbol{r}(\boldsymbol{X},\pi_{{\cal P}}(\boldsymbol{X},Z))$.
Therefore, a feasible and intuitive method to estimate $\boldsymbol{\gamma}$
simply runs a linear regression of $Y$ on $\boldsymbol{r}(\boldsymbol{X},\pi_{\widehat{{\cal P}}}(\boldsymbol{X},Z))$
where $\pi_{\widehat{{\cal P}}}(\boldsymbol{X},Z)$ is an estimate
of the propensity score. We denote this estimate of $\boldsymbol{\gamma}$
as $\widehat{\boldsymbol{\gamma}}$, and refer to this as the ``conventional''
estimator because it is commonly used in applied work. For inference,
usual standard errors of $\widehat{\boldsymbol{\gamma}}$ from a typical
linear regression output will not be correct because they should be
corrected to account for additional variability from first stage propensity
score estimation. 

Is this conventional estimator $\widehat{\boldsymbol{\gamma}}$ the
best estimator of $\boldsymbol{\gamma}$? In terms of achieving a
low variance, it is efficient; its asymptotic variance is equal to
the semiparametric variance lower bound for the regression estimand
$\boldsymbol{\gamma}$. However, $\widehat{\boldsymbol{\gamma}}$
can be sensitive to misspecification of the propensity score $\pi_{{\cal P}}(\boldsymbol{X},Z)$,
or to the choice of tuning parameters if it is nonparametrically estimated
as in our case. More broadly, $\widehat{\boldsymbol{\gamma}}$ can
be sensitive to sampling uncertainty in first stage propensity score
estimates, which we demonstrate is more likely under weak instruments.
Given that genetic variants will typically explain only a low proportion
of variability in treatment choice, it is important in practice that
our estimates of MTE parameters are robust to weak instruments.

One way to achieve this is by constructing an estimator $\widetilde{\boldsymbol{\gamma}}$
of $\boldsymbol{\gamma}$ based on the efficient influence function
of the regression estimand $\boldsymbol{\gamma}$, which we call the
``efficient'' estimator. This estimator is semiparametrically efficient
for the regression estimand $\boldsymbol{\gamma}$. In our linear
model, $\widetilde{\boldsymbol{\gamma}}$ has a convenient closed
form expression
\[
\widetilde{\boldsymbol{\gamma}}=(\boldsymbol{\Omega}_{{\cal \widehat{P}}}+\boldsymbol{\Gamma}_{\widehat{{\cal P}}})^{-1}\boldsymbol{\Upsilon}_{\widehat{{\cal P}}}
\]
where $\boldsymbol{\Omega}_{{\cal \widehat{P}}}$, $\boldsymbol{\Gamma}_{\widehat{{\cal P}}}$,
and $\boldsymbol{\Upsilon}_{\widehat{{\cal P}}}$ are the sample counterparts,
with plug-in propensity score estimates, of the population means 
\begin{eqnarray*}
\boldsymbol{\Omega}_{{\cal P}} & = & E_{{\cal P}}[\boldsymbol{r}(\boldsymbol{X},\pi_{{\cal P}}(\boldsymbol{X},Z))\boldsymbol{r}(\boldsymbol{X},\pi_{{\cal P}}(\boldsymbol{X},Z))^{\prime}]\\
\boldsymbol{\Gamma}_{{\cal P}} & = & E_{{\cal P}}\big[\boldsymbol{R}(\boldsymbol{X},\pi_{{\cal P}}(\boldsymbol{X},Z))(A-\pi_{{\cal P}}(\boldsymbol{X},Z))\boldsymbol{r}(\boldsymbol{X},\pi_{{\cal P}}(\boldsymbol{X},Z))^{\prime}]\\
\boldsymbol{\Upsilon}_{{\cal P}} & = & E_{{\cal P}}[\boldsymbol{r}(\boldsymbol{X},\pi_{{\cal P}}(\boldsymbol{X},Z))Y]
\end{eqnarray*}
where $\boldsymbol{R}(\boldsymbol{X},\pi_{{\cal P}}(\boldsymbol{X},Z))=(1,\boldsymbol{X}^{\prime},2\pi_{{\cal P}}(\boldsymbol{X},Z))^{\prime}$. 
\begin{prop}[Efficient MTE parameter estimates]
 $\widetilde{\boldsymbol{\gamma}}$ is based on the efficient influence
function of the regression estimand $\boldsymbol{\gamma}$, and both
$\widehat{\boldsymbol{\gamma}}$ and $\widetilde{\boldsymbol{\gamma}}$
have the same asymptotic variance. 
\end{prop}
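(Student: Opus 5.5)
The plan is to treat $\boldsymbol{\gamma}$ as a functional of an unrestricted distribution ${\cal P}$ of $(Y,A,\boldsymbol{X},Z)$, namely $\boldsymbol{\gamma}({\cal P})=\boldsymbol{\Omega}_{{\cal P}}^{-1}\boldsymbol{\Upsilon}_{{\cal P}}$. I will compute its pathwise derivative along regular parametric submodels ${\cal P}_{t}$ with score $S$. That derivative gives the efficient influence function (EIF). I will then verify two things: that $\widetilde{\boldsymbol{\gamma}}$ solves the sample analogue of the EIF moment equation, and that $\widehat{\boldsymbol{\gamma}}$ has this same function as its influence function.

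The first step is to differentiate the defining moment $E_{{\cal P}_{t}}[\boldsymbol{r}(\boldsymbol{X},\pi_{{\cal P}_{t}})(Y-\boldsymbol{r}(\boldsymbol{X},\pi_{{\cal P}_{t}})^{\prime}\boldsymbol{\gamma}_{t})]=0$ at $t=0$. This derivative has three pieces.

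\begin{enumerate}
\item The derivative through the outer expectation gives $E[\boldsymbol{r}(Y-\boldsymbol{r}^{\prime}\boldsymbol{\gamma})S]$.
\item The derivative through $\boldsymbol{\gamma}_{t}$ gives $-\boldsymbol{\Omega}_{{\cal P}}\,\partial_{t}\boldsymbol{\gamma}_{t}$.
\item The derivative through the nuisance $\pi_{{\cal P}_{t}}=E_{{\cal P}_{t}}[A\vert\boldsymbol{X},Z]$ gives
\[
E\big[\{\partial_{p}\boldsymbol{r}\,(Y-\boldsymbol{r}^{\prime}\boldsymbol{\gamma})-\boldsymbol{r}\,\partial_{p}\boldsymbol{r}^{\prime}\boldsymbol{\gamma}\}\,\partial_{t}\pi_{{\cal P}_{t}}\big].
\]
\end{enumerate}

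In the third piece, the part multiplying $Y-\boldsymbol{r}^{\prime}\boldsymbol{\gamma}$ vanishes. The reason is that under Equations (1)--(3) we have $E[Y\vert\boldsymbol{X},Z]=m(\boldsymbol{X},\pi_{{\cal P}})$, so $Y-\boldsymbol{r}^{\prime}\boldsymbol{\gamma}$ is conditionally mean zero given $(\boldsymbol{X},Z)$. For the remaining part, note that $\partial_{p}\boldsymbol{r}^{\prime}\boldsymbol{\gamma}=\partial m/\partial p$, which equals $\boldsymbol{R}(\boldsymbol{X},\pi_{{\cal P}})^{\prime}\boldsymbol{\gamma}_{s}$, where $\boldsymbol{\gamma}_{s}$ is the subvector of coefficients on $(p,\boldsymbol{x}p,p^{2})$. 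I will then use the standard identity $\partial_{t}E_{t}[A\vert\boldsymbol{X},Z]=E[(A-\pi_{{\cal P}})S\vert\boldsymbol{X},Z]$. With it, the third piece becomes $-E[\boldsymbol{r}\boldsymbol{R}^{\prime}\boldsymbol{\gamma}_{s}(A-\pi_{{\cal P}})S]$. Collecting terms yields the EIF
\[
\psi=\boldsymbol{\Omega}_{{\cal P}}^{-1}\big\{\boldsymbol{r}(Y-\boldsymbol{r}^{\prime}\boldsymbol{\gamma})-\boldsymbol{r}\,(A-\pi_{{\cal P}})\boldsymbol{R}^{\prime}\boldsymbol{\gamma}_{s}\big\}.
\]
Because the model for observables is treated as locally nonparametric for the regression estimand, this influence function is the efficient one.

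The second step sets the sample mean of $\psi$ to zero, with $\pi_{\widehat{{\cal P}}}$ plugged in. The equation is linear in $\boldsymbol{\gamma}$. Writing $\boldsymbol{R}^{\prime}\boldsymbol{\gamma}_{s}$ as $\boldsymbol{R}^{\prime}$ acting on the relevant block of $\boldsymbol{\gamma}$, padded with zeros so that $\boldsymbol{\Gamma}$ is square, the equation reads $\boldsymbol{\Upsilon}_{\widehat{{\cal P}}}=(\boldsymbol{\Omega}_{\widehat{{\cal P}}}+\boldsymbol{\Gamma}_{\widehat{{\cal P}}})\boldsymbol{\gamma}$. (Here $\boldsymbol{\Gamma}_{\widehat{{\cal P}}}$ should be read transposed or embedded as needed to match this orientation.) Solving it gives exactly $\widetilde{\boldsymbol{\gamma}}$. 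This establishes the first claim.

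The third step concerns the conventional estimator $\widehat{\boldsymbol{\gamma}}$. I will invoke Newey's (1994) pathwise-derivative result for two-step estimators with nonparametric first stages. It states that the influence function of a plug-in estimator equals the naive term $\boldsymbol{\Omega}_{{\cal P}}^{-1}\boldsymbol{r}(Y-\boldsymbol{r}^{\prime}\boldsymbol{\gamma})$ plus a first-stage correction. That correction is computed precisely as in the third piece of the first step, and it does not depend on which nonparametric estimator is used. Hence $\widehat{\boldsymbol{\gamma}}$ is asymptotically linear with influence function $\psi$, which is the same as that of $\widetilde{\boldsymbol{\gamma}}$, so the two estimators have equal asymptotic variance $E[\psi\psi^{\prime}]$.

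For $\widetilde{\boldsymbol{\gamma}}$ I also need to show that its own first-stage correction vanishes, i.e., that the moment $\psi$ is Neyman orthogonal in $\pi$. I would check this directly by computing the Gateaux derivative of $E[\psi(\pi)]$ in direction $h$. The terms multiplying $h$ are the derivative of $\boldsymbol{r}(Y-\boldsymbol{r}^{\prime}\boldsymbol{\gamma})$, which is $-\boldsymbol{r}\boldsymbol{R}^{\prime}\boldsymbol{\gamma}_{s}h$ on average, and the term $+\boldsymbol{r}\boldsymbol{R}^{\prime}\boldsymbol{\gamma}_{s}h$ coming from $-(A-\pi)$. These cancel. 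Any remaining terms carry the factor $A-\pi_{{\cal P}}$, which has conditional mean zero.

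I expect the main obstacle to be the regularity conditions for Newey's result in the third step: rate conditions on $\pi_{\widehat{{\cal P}}}$ (such as $o_{p}(n^{-1/4})$ in $L_{2}$) and stochastic equicontinuity, or else cross-fitting. I would state these as high-level assumptions rather than verify them for a specific nonparametric estimator.
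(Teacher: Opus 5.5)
Your proposal is correct and follows essentially the paper's route. You differentiate $\boldsymbol{\Omega}_{\mathcal P}^{-1}\boldsymbol{\Upsilon}_{\mathcal P}$ along a perturbation of $\mathcal P$; your score-based pathwise derivative is equivalent to the paper's point-mass Gateaux derivative. You use $E[Y\mid\boldsymbol{X},Z]=\boldsymbol{r}^{\prime}\boldsymbol{\gamma}$ (the paper's Lemma 5) to remove the $\partial_p\boldsymbol{r}\,(Y-\boldsymbol{r}^{\prime}\boldsymbol{\gamma})$ term. You then set the sample mean of the EIF to zero to get the closed form. Finally, you show that the first-stage influence function of the OLS moment reproduces the EIF's correction term. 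Your explicit Neyman-orthogonality check goes beyond the paper, which only asserts that property via Hines et al.

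The one substantive difference is the orientation of the correction term, and on this point you are right and the paper is not. Your correction is $\boldsymbol{r}\,(A-\pi)\,\partial_p\boldsymbol{r}^{\prime}\boldsymbol{\gamma}$. The paper's Lemma 6 has $\partial_p\boldsymbol{r}\,(A-\pi)\,\boldsymbol{r}^{\prime}\boldsymbol{\gamma}$ instead, and its $\boldsymbol{\Gamma}$ has $\boldsymbol{R}$ (the nonzero block of $\partial_p\boldsymbol{r}$) on the left for the same reason.

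The source of the discrepancy is the paper's Lemma 4. It writes the derivative of $\boldsymbol{r}\boldsymbol{r}^{\prime}$ as $2(\partial_\pi\boldsymbol{r})\boldsymbol{r}^{\prime}$ instead of the symmetric $(\partial_\pi\boldsymbol{r})\boldsymbol{r}^{\prime}+\boldsymbol{r}(\partial_\pi\boldsymbol{r})^{\prime}$. With the correct derivative, Lemma 6 and the paper's FSIF computation both yield exactly your $\psi$. A quick sanity check confirms this: if $m$ does not depend on $p$, there should be no first-stage correction, which holds for your formula but not for the paper's.

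So your parenthetical ``read transposed'' is not a harmless convention; it replaces the statement's $\boldsymbol{\Gamma}_{\widehat{\mathcal P}}$ by its transpose. With the literal orientation, the estimating equation is not the EIF equation, and your orthogonality computation fails. It leaves $E[(\partial_p\boldsymbol{r}\,m-\boldsymbol{r}\,\partial_p m)\,h]$, whose first component is $-E[h\,\partial m/\partial p]\neq 0$. You should therefore state explicitly that you are proving the claim for $(\boldsymbol{\Omega}_{\widehat{\mathcal P}}+\boldsymbol{\Gamma}_{\widehat{\mathcal P}}^{\prime})^{-1}\boldsymbol{\Upsilon}_{\widehat{\mathcal P}}$.

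If you also want the variance half of the proposition for the estimator as literally written, one extra step suffices. First, $\widetilde{\boldsymbol{\gamma}}-\widehat{\boldsymbol{\gamma}}=-(\boldsymbol{\Omega}_{\widehat{\mathcal P}}+\boldsymbol{\Gamma}_{\widehat{\mathcal P}})^{-1}\boldsymbol{\Gamma}_{\widehat{\mathcal P}}\widehat{\boldsymbol{\gamma}}$. Second, the Newey-type conditions you already invoke in your third step give $\sqrt{n}\,E_{\widehat{\mathcal P}}[h(\boldsymbol{X},Z)(A-\pi_{\widehat{\mathcal P}})]=o_p(1)$ for smooth $h$, together with the usual remainder bounds. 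Together these imply $\sqrt{n}\,\boldsymbol{\Gamma}_{\widehat{\mathcal P}}\boldsymbol{\gamma}=o_p(1)$ in either orientation. Hence both versions are first-order equivalent to $\widehat{\boldsymbol{\gamma}}$ and share the influence function $\psi$.
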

While both $\widehat{\boldsymbol{\gamma}}$ and $\widetilde{\boldsymbol{\gamma}}$
have the same asymptotic variance, $\widetilde{\boldsymbol{\gamma}}$
has the advantage of being robust to uncertainty in first stage propensity
score estimation. One property that feeds into the robustness of the
efficient estimator $\widetilde{\boldsymbol{\gamma}}$ is that it
is asymptotically equivalent to an estimator that would be obtained
if the true propensity score function $\pi_{{\cal P}}(\boldsymbol{X},Z)$
was known. In contrast the conventional estimator $\widehat{\boldsymbol{\gamma}}$
would not be efficient if the propensity score was known, even though
it is when the propensity score is estimated. 

For inference, it is also easy to calculate the standard error of
$\widetilde{\boldsymbol{\gamma}}$ because a consistent estimator
of the asymptotic variance of $\widetilde{\boldsymbol{\gamma}}$ is
the sample counterpart of the variance of the efficient influence
function divided by sample size. In particular, unlike for $\widehat{\boldsymbol{\gamma}}$,
there is no need to calculate the additional uncertainty related to
first stage propensity score estimation because this term is negligible
by design \citep[see, for example,][]{Hines2022}. 

\subsection{Estimation of Target Estimands}

For the four target estimands in Equations (5)--(8), we analogously
describe the ``conventional'' estimates as those that simply replace
the unknown $\boldsymbol{\gamma}$ with the conventional ordinary
least squares estimator $\widehat{\boldsymbol{\gamma}}$, the unknown
propensity scores $\pi_{{\cal P}}(\boldsymbol{X},Z)$ with parametric
or nonparametric estimates $\pi_{{\cal \widehat{P}}}(\boldsymbol{X},Z)$,
and population observed means with their sample counterparts. These
conventional estimates of the target estimands will\textsl{ not be
robust nor efficient. }
\noindent \begin{center}
{\footnotesize{}}%
\begin{tabular}{c|c|c}
\hline 
{\small{}Estimand} & {\small{}Conventional (plug-in) weights: $\widehat{\boldsymbol{\omega}}_{\text{J}}(.)$} & {\small{}Efficient weights: $\widetilde{\boldsymbol{\omega}}_{\text{J}}(.)$}\tabularnewline
\hline 
\hline 
{\small{}ATE} & {\footnotesize{}$E_{\widehat{{\cal P}}}[\boldsymbol{r}_{\text{ATE}}(\boldsymbol{X})]$} & {\footnotesize{}$\widehat{\boldsymbol{\omega}}_{\text{ATE}}-\widehat{\boldsymbol{\omega}}_{\text{ATE}}\boldsymbol{\Omega}_{\widehat{{\cal P}}}^{-1}\boldsymbol{\Gamma}_{\widehat{{\cal P}}}$}\tabularnewline
\hline 
{\small{}ATT} & {\footnotesize{}$\begin{array}{c}
\widehat{p}_{1}E_{\widehat{{\cal P}}}[\boldsymbol{r}_{\text{ATT}}(\boldsymbol{X},\pi_{\widehat{{\cal P}}}(\boldsymbol{X},Z))]\end{array}$} & {\footnotesize{}$\begin{array}{c}
\,\\
\widehat{\boldsymbol{\omega}}_{\text{ATT}}-\widehat{p}_{1}E_{\widehat{{\cal P}}}[\boldsymbol{r}_{\text{ATT}}(\boldsymbol{X},\pi_{{\cal \widehat{P}}}(\boldsymbol{X},Z))]^{\prime}\boldsymbol{\Omega}_{\widehat{{\cal P}}}^{-1}\boldsymbol{\Gamma}_{\widehat{{\cal P}}}+\\
\,\\
\widehat{p}_{1}E_{\widehat{{\cal P}}}\Big[\frac{\partial}{\partial\pi}\boldsymbol{r}_{\text{ATT}}(\boldsymbol{X},\pi_{{\cal \widehat{P}}}(\boldsymbol{X},Z))\cdot(A-\pi_{\widehat{{\cal P}}}(\boldsymbol{X},Z))\Big]\\
\,
\end{array}$}\tabularnewline
\hline 
{\small{}ATU} & {\footnotesize{}$\begin{array}{c}
\widehat{p}_{0}E_{\widehat{{\cal P}}}[\boldsymbol{r}_{\text{ATU}}(\boldsymbol{X},\pi_{\widehat{{\cal P}}}(\boldsymbol{X},Z))]\end{array}$} & {\footnotesize{}$\begin{array}{c}
\,\\
\widehat{\boldsymbol{\omega}}_{\text{ATU}}-\widehat{p}_{0}E_{\widehat{{\cal P}}}[\boldsymbol{r}_{\text{ATU}}(\boldsymbol{X},\pi_{{\cal \widehat{P}}}(\boldsymbol{X},Z))]^{\prime}\boldsymbol{\Omega}_{\widehat{{\cal P}}}^{-1}\boldsymbol{\Gamma}_{\widehat{{\cal P}}}+\\
\,\\
\widehat{p}_{0}E_{\widehat{{\cal P}}}\Big[\frac{\partial}{\partial\pi}\boldsymbol{r}_{\text{ATU}}(\boldsymbol{X},\pi_{{\cal \widehat{P}}}(\boldsymbol{X},Z))\cdot(A-\pi_{\widehat{{\cal P}}}(\boldsymbol{X},Z))\Big]\\
\,
\end{array}$}\tabularnewline
\hline 
{\small{}ASG} & {\footnotesize{}$\widehat{\boldsymbol{\omega}}_{\text{ATT}}-\widehat{\boldsymbol{\omega}}_{\text{ATU}}$} & {\footnotesize{}$\widetilde{\boldsymbol{\omega}}_{\text{ATT}}-\widetilde{\boldsymbol{\omega}}_{\text{ATU}}$}\tabularnewline
\hline 
\end{tabular}\\
~\\
{\small{}Table 1. The conventional estimator $\widehat{\boldsymbol{\omega}}_{\text{ }}^{\prime}(.)\widehat{\boldsymbol{\gamma}}$
and the efficient estimator $\widetilde{\boldsymbol{\omega}}^{\prime}(\cdot)\widehat{\boldsymbol{\gamma}}$
for each treatment effect estimand, where $\widehat{p}_{0}=\widehat{{\cal P}}(A=0)^{-1}$
and $\widehat{p}_{1}=\widehat{{\cal P}}(A=1)^{-1}$.}{\small\par}
\par\end{center}

In Table 1, we derive efficient estimates of target estimands. Both
the conventional and efficient estimates of the target estimands are
weighted averages of the ordinary least squares estimator $\widehat{\boldsymbol{\gamma}}$;
Table 1 shows the weights for the efficient estimates involve only
a couple of more terms compared to the conventional estimates, and
therefore the efficient estimates are also easy to compute. 
\begin{prop}[Efficient target parameter estimates]
 For target estimands in Equations (5)--(8), which are defined as
functions of the regression estimand $\boldsymbol{\gamma}$, the semiparametrically
efficient estimates are given in Table 1. 
\end{prop}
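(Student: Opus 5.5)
The plan is to derive the efficient influence function (EIF) of each target functional $\theta_J(\boldsymbol{\gamma},{\cal P})=\boldsymbol{\omega}_J({\cal P})^{\prime}\boldsymbol{\gamma}({\cal P})$ by pathwise differentiation. I then show that the one-step estimator $\theta_J(\widehat{\boldsymbol{\gamma}},\widehat{{\cal P}})+E_{\widehat{{\cal P}}}[\text{EIF}]$ rearranges into $\widetilde{\boldsymbol{\omega}}_J^{\prime}\widehat{\boldsymbol{\gamma}}$ with the weights in Table 1.

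Take a regular parametric submodel ${\cal P}_t$ with score $S$. By the product rule,
\[
\frac{d}{dt}\theta_J=\Big(\frac{d}{dt}\boldsymbol{\omega}_J({\cal P}_t)\Big)^{\prime}\boldsymbol{\gamma}+\boldsymbol{\omega}_J({\cal P})^{\prime}\frac{d}{dt}\boldsymbol{\gamma}({\cal P}_t).
\]
The second term is handled by Proposition 1. The EIF of $\boldsymbol{\gamma}$ is taken as
\[
\psi_{\boldsymbol{\gamma}}=\boldsymbol{\Omega}_{{\cal P}}^{-1}\Big\{\boldsymbol{r}(Y-\boldsymbol{r}^{\prime}\boldsymbol{\gamma})-\boldsymbol{R}(A-\pi_{{\cal P}})\boldsymbol{r}^{\prime}\boldsymbol{\gamma}\cdots\Big\}.
\]
Its precise form is whatever gives the closed form $(\boldsymbol{\Omega}+\boldsymbol{\Gamma})^{-1}\boldsymbol{\Upsilon}$. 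In particular, the correction for perturbing $\pi_{{\cal P}}$ is the term involving $\boldsymbol{R}=\partial \boldsymbol{r}^{\prime}\boldsymbol{\gamma}/\partial p$ structure times $(A-\pi_{{\cal P}})$. Contracting with $\boldsymbol{\omega}_J$ yields the piece $-\boldsymbol{\omega}_J\boldsymbol{\Omega}^{-1}\boldsymbol{\Gamma}\boldsymbol{\gamma}$ after using $E[\boldsymbol{r}(Y-\boldsymbol{r}^{\prime}\boldsymbol{\gamma})]=0$. Evaluated at the sample and at $\widehat{\boldsymbol{\gamma}}$ (which solves the OLS normal equations exactly), only the $\boldsymbol{\Gamma}$ correction survives, giving the second summand in each efficient weight.

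The first term is the pathwise derivative of the weights. For the ATE, $\boldsymbol{\omega}_{\text{ATE}}=E[\boldsymbol{r}_{\text{ATE}}(\boldsymbol{X})]$, so the EIF contribution is $(\boldsymbol{r}_{\text{ATE}}(\boldsymbol{X})-\boldsymbol{\omega}_{\text{ATE}})^{\prime}\boldsymbol{\gamma}$, which has sample mean zero. For ATT and ATU, the weight $p_1E[\boldsymbol{r}_{\text{ATT}}(\boldsymbol{X},\pi_{{\cal P}}(\boldsymbol{X},Z))]$ depends on ${\cal P}$ in three ways: through the outer expectation, through $p_1={\cal P}(A=1)^{-1}$, and through $\pi_{{\cal P}}$. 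The derivative of a conditional mean $\pi_{{\cal P}}=E[A\mid\boldsymbol{X},Z]$ contributes the standard term $\frac{\partial}{\partial\pi}\boldsymbol{r}_{\text{ATT}}\cdot(A-\pi_{{\cal P}})$. The outer-expectation and $p_1$ terms have exactly zero empirical mean at the plug-in. The $p_1$ term is $-p_1^{2}(A-{\cal P}(A=1))E[\boldsymbol{r}_{\text{ATT}}]$, which has mean zero in-sample. Summing the surviving pieces reproduces the three-term ATT/ATU entries in Table 1. Linearity in $\theta$ then gives the ASG entry as the difference.

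Finally, I verify that these expressions are indeed the EIF in the model defined by Equations (1)--(3), i.e.\@ that they lie in the tangent space. This holds because the model restricts only the mean of $Y$ given $(\boldsymbol{X},\pi)$ through the linear form (4), and the nonparametric score space is treated as in Proposition 1. I also show that the one-step estimators are asymptotically linear with these influence functions. For the remainder, I would bound the second-order term by $\|\pi_{\widehat{{\cal P}}}-\pi_{{\cal P}}\|^{2}$, using that $\boldsymbol{r}$ is polynomial in $p$, and assume $o_p(n^{-1/4})$ rates.

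I expect the main obstacle to be the $\boldsymbol{\gamma}$ part. Obtaining the exact form of the EIF of the regression estimand $\boldsymbol{\gamma}$ when the regressor is itself a conditional mean requires care in the chain rule through $\pi_{{\cal P}}$ inside $\boldsymbol{r}$. It also requires checking that the resulting term combines with $\boldsymbol{\omega}_J$ exactly as $\boldsymbol{\Omega}^{-1}\boldsymbol{\Gamma}$ multiplied by $\widehat{\boldsymbol{\gamma}}$. I would lean on the Proposition 1 derivation for this step.
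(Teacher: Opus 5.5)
Your skeleton is the same as the paper's proof of Proposition 2. You apply the product rule to $\boldsymbol{\omega}_J({\cal P})^{\prime}\boldsymbol{\gamma}({\cal P})$, and the $\pi_{{\cal P}}$-channel of the weights gives $\partial_{\pi}\boldsymbol{r}_J\,(A-\pi_{{\cal P}})$. The $p_1$ and outer-expectation pieces have zero empirical mean, and the OLS normal equations remove the residual term. Your one-step estimator coincides with the paper's device of setting the empirical mean of the EIF to zero.

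The gap is in the step you yourself single out. You never derive $\psi_{\boldsymbol{\gamma}}$; you define it as ``whatever gives $(\boldsymbol{\Omega}+\boldsymbol{\Gamma})^{-1}\boldsymbol{\Upsilon}$'', which is circular. Carrying out the chain rule does not produce $\boldsymbol{\Gamma}$. Along a path with score $S$, $\dot{\pi}=E[(A-\pi_{{\cal P}})S\mid\boldsymbol{X},Z]$, and $\frac{d}{dt}E[\boldsymbol{r}\boldsymbol{r}^{\prime}]$ contains the \emph{symmetric} term $E[(\partial_{\pi}\boldsymbol{r}\,\boldsymbol{r}^{\prime}+\boldsymbol{r}\,\partial_{\pi}\boldsymbol{r}^{\prime})\dot{\pi}]$. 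Its first half cancels the $\pi$-part of $\frac{d}{dt}E[\boldsymbol{r}Y]$ because $E[Y\mid\boldsymbol{X},Z]=\boldsymbol{r}^{\prime}\boldsymbol{\gamma}$. This leaves
\[
\psi_{\boldsymbol{\gamma}}=\boldsymbol{\Omega}_{{\cal P}}^{-1}\boldsymbol{r}\big\{Y-\boldsymbol{r}^{\prime}\boldsymbol{\gamma}-(A-\pi_{{\cal P}})(\partial_{\pi}\boldsymbol{r})^{\prime}\boldsymbol{\gamma}\big\},
\]
that is, $\boldsymbol{r}$ times the MTE $\partial_{p}(\boldsymbol{r}^{\prime}\boldsymbol{\gamma})$, which is what your gloss on $\boldsymbol{R}$ gestures at. It is not $\partial_{\pi}\boldsymbol{r}\,(A-\pi_{{\cal P}})\,\boldsymbol{r}^{\prime}\boldsymbol{\gamma}$.

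A sanity check confirms this. If $m(\boldsymbol{x},p)$ does not depend on $p$, then $\boldsymbol{\gamma}$ is unchanged by any perturbation of $\pi$ alone, so the $\pi$-correction must vanish. The displayed correction does vanish; $\partial_{\pi}\boldsymbol{r}\,(A-\pi_{{\cal P}})\,\boldsymbol{r}^{\prime}\boldsymbol{\gamma}$ does not.

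At $\widehat{{\cal P}}$ and $\widehat{\boldsymbol{\gamma}}$, the empirical mean of $\psi_{\boldsymbol{\gamma}}$ is $-\boldsymbol{\Omega}_{\widehat{{\cal P}}}^{-1}\boldsymbol{\Gamma}_{\widehat{{\cal P}}}^{\prime}\widehat{\boldsymbol{\gamma}}$. So an honest derivation yields the Table 1 weights with $\boldsymbol{\Gamma}$ replaced by its transpose. The paper reaches $\boldsymbol{\Gamma}$ only because its Lemma 4 writes the derivative of $E[\boldsymbol{r}\boldsymbol{r}^{\prime}]$ as $2\,\partial_{\pi}\boldsymbol{r}\,(\tilde{a}-\pi_{{\cal P}})\,\boldsymbol{r}^{\prime}$. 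Reverse-engineering $\psi_{\boldsymbol{\gamma}}$ from Proposition 1's closed form inherits that step rather than checking it.

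Second, your efficiency verification is mis-justified. A conditional-mean restriction is precisely what makes the tangent space a proper subspace of $L_2^0$. Equations (1)--(3) restrict even more: they fix the form of $E[Y\mid A=a,\boldsymbol{X},Z]$ separately for $a=0,1$. An unweighted OLS-type influence function will therefore generally \emph{not} lie in that tangent space, and the check you describe would fail.

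What the paper's argument actually delivers (and yours, once repaired) is efficiency in a different sense. It treats $\boldsymbol{\gamma}({\cal P})=\boldsymbol{\Omega}_{{\cal P}}^{-1}\boldsymbol{\Upsilon}_{{\cal P}}$ and $\theta_J(\boldsymbol{\gamma}({\cal P}),{\cal P})$ as functionals on the nonparametric model, the ``regression estimand''. There the pathwise derivative is the unique, hence efficient, influence function. The model enters only through $E[Y\mid\boldsymbol{X},Z]=\boldsymbol{r}^{\prime}\boldsymbol{\gamma}$ and the representations (5)--(8). Use that argument in place of the tangent-space step. Your $o_p(n^{-1/4})$ remainder discussion is a reasonable addition that the paper omits.
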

Using efficient estimators of target estimands ensures low variance
in estimation, and they also have the robustness property discussed
above where the first order asymptotic distribution of the efficient
estimators are equivalent to the oracle case where the true propensity
score is known. The benefit of robustness to first stage nuisance
estimation has been extensively explored, especially in the literature
on ``doubly robust'' estimation of treatment effects \citep{Scharfstein1999},
where asymptotic normality holds even under slow convergence rates
from machine learning estimators of first stage nuisance functions
\citep{Chernozhukov2018a}.

For our applied focus, this robustness property is useful when estimating
MTEs using a genetic risk score as an instrument. The propensity score
can be precisely measured where the distribution of genetic risk scores
are heavily concentrated, but estimating the tails of the propensity
scores are based on many fewer observations (see Figure 4 below).
As a result, sampling uncertainty in propensity score estimation will
be much greater in the tails, providing a challenge to reliably estimate
the MTE curve over the full range of observed propensity scores. Importantly,
efficient estimators are less sensitive to this type of weak instrument
setting. 

\section{Simulation Study}

We performed a simulation study to illustrate how efficient methods
can provide a more robust analysis under varying levels of instrument
strength. We start by describing the simulation design, and then present
results showing how the estimation performance of MTE parameters and
target estimands varies with instrument strength under a nonparametrically
estimated propensity score. 

\subsection{Design}

We generated a binary covariate $X\sim\text{Binomial}(1,0.5)$, and
an instrument $Z\sim N(0,1)$ as exogenous variables. Let $(W_{0},W_{1})$
be bivariate normal random variables with variance $0.2$ and correlation
$0.2$, and let $V$ be a standard uniform random variable. We generated
the potential outcomes as 
\begin{eqnarray*}
Y_{0} & = & 0.3+0.1X-0.3(V-1/2)+\sqrt{0.2}W_{0}\\
Y_{1} & = & 0.5+0.2X+0.3(V-1/2)+\sqrt{0.2}W_{1}
\end{eqnarray*}
so that Equation (3) was satisfied with $\alpha_{0}=0.3$, $\alpha_{1}=0.5$,
$\beta_{0}=0.1$, $\beta_{1}=0.2$, $\zeta_{0}=-0.3$, and $\zeta_{1}=0.3$. 

The treatment indicator was generated as $A=1$ only if $\Phi(\boldsymbol{q}(X,Z)^{\prime}\boldsymbol{\eta}))>V$,
where $\Phi$ is the cumulative distribution function of a standard
normal random variable. We set $\boldsymbol{q}(X,Z)=(1,X,Z,XZ)^{\prime}$
and the true parameter values as $\boldsymbol{\eta}=(0,-0.2,\bar{\eta},-0.2\bar{\eta})^{\prime}$
for varying $\bar{\eta}>0$. The further $\bar{\eta}$ is away from
zero, the greater the instrument strength and the greater the variation
in the propensity score for a given covariate level. 

Given this simulation design, the correct specification of the observable
mean function $m(x,p)$ in Equation (4) was
\[
m(x,p)=0.3+0.1x-0.1p+0.1xp+0.3p^{2}.
\]
Differentiating $m(x,p)$ with respect to $p$ and evaluating at $p=v$
gives the true MTE function 
\[
\theta_{\text{MTE}}(x,v)=-0.1+0.1x+0.6v.
\]
From this, we see that the MTE curve is increasing in unobserved heterogeneity
$V$. If \textsl{lower} values of the outcome variable are favourable,
then this describes a setting of ``selection on gains'', i.e.\@
those who benefit most from treatment are those who choose to be treated.
By averaging the MTE function over covariate $X$ and unobserved heterogeneity
$V$, we find that the true ATE was 0.25, and the true values of other
target estimands varied with instrument strength $\bar{\eta}$. As
we increase instrument strength from $\bar{\eta}=0.2$ to $\bar{\eta}=1$,
the ATT ranged from 0.088 to 0.135, the ATU from 0.388 to 0.352, and
the ASG from -0.300 to -0.217. 

\subsection{Estimation of MTE Parameters}

The propensity score was estimated by a nonparametric kernel regression
using the \texttt{np} R package \citep{Hayfield2008}. One advantage
of efficient estimation is that it is designed to be insensitive as
the propensity score estimates $\pi_{\widehat{{\cal P}}}(\boldsymbol{X},Z)$
move slightly away from the truth $\pi_{{\cal P}}(\boldsymbol{X},Z)$.
What this means in practice is that the efficient estimator of MTE
parameters $\widetilde{\boldsymbol{\gamma}}$ should be more accurate
than the conventional estimator $\widehat{\boldsymbol{\gamma}}$ when
instruments are relatively weak, where some regions of the propensity
score are estimated with fewer observations. This advantage of the
efficient approach over the conventional approach is what we demonstrate
in this section.

For bandwidth selection, for each simulation run we computed bandwidth
constants using cross-validation on $3$ random samples of size $1,000$
from a full sample of $n=10,000$, and the bandwidths for the full
sample size were scaled down from the constant according to the optimal
rate; for further details, see the discussion of the \texttt{bwscaling}
argument in \citet{Hayfield2008}. The robustness property of the
efficient approach means that its estimates should be relatively insensitive
to the choice of bandwidth. Supplementary Figures S4 and S5 show that
efficient estimators of MTE and target parameters continue to have
low bias when the chosen bandwidth is away from the optimal level,
especially when oversmoothing. 
\noindent \begin{center}
\includegraphics[height=5cm]{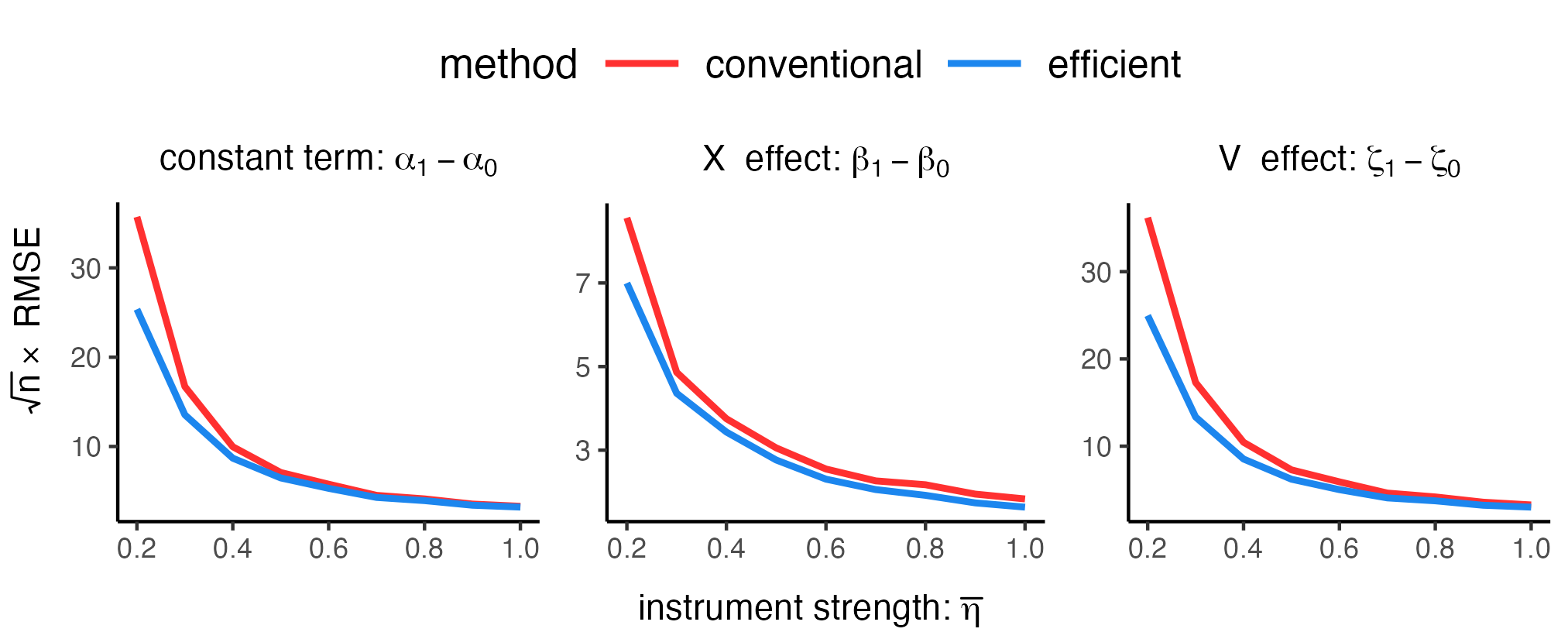}~\\
{\small{}Figure 1. Root mean squared error (RMSE) of MTE parameter
estimates over 1000 simulation experiments for each value of instrument
strength $\bar{\eta}$. }{\small\par}
\par\end{center}

Figure 1 shows that the efficient estimates of MTE parameters are
better than the conventional estimates in terms of achieving a lower
mean squared error (MSE). The improvement from using the efficient
approach is greater when instruments are weaker. As instrument strength
increases, the MSE from both methods converge toward zero as expected.
Since the efficient and conventional approach have the same asymptotic
variance, we might expect any differences in MSE to be driven by bias.
Supplementary Figure S1 shows that the bias is indeed lower for efficient
estimates when instruments are sufficiently strong $(\bar{\eta}\geq0.4)$,
and especially for the $V$ effect, the conventional estimates were
considerably more biased even in the strongest instrument case $(\bar{\eta}=1)$. 

\subsection{Estimation of Target Estimands}

Given that the target estimands ATE, ATT, ATU, and ASG are linear
functions of the MTE parameters $\boldsymbol{\gamma}$, we would expect
the estimation performance of MTE parameters is closely linked with
the estimation performance of our target estimands. However, the asymptotic
variances of the conventional and efficient approaches are now no
longer the same, and hence we now expect a greater advantage from
efficient estimation in terms of MSE performance. 

Figure 2 shows this is the case for all four estimands, ATT, ATT,
ATU, and ASG. For example, under very weak instruments ($\bar{\eta}=0.2$),
the average root-MSE (RMSE) of the efficient estimates was around
40\% lower than that of the conventional ones for all target estimands.
Supplementary Figure S2 shows the bias of efficient estimates hovered
closely around zero for all estimands when instruments were sufficiently
strong ($\bar{\eta}\geq0.6$). In certain weak instrument cases, the
bias of the conventional approach was lower than that of the efficient
approach, especially for the ATT estimand. Thus, given its superior
MSE performance in Figure 2, the variance of the efficient estimates
was considerably lower under weak instruments. 
\noindent \begin{center}
\includegraphics[height=5cm]{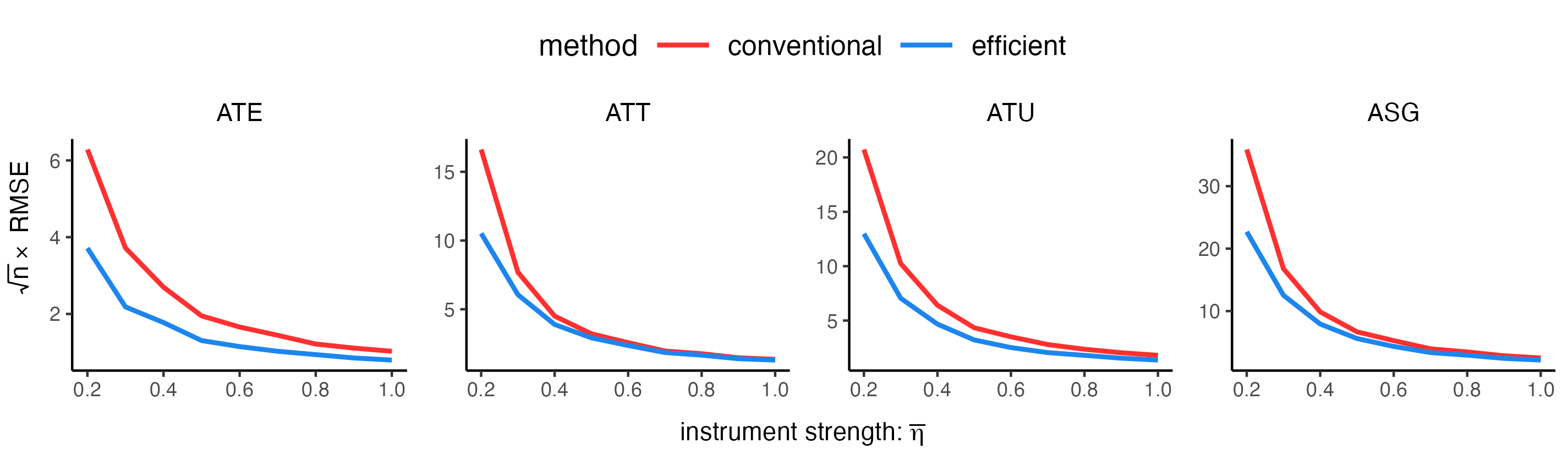}~\\
{\small{}Figure 2. Root mean squared error (RMSE) of target estimand
point estimates over 1000 simulation experiments for each value of
instrument strength $\bar{\eta}$.}{\small\par}
\par\end{center}

To compare estimation performance on the MTE estimand itself, Figure
3 plots the average RMSE of estimates of the MTE estimand averaged
over covariates $X$, i.e.\@ $E_{{\cal P}}[Y_{1}-Y_{0}\vert\,V]$. 

Figure 3 shows that the MTE estimates had a lower MSE than the conventional
estimates. Both estimators are more precise where most observations
are concentrated (where $V$ is around 0.5), and are less precise
when estimating the treatment effect for those with extreme values
of $V$. For the case of $\bar{\eta}=0.2$, observed propensity scores,
on average, ranged from 0.211 to 0.769, and under this greater sampling
uncertainty in the tails of propensity scores, the gap between the
MSE of the two approaches is greater. 
\noindent \begin{center}
\includegraphics[height=5cm]{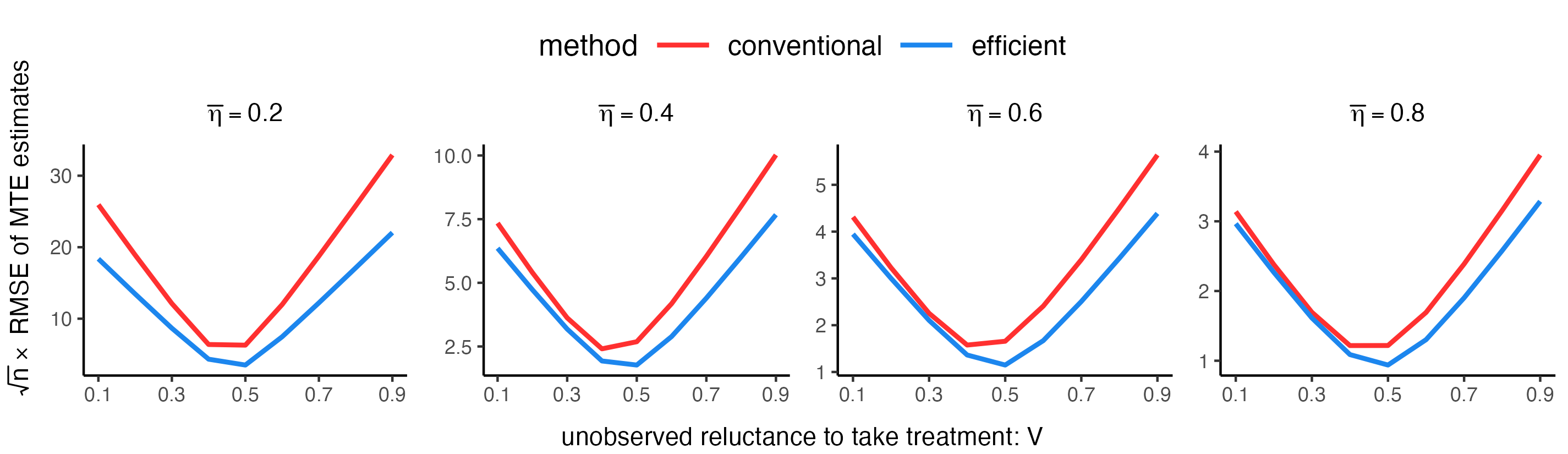}~\\
{\small{}Figure 3. Root mean squared error (RMSE) of MTE estimates
varying with $V$ and averaged over covariates, $E[Y_{1}-Y_{0}\vert V]$. }{\small\par}
\par\end{center}

The observed propensity scores had close to full support over the
$0$ to $1$ range when instruments were sufficiently strong ($\bar{\eta}\geq0.8$);
when $\bar{\eta}=0.8$, 1.6\% of of samples had a propensity score
less than 0.05 and 1.1\% of samples had a propensity score over 0.95.
In this setting, not only is the efficient estimator better in terms
of MSE, but it performs better in terms of bias (Supplementary Figure
S3). 

Overall, our simulation results highlight the benefits of efficient
estimation of MTE parameters and related target estimands. In particular,
efficient estimates have a lower MSE in finite samples compared with
the conventional plug-in propensity score approach, and in most cases
a lower bias. The efficient estimates are also more robust to weak
instrument settings, and to the choice of tuning parameters in nonparametric
propensity score estimation. 

\section{Empirical Application: Estimating Effects of Excessive\protect \\
Alcohol Consumption on Blood Pressure}

Estimating the causal effects of excessive alcohol consumption is
complicated by self-selection into drinking and heterogeneity in responses
to health advice. A generalized Roy model can capture the dynamic
that alcohol consumption behaviour may influence, and be influenced
by, worse health outcomes. The UK National Health Service advises
adults to drink no more than 14 units of alcohol per week (where 1
unit is 8g of pure alcohol). Therefore, we work with a binary treatment
indicator for ``excessive alcohol consumption'' defined as self-reported
alcohol consumption of at least 14 units of alcohol a week. Systolic
blood pressure measurements were used as the outcome variable. Here,
unobserved heterogeneity $V$ can be thought as describing heterogeneity
in ``health consciousness'' levels: within groups sharing similar
observable covariate values or characteristics, those with high values
of $V$ are less likely to excessively consume alcohol compared with
those with low values of $V$. 

We use a genetic risk score for alcohol consumption as a continuous
instrument for excessive alcohol consumption, and estimate MTE parameters
and other target estimands to study the potentially heterogeneous
effects of excessive alcohol consumption on systolic blood pressure.
In Supplementary Material, we present results under a more general
model for the MTE that allows the effects of unobserved heterogeneity
$V$ to differ for women and men, however we did not find evidence
for a sex difference in the MTE slope-in-$V$.

\subsection{Data}

\subsubsection{UK Biobank and Alcohol Consumption}

We considered data on 367,703 genotyped participants in the UK Biobank
of European ancestry. After removing samples that provided incomplete
data on alcohol consumption, systolic blood pressure, and antihypertensive
medication use, we were left with complete data on 294,563 individuals.
The treatment indicator of excessive alcohol consumption was defined
as a binary variable that equalled 1 when an individual's self-reported
alcohol consumption was more than 14 units of alcohol per week on
average, and zero otherwise. Under this definition, 53.5\% of the
294,563 participants were consuming alcohol excessively. 

\subsubsection{Instrument Selection}

The motivation of using genetic instruments is that genetic variants,
which are fixed at conception, may be less prone to unmeasured confounding
and reverse causation, and may provide a source of exogenous variation
in treatment choice \citep{Smith2003}. Our analysis used a genetic
risk score for alcohol consumption composed of 93 genetic variants,
as an instrumental variable for excessive alcohol consumption, as
previously used in \citet{Topiwala2022}. The 93 genetic variants
were each marginally associated with alcohol consumption (log drinks
per week) at a p-value less than the genome-wide significance threshold
$5\times10^{-8}$ in a large published genome-wide association study
\citep{Liu2019}. The individual variants used in the genetic risk
score were not strongly correlated (the correlation between any two
variants was $R^{2}\leq0.1$). The genetic risk score was robustly
associated with a continuous measure of alcohol consumption in UK
Biobank participants on average, with an F-statistic of 1862. 

\subsubsection{Outcome Selection}

We considered systolic blood pressure (SBP) as the outcome variable.
For individuals on antihypertensive medication, we added 10mmHg to
their SBP measurements since most antihypertensive drugs lower systolic
blood pressure by around that amount \citep{Dimmitt2019}. The average
SBP over the 294,563 UK Biobank participants considered for our analysis
was 139.98 mmHg. Approximately 45.5\% of the sample had SBP over 140
mmHg, a commonly used indicator for hypertension. There were sex differences
in SBP measurements; the average SBP estimate was 136.60 mmHg among
female participants, and 143.51 mmHg among male participants. We standardized
the outcome so that each unit change in the outcome represents a standard
deviation change in SBP, where 1 standard deviation of SBP equalled
19.759 mmHg. 

\subsubsection{Covariate Selection}

We considered sex as a single covariate $\boldsymbol{X}=X_{woman}$,
a binary indicator for female participants. We also considered age
and geographical location (a binary indicator if they lived north
of Stoke in the UK) as potential covariates in addition to sex, but
their impact on alcohol consumption behaviour was not estimated to
be relevant. Sex would seem to be a relevant covariate since women
and men metabolize alcohol differently due to biological differences
\citep{Graham1998,Briasoulis2012}. Around 51.1\% of participants
were women, and around 48.9\% were men. 

\subsection{Predictors of Excessive Alcohol Consumption}

Estimation of MTEs is more precise when there is large variation in
the propensity score caused by the genetic instrument and covariates.
Figure 4 plots how the propensity to drink excessively varies with
sex and the genetic risk score. 
\noindent \begin{center}
\includegraphics[width=17.5cm]{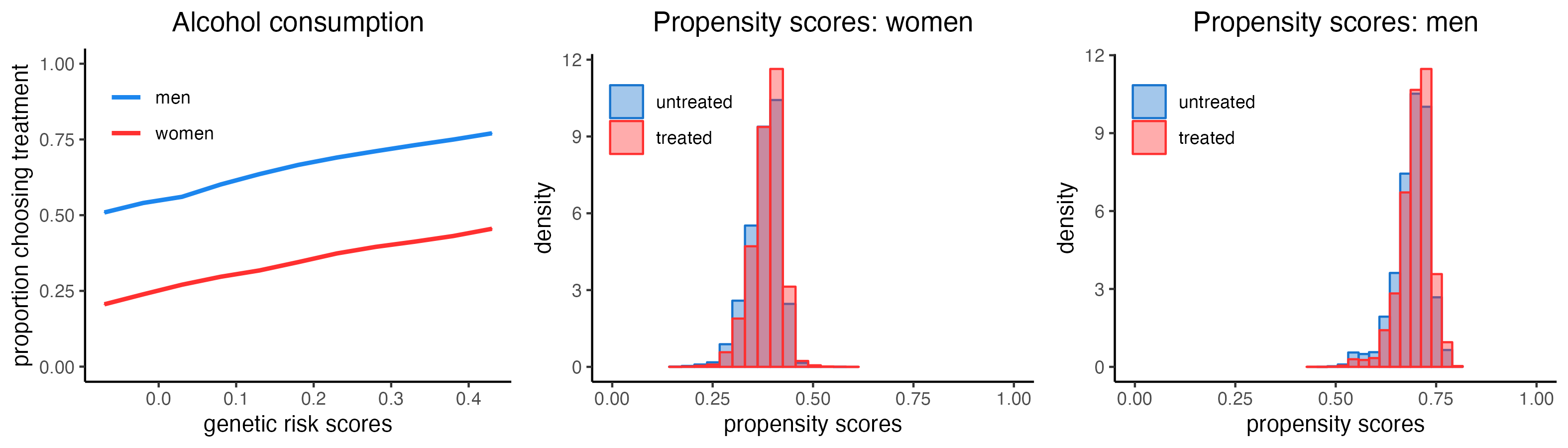}~\\
{\small{}Figure 4. }{\footnotesize{}Excessive alcohol consumption
(treatment) varying with the genetic risk score and sex (left), and
the common support of observed propensity scores for women (middle)
and for men (right). }{\footnotesize\par}
\par\end{center}

Figure 4 shows that for both men and women, a higher genetic risk
score is associated with a greater propensity for excessive alcohol
consumption, which is unsurprising given that the genetic risk score
was calculated for alcohol consumption. For a given genetic risk score,
women were considerably less likely to drink than men. We conclude
that sex and the weighted genetic risk score are conditionally relevant
predictors of excessive alcohol consumption. While there is an even
match in the distributions of treated and untreated groups, the variation
in the propensity score conditional on sex is quite limited. Therefore
estimating a more general MTE curve that allows for $V\times X$ interactions
could be challenging (see Supplementary Material for results under
this more general model).

\subsection{Results}

The propensity score was estimated using a nonparametric kernel regression
(of excessive alcohol consumption on sex and the genetic risk score)
using the \texttt{np} R package \citep{Hayfield2008}. For bandwidth
selection, we computed bandwidth constants using cross-validation
on $20$ random samples (without replacement) of size $10,000$ from
a full sample of $n=294,563$, and the bandwidths for the full sample
size were scaled down from the constant according to the optimal rate;
for further details, see the discussion of the \texttt{bwscaling}
argument in \citet{Hayfield2008}. 
\noindent \begin{center}
\begin{tabular}{c|c|c|c}
\hline 
{\small{}MTE Parameter} & {\small{}Constant term} & \multicolumn{1}{c|}{{\small{}Woman $(X)$}} & \multicolumn{1}{c}{{\small{}Health consciousness $(V)$}}\tabularnewline
\hline 
{\small{}Estimate} & {\small{}1.905} & {\small{}-0.815} & {\small{}-2.215}\tabularnewline
{\small{}Standard Error} & {\small{}0.728} & {\small{}0.340} & {\small{}1.080}\tabularnewline
{\small{}95\% CI-Lower} & {\small{}0.477} & {\small{}-1.482} & {\small{}-4.331}\tabularnewline
{\small{}95\% CI-Upper} & {\small{}3.332} & {\small{}-0.148} & {\small{}-0.098}\tabularnewline
\hline 
\end{tabular}\\
{\small{}~}\\
{\small{}Table 2. Efficient estimates of MTE parameters from the model
$\theta_{\text{MTE}}(X,V)\sim\text{constant term}+X_{\text{woman}}+V_{\text{health-consciousness}}$.}{\small\par}
\par\end{center}

From Table 2, the MTE of excessive alcohol consumption on systolic
blood pressure is estimated to be positive for most individuals, apart
from those with high values of health consciousness $V$. The MTE
curve is estimated to be decreasing in $V$, so that any adverse effect
of excessive alcohol consumption on systolic blood pressure is lower
for health conscious individuals. This suggests evidence of ``reverse
selection on gains'': individuals that are less health conscious
may experience worse health outcomes (greater changes in systolic
blood pressure) due to excessive alcohol consumption than more health
conscious individuals. 

We also estimate sex differences in the treatment effect, where the
adverse effect of excessive alcohol consumption on systolic blood
pressure is estimated to be worse for men than for women. One potential
explanation of this finding is that excessive alcohol drinking may
involve episodic drinking (i.e.\@ binge drinking) in a higher proportion
of men and those with low health consciousness, and such drinking
has a greater effect on blood pressure compared to steady levels of
high alcohol consumption.

Table S1 in Supplementary Material presents MTE parameter estimates
from a more general model allowing for interaction effects between
sex $(X)$ and unobserved health consciousness $(V)$. The effects
of sex and health consciousness are directionally the same as those
estimated in Table 2, although there is larger uncertainty in the
estimates. We estimate no interaction effect between $X$ and $V$,
so we are unable to conclude the slope of the MTE curve in terms of
health consciousness levels $V$ is different for female participants
compared to male participants. 
\noindent \begin{center}
\begin{tabular}{c|c|c|c|c}
\hline 
{\small{}Estimand} & {\small{}ATE} & {\small{}ATT} & \multicolumn{1}{c|}{{\small{}ATU}} & \multicolumn{1}{c}{{\small{}ASG}}\tabularnewline
\hline 
{\small{}Estimate} & {\small{}0.379} & {\small{}0.952} & {\small{}-0.282} & {\small{}1.235}\tabularnewline
{\small{}Standard Error} & {\small{}0.051} & {\small{}0.341} & {\small{}0.349} & {\small{}0.684}\tabularnewline
{\small{}95\% CI-Lower} & {\small{}0.279} & {\small{}0.283} & {\small{}-0.966} & {\small{}-0.105}\tabularnewline
{\small{}95\% CI-Upper} & {\small{}0.479} & {\small{}1.621} & {\small{}0.401} & {\small{}2.575}\tabularnewline
\hline 
\end{tabular}\\
{\small{}~}\\
{\small{}Table 3. Efficient estimates of target estimands from the
model $\theta_{\text{MTE}}(X,V)\sim\text{constant term}+X_{\text{woman}}+V_{\text{health-consciousness}}$.}{\small\par}
\par\end{center}

Table 3 presents estimates of target estimands. For the ATE, genetically-predicted
excessive alcohol consumption is associated with higher systolic blood
pressure measurements by 0.379 standard deviations (p-value < 0.001),
which is 7.489 mmHg. This is consistent with experimental evidence
that excessive alcohol consumption leads to higher blood pressure
levels \citep{Tasnim2020}. We also note that the observational association
between excessive alcohol consumption and systolic blood pressure
is $E_{\widehat{{\cal P}}}[Y\vert A=1]-E_{\widehat{{\cal P}}}[Y\vert A=0]=0.186$,
or 3.675 mmHg.

The ATT was estimated to be positive (p-value: 0.005), and while there
is greater uncertainty regarding the ATU estimate, results broadly
suggest a positive estimate of the ASG (p-value: 0.071), which is
consistent with the reverse selection on gains results inferred directly
from MTE parameters in Table 2. Table S2 in Supplementary Material
shows that in an MTE model allowing for $X\times V$ interaction effects,
the estimates of the target estimands ATE, ATT, ATU, and ASG were
similar to those estimated in Table 2. This supports the results of
Table S1 that suggested no evidence for sex differences in the MTE
slope-in-$V$. 
\noindent \begin{center}
\includegraphics[height=5.5cm]{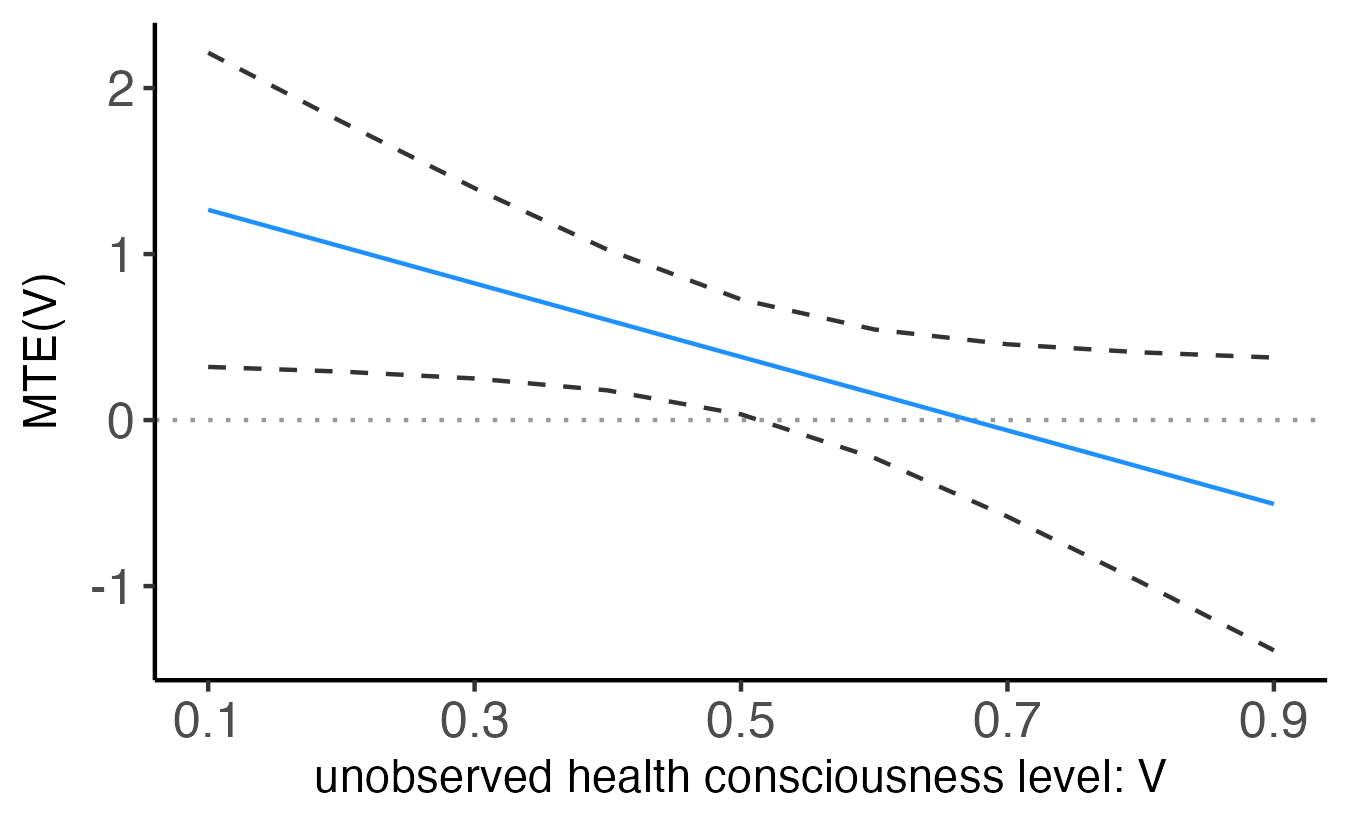}~\\
{\small{}Figure 5. Efficient MTE estimates varying with $V$ and averaged
over covariates, $E[Y_{1}-Y_{0}\vert V]$ (solid blue line). The dashed
black curves indicate 95\% asymptotic confidence intervals. }{\small\par}
\par\end{center}

Figure 5 illustrates that the harmful effects of excessive alcohol
consumption are estimated to be decreasing in health consciousness
levels. The 95\% asymptotic confidence intervals (dashed black lines)
show how the MTE is estimated with less precision at the tail ends
of $V$ which again reflects the lack of support in the propensity
score at those regions. Supplementary Figure S6 plots the MTE curve
under a more general model allowing for $X\times V$ interaction effects,
and this estimated curve is very close to that of Figure 5 which supports
our MTE specification with separable $X$ and $V$ effects. 

In summary, under a generalized Roy model that uses a genetic risk
score as an instrument for alcohol consumption, our results suggest
evidence of reverse selection on gains in alcohol consumption: less
health conscious individuals that are prone to consume alcohol excessively
are those that may experience worse health outcomes (in terms of higher
systolic blood pressure) as a result. Additionally, sex was estimated
to be a relevant covariate for predicting the harmful effect of excessive
alcohol consumption, with the harmful effect being worse for men than
for women. 

However, we urge caution in the interpretation of our estimates. For
our treatment effect estimates we compared the difference between
genetically defined subgroups that may reflect a more long-term, sustained
effect of high alcohol consumption on blood pressure \citep{Burgess2023}.
As such, these estimates are not easily comparable to those from a
short-term intervention of a clinical trial \citep{Tasnim2020}.

\section{Conclusion}

The MTE framework \citep{Heckman1999,Heckman2005} is able to model
the plausible scenario that the effect of a treatment may vary with
both observed and unobserved heterogeneity that influences selection
into treatment. It is a popular method to estimate causal effects
in observational studies, and has been applied to study a wide range
of topics including education and healthcare; see \citet{Shea2023}
for a detailed list of applications. 

As with any instrumental variable method, the credibility of an applied
analysis relies on the instrument being valid and relevant. In epidemiology,
the Mendelian randomization paradigm has popularized the use of genetic
variants as instrumental variables. However, the vast majority of
Mendelian randomization applications have estimated causal effects
in a \emph{homogenous} effects model where the treatment effect is
assumed to be the same for all individuals, due in part to easy-to-use
software that can be applied to readily available summary data from
genome-wide association studies. 

The contribution of our work was three-fold. First, we demonstrated
the use of genetic instruments to estimate \emph{heterogeneous} causal
effects in the MTE framework, which studies the relationship between
treatment choice and treatment effects. This introduces a credible
approach to study effect heterogeneity in Mendelian randomization
studies with a binary choice exposure. 

Second, genetic instruments will typically be weak in the sense that
they are unlikely to cause large shifts in the propensity scores of
treatment, which can lead to issues for estimating the MTE function.
Thus, to reduce sensitivity to the first stage uncertainty at the
tails of propensity score estimates, we derived the efficient influence
functions of target estimands in linear-in-parameter models of MTEs.
Our work has clarified that efficient estimates can be more robust
for estimation of target estimands, while being computationally straightforward.
We have also developed an R package (\href{http://github.com/ash-res/efficient.mte/}{github.com/ash-res/efficient.mte/})
for straightforward application of the methods we discussed. While
this work was motivated by the specific case of a genetic instrument,
the methods proposed in this paper are applicable to a wider class
of instrumental variable studies. 

Third, our empirical analysis suggests genetic evidence for heterogeneity
in the harmful effects of excessive alcohol consumption, such that
individuals that are prone to drink alcohol excessively may experience
greater changes in systolic blood pressure compared to individuals
that are less prone to drink excessively. Given that high alcohol
consumption is known to be significantly related with many adverse
health conditions globally \citep{Anderson2023}, this highlights
the importance of policies to target those more vulnerable to excessively
consume alcohol. 

\subsection*{Funding}

SB was supported by the Wellcome Trust (225790/Z/22/Z). This research
was funded by the United Kingdom Research and Innovation Medical Research
Council (MCUU- 00002/7).

{\small{}\bibliographystyle{chicago}
\bibliography{heterogeneity}
}{\small\par}

\newpage{}

\setcounter{page}{1}
\setcounter{equation}{0} 
\noindent \begin{center}
{\Huge{}Supplementary Material}{\Huge\par}
\par\end{center}

\noindent \begin{center}
{\large{}Ashish Patel, Francis J.\@ DiTraglia \& Stephen Burgess}{\large\par}
\par\end{center}

\subsection*{Notation and Preliminaries}

{\small{}For any vectors of observed variables $A$ and $B$, let
$var_{{\cal P}}(A)=E_{{\cal P}}[(A-E_{{\cal P}}(A))(A-E_{{\cal P}}(A))^{\prime}]$,
and let $cov_{{\cal P}}(A,B)=E_{{\cal P}}[(A-E_{{\cal P}}(A))(B-E_{{\cal P}}(B))^{\prime}]$.
We also note their sample averages as $var_{\widehat{{\cal P}}}(A)$
and $cov_{\widehat{{\cal P}}}(A,B)$. Then, given the linear model
$m(x,p)=r(x,p)^{\prime}\gamma$ for the observable mean $m(x,p)=E_{{\cal P}}[Y\vert X=x,\pi_{{\cal P}}(X,Z)=p]$,
we can define a target estimand $\gamma=\gamma({\cal P})$ as 
\[
\gamma({\cal P})=var_{{\cal P}}(r(X,\pi_{{\cal P}}(X,Z)))^{-1}cov_{{\cal P}}(r(X,\pi_{{\cal P}}(X,Z)),Y).
\]
To calculate the efficient influence function of $\gamma({\cal P})$,
we follow the }\textsl{\small{}point mass contamination}{\small{}
strategy recommended by \citet{Ichimura2015} and \citet{Hines2022}.
Let $\widetilde{{\cal P}}$ denotes a fixed distribution that has
a point mass at a single observation $\widetilde{o}=(\tilde{a},\tilde{x},\tilde{y},\tilde{z})$.
We wish to perturb the estimand $\gamma({\cal P})$ in the direction
$\widetilde{{\cal P}}$ by using a mixture model ${\cal P}_{t}=t\widetilde{{\cal P}}+(1-t){\cal P}$
indexed by $t\in[0,1]$. Then, the efficient influence function at
observation $\widetilde{o}$ is given by the Gateaux derivative 
\[
\phi(\widetilde{o},{\cal P})=\frac{\partial\gamma({\cal P}_{t})}{\partial t}\Big\vert_{t=0}.
\]
}{\small\par}

{\small{}The efficient influence function $\phi(\widetilde{o},{\cal P})$
characterises how sensitive the estimand $\gamma({\cal P})$ is to
changes in the data-generating distribution ${\cal P}$ at observation
$\widetilde{o}$, and it plays an important part in the first-order
asymptotic properties of an estimator of $\gamma({\cal P})$. The
term
\[
\frac{1}{\sqrt{n}}\sum_{i=1}^{n}\phi(O_{i},\widehat{{\cal P}})
\]
is called the }\textsl{\small{}drift term}{\small{} of a von Mises
expansion of the nonparametric estimator of $\gamma({\cal P})$. One
way to construct an efficient estimator of $\gamma({\cal P})$ is
to choose the estimator $\gamma(\widehat{{\cal P}})$ that sets this
drift term to zero. This is the approach we now take. }{\small\par}

{\small{}Setting the drift term to zero, we find the estimator $\widetilde{\gamma}\equiv\gamma(\widehat{{\cal P}})$
as the solution to the estimating equation
\begin{eqnarray*}
0 & = & -var_{\widehat{{\cal P}}}(r(X,\pi_{\widehat{{\cal P}}}(X,Z)))^{-1}cov_{\widehat{{\cal P}}}\Big(\frac{\partial r(X,\pi_{{\cal \widehat{P}}}(X,Z))}{\partial\pi}(A-\pi_{\widehat{{\cal P}}}(X,Z)),r(X,\pi_{\widehat{{\cal P}}}(X,Z))\Big)\gamma(\widehat{{\cal P}})\\
 &  & -\gamma(\widehat{{\cal P}})+var_{\widehat{{\cal P}}}(r(X,\pi_{\widehat{{\cal P}}}(X,Z)))^{-1}cov_{\widehat{{\cal P}}}(r(X,\pi_{\widehat{{\cal P}}}(X,Z)),Y).
\end{eqnarray*}
Note that this is a different estimator to the simple conventional
(OLS) estimator 
\[
\widehat{\gamma}=var_{\widehat{{\cal P}}}(r(\pi_{\widehat{{\cal P}}}(X,Z),X))^{-1}cov_{\widehat{{\cal P}}}(r(\pi_{\widehat{{\cal P}}}(X,Z),X),Y)
\]
because it adjusts for the first stage propensity score estimation. }{\small\par}

\subsection*{A More General MTE Model}

\subsubsection*{Nonlinear Model with Interaction Effects}

{\small{}We derive results under a more general model for marginal
treatment effects $\theta_{MTE}(X,V)$ that permits interactions between
covariates $X$ and unobserved heterogeneity $V$, and also nonlinear
effects of $V$ of polynomial order $S$. The model considered in
the main text holds under $S=1$ (linear effects) and by setting the
coefficients of the interactions between $X$ and $V$ to zero.}{\small\par}

{\small{}We assume a linear model for these potential outcomes in
terms of the observed covariates $X$, 
\[
Y_{a}=\alpha_{a}+\beta_{a}^{\prime}X+U_{a}
\]
for $a=0,1$, where $\alpha_{a}$ and $\beta_{a}$ are unknown parameters,
and $U_{a}$ are unobserved errors which, without loss of generality,
have mean zero conditional on $X$. }{\small\par}

{\small{}Following previous work \citep{Brinch2017}, we consider
the following assumption on the conditional means of unobserved confounders,
\begin{equation}
E[U_{a}\vert X=x,V=v]=\sum_{s=1}^{S}\zeta_{as}\Big(v^{s}-\frac{1}{s+1}\Big)+\xi_{as}^{\prime}x\Big(v^{s}-\frac{1}{s+1}\Big),\,\,\,\text{for}\,\,a=0,1,
\end{equation}
where $(\zeta_{01}\ldots\zeta_{0S},\zeta_{11}\ldots\zeta_{1S})$ and
$(\xi_{01}\ldots\xi_{0S},\xi_{11}\ldots\xi_{1S})$ are unknown parameters,
and the constant terms are restricted to satisfy $E_{{\cal P}}[U_{0}\vert X]=0$
and $E_{{\cal P}}[U_{1}\vert X]=0$. }{\small\par}

{\small{}Within Equation (1), a popular assumption sets $\xi_{01}=\ldots=\xi_{0S}=0$
and $\xi_{11}=\ldots=\xi_{1S}=0$; this is called }\textsl{\small{}additive
separability}{\small{} \citep{Carneiro2009,Brinch2017}, because it
will mean that the marginal treatment effect is separable in the observed
covariates $X$ and unobserved heterogeneity $V$. It is implied by
$(X,Z)$ being independent of $(U_{0},U_{1},V)$ which, as \citet{Mogstad2018}
describe, would ``nearly elevate $X$ to the status of an instrument,
albeit one that does not need to obey the usual exclusion restriction''.
In our application, we will take the observed covariates $X$ to be
sex, which may satisfy the additive separability assumption. }{\small\par}

{\small{}The advantage of imposing additive separability is that it
allows identification of marginal treatment effects using the collective
variation in $\pi_{{\cal P}}(X,Z)$ and $X$, rather than the variation
in $\pi_{{\cal P}}(X,Z)$ conditional on $X$. In contrast, the more
general model where all parameters may be non-zero allows }\textsl{\small{}interaction
effects}{\small{}, so that the marginal treatment effect may vary
in $V$ in a different way across covariates $X$. }{\small\par}

{\small{}Another important choice is the order of polynomial $S$.
Our analysis in the main text considers the $S=1$ case because observational
studies suggest a linear association between alcohol consumption and
systolic blood pressure \citep{DiFederico2023}, and the individual
parameters of the MTE model have an easier interpretation. As we let
$S$ increase, the underlying specifications for the conditional potential
outcome means $E[Y_{0}\vert X,V]$ and $E[Y_{1}\vert X,V]$ are akin
to a semiparametric model. In particular, higher order polynomials
are assumed to approximate the conditional means $E[U_{0}\vert X,V]$
and $E[U_{1}\vert X,V]$, with the usual caveat that unknown parameters
in more complex models may be difficult to estimate precisely. }{\small\par}

\subsubsection*{Estimands}

{\small{}Under this more general model, the conditional mean of the
observed outcome $m(x,p)=E_{{\cal P}}[Y\vert X=x,\pi_{{\cal P}}(X,Z)=p]$
in Equation (4) of the main text now takes the form $m(x,p)=r(x,p)^{\prime}\gamma$
where $r(x,p)$ is a known vector, and $\boldsymbol{\gamma}$ is an
unknown vector such that 
\[
r(x,p)=\begin{bmatrix}1\\
x\\
p\\
xp\\
p^{2}\\
\vdots\\
p^{S+1}\\
xp^{2}\\
\vdots\\
xp^{S+1}
\end{bmatrix},\,\,\text{and}\,\,\gamma=\begin{bmatrix}\alpha_{0}\\
\beta_{0}\\
\alpha_{1}-\alpha_{0}-\sum_{s=1}^{S}\frac{1}{s+1}(\zeta_{1s}-\zeta_{0s})\\
\text{\ensuremath{\beta_{1}-\beta_{0}-\sum_{s=1}^{S}\frac{1}{s+1}(\xi_{1s}-\xi_{0s})}}\\
\frac{1}{2}(\zeta_{11}-\zeta_{01})\\
\vdots\\
\frac{1}{S+1}(\zeta_{1S}-\zeta_{0S})\\
\frac{1}{2}(\xi_{11}-\xi_{01})\\
\vdots\\
\frac{1}{S+1}(\xi_{1S}-\xi_{0S})
\end{bmatrix}.
\]
}{\small\par}

{\small{}The MTE function is identified as the derivative $\theta_{\text{MTE}}(x,v)=\partial m(x,p)\big/\partial p\vert_{p=v}$,
and in the linear model ($S=1$) the MTE simplifies as $\theta_{\text{MTE}}(x,v)=(\partial r(x,v)\big/\partial v)^{\prime}\gamma$
where $\partial r(x,v)\big/\partial v=(0,0^{\prime},1,x^{\prime},2p,2px^{\prime})^{\prime}$.
This specification of the MTE means that the treatment effect can
vary linearly with covariates $x$ and unobserved heterogeneity $v$,
and the intercept of the linear-in-$v$ MTE slope can vary for subgroups
of covariates (in our empirical example, the MTE intercept may be
different for women than for men). }{\small\par}

{\small{}Inspecting the signs of individual parameters in $\gamma$
may be interesting in simple models, but is less insightful in more
complex models (with a higher order polynomial $S$). Instead, scalar
treatment effect estimands can be more attractive in terms of interpretability.
We state the form of the four target estimands considered in the main
text under this general model. }{\small\par}

{\small{}First, the }\textsl{\small{}average treatment effect}{\small{}
(ATE) effect is 
\begin{equation}
\theta_{\text{ATE}}(\gamma,{\cal P})=E_{{\cal P}}[Y_{1}-Y_{0}]=E_{{\cal P}}[r_{\text{ATE}}(X)]^{\prime}\gamma
\end{equation}
where $r_{\text{ATE}}(X)=(0,0^{\prime},1,X^{\prime},1_{S\times1}^{\prime},\underset{S\text{ times}}{\underbrace{X^{\prime},X^{\prime},\ldots,X^{\prime}}})^{\prime}$.}{\small\par}

{\small{}The }\textsl{\small{}average treatment effect on the treated}{\small{}
(ATT) is 
\begin{equation}
\theta_{\text{ATT}}(\gamma,{\cal P})=E_{{\cal P}}[Y_{1}-Y_{0}\vert A=1]={\cal P}(A=1)^{-1}E_{{\cal P}}[r_{\text{ATT}}(X,Z,{\cal P})]^{\prime}\gamma
\end{equation}
where 
\begin{eqnarray*}
r_{\text{ATT}}(X,Z,{\cal P}) & = & (0,0^{\prime},\pi_{{\cal P}}(X,Z),\pi_{{\cal P}}(X,Z)X^{\prime},\pi_{{\cal P}}(X,Z)^{2},\ldots,\pi_{{\cal P}}(X,Z)^{S+1},\\
 &  & \pi_{{\cal P}}(X,Z)^{2}X^{\prime},\ldots,\pi_{{\cal P}}(X,Z)^{S+1}X^{\prime})^{\prime}.
\end{eqnarray*}
}{\small\par}

{\small{}The }\textsl{\small{}average treatment effect on the untreated}{\small{}
(ATU) is 
\begin{equation}
\theta_{ATU}(\gamma,{\cal P})=E_{{\cal P}}[Y_{1}-Y_{0}\vert A=0]={\cal P}(A=0)^{-1}E_{{\cal P}}[r_{\text{ATU}}(X,Z,{\cal P})]^{\prime}\gamma
\end{equation}
where 
\begin{eqnarray*}
r_{ATU}(X,Z,{\cal P}) & = & (0,0^{\prime},1-\pi_{{\cal P}}(X,Z),(1-\pi_{{\cal P}}(X,Z))X^{\prime},1-\pi_{{\cal P}}(X,Z)^{2},\ldots,1-\pi_{{\cal P}}(X,Z)^{S+1},\\
 &  & (1-\pi_{{\cal P}}(X,Z)^{2})X^{\prime},\ldots,(1-\pi_{{\cal P}}(X,Z)^{S+1})X^{\prime})^{\prime}.
\end{eqnarray*}
}{\small\par}

{\small{}The }\textsl{\small{}average selection on gains}{\small{}
(ASG) is , 
\begin{eqnarray}
\theta_{\text{ASG}}(\gamma,{\cal P}) & = & E_{{\cal P}}[Y_{1}-Y_{0}\vert A=1]-E_{{\cal P}}[Y_{1}-Y_{0}\vert A=0]\nonumber \\
 & = & (E_{{\cal P}}[r_{\text{ATT}}(X,Z,{\cal P})]{\cal P}(A=1)^{-1}-E_{{\cal P}}[r_{\text{ATU}}(X,Z,{\cal P})]{\cal P}(A=0)^{-1})^{\prime}\gamma.
\end{eqnarray}
}{\small\par}

{\small{}\citet{Heckman2005} discussed how some observed estimands
can also be expressed in terms of the vector of MTE parameters $\gamma$.
For example, the IV estimand can be written as 
\begin{equation}
\theta_{\text{IV}}(\gamma,{\cal P})=\text{cov}_{{\cal P}}(A,Z)^{-1}\text{cov}_{{\cal P}}(Y,Z)=\text{cov}_{{\cal P}}(A,Z)^{-1}E_{{\cal P}}[r_{\text{IV}}(X,Z,{\cal P})]^{\prime}\gamma
\end{equation}
where $r_{\text{IV}}(X,Z,{\cal P})=r(X,\pi_{{\cal P}}(X,Z))(Z-E_{{\cal P}}[Z])$.}\\
{\small{}~}{\small\par}

\subsection*{Useful Lemmas}

{\small{}For the following results, $f(o)$ denotes the probability
density function of $O$, and define $f_{t}(o)=t{\cal I}_{\widetilde{o}}(o)+(1-t)f(o)$,
where ${\cal I}_{\widetilde{o}}(o)$ denotes the Dirac delta function
with respect to $\widetilde{o}$; i.e. a function that equals zero
except at $\widetilde{o}$, and that integrates to $1$. }{\small\par}
\begin{lem}
{\small{}Under the mixture model ${\cal P}_{t}=t\widetilde{{\cal P}}+(1-t){\cal P}$
where $\widetilde{{\cal P}}$ denotes a fixed distribution that has
a point mass at a single observation $\widetilde{o}=(\tilde{a},\tilde{x},\tilde{z})$,
\[
\frac{\partial}{\partial t}\pi_{{\cal P}_{t}}(x,z)\Big\vert_{t=0}=\frac{1}{f(x,z)}\big({\cal I}_{(\widetilde{a},\widetilde{x},\widetilde{z})}(1,x,z)-{\cal I}_{(\widetilde{x},\widetilde{z})}(x,z)\pi_{{\cal P}}(x,z)\big).
\]
}{\small\par}
\end{lem}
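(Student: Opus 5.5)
The plan is to write the propensity score as a ratio of functionals of the mixture density and differentiate with the quotient rule. Since $A$ is binary, under ${\cal P}_t$ we have
\[
\pi_{{\cal P}_{t}}(x,z)={\cal P}_{t}(A=1\vert X=x,Z=z)=\frac{f_{t}(1,x,z)}{f_{t}(x,z)},
\]
where $f_t(a,x,z)$ is the joint density of $(A,X,Z)$ under ${\cal P}_t$ and $f_t(x,z)=\sum_{a}f_t(a,x,z)$ is the marginal density of $(X,Z)$. Both numerator and denominator are obtained by integrating or summing the mixture $f_t(o)=t{\cal I}_{\widetilde{o}}(o)+(1-t)f(o)$ over the unused coordinates of $o$.

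First I will compute the two component densities explicitly. Marginalising the point mass at $\widetilde{o}$ over the unused coordinates (outcome, and also $a$ for the denominator) leaves a point mass in the remaining coordinates. This gives
\[
f_{t}(1,x,z)=t{\cal I}_{(\widetilde{a},\widetilde{x},\widetilde{z})}(1,x,z)+(1-t)f(1,x,z)
\]
and
\[
f_{t}(x,z)=t{\cal I}_{(\widetilde{x},\widetilde{z})}(x,z)+(1-t)f(x,z).
\]
Both are affine in $t$, so their $t$-derivatives are ${\cal I}_{(\widetilde{a},\widetilde{x},\widetilde{z})}(1,x,z)-f(1,x,z)$ and ${\cal I}_{(\widetilde{x},\widetilde{z})}(x,z)-f(x,z)$ respectively.

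Next I apply the quotient rule at $t=0$:
\[
\frac{\partial}{\partial t}\pi_{{\cal P}_t}\Big\vert_{t=0}=\frac{{\cal I}_{(\widetilde{a},\widetilde{x},\widetilde{z})}(1,x,z)-f(1,x,z)}{f(x,z)}-\frac{f(1,x,z)\big({\cal I}_{(\widetilde{x},\widetilde{z})}(x,z)-f(x,z)\big)}{f(x,z)^{2}}.
\]
Substituting $f(1,x,z)/f(x,z)=\pi_{{\cal P}}(x,z)$, the terms $-\pi_{{\cal P}}$ and $+\pi_{{\cal P}}$ cancel. What remains is
\[
\frac{1}{f(x,z)}\Big({\cal I}_{(\widetilde{a},\widetilde{x},\widetilde{z})}(1,x,z)-{\cal I}_{(\widetilde{x},\widetilde{z})}(x,z)\,\pi_{{\cal P}}(x,z)\Big),
\]
which is the claimed expression.

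The only delicate point is the formal handling of Dirac deltas: marginalising the point mass, and differentiating a ratio that involves generalized functions. I would follow the heuristic convention of \citet{Ichimura2015} and \citet{Hines2022}, treating ${\cal I}$ as a limit of smooth kernels, for which every step above is an ordinary calculus identity. I would also assume $f(x,z)>0$ so that the ratio is well defined at $t=0$.
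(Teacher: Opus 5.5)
Your proposal is correct and follows essentially the same route as the paper: you write $\pi_{{\cal P}_t}(x,z)=f_t(1,x,z)/f_t(x,z)$, differentiate the affine-in-$t$ numerator and denominator, apply the quotient rule at $t=0$, and cancel the $\pm\pi_{{\cal P}}(x,z)$ terms. Your remarks on the heuristic treatment of the Dirac deltas and on needing $f(x,z)>0$ are sensible additions that the paper leaves implicit.
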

\begin{proof}
{\small{}For $\pi_{{\cal P}}(x,z)=P(A=1\vert X,Z)=f(1\vert x,z)=f(1,x,z)\big/f(x,z)$,
we can write 
\[
\pi_{{\cal P}_{t}}(x,z)=\frac{f_{t}(1,x,z)}{f_{t}(x,z)}.
\]
Since $f_{t}(1,x,z)=t{\cal I}_{(\widetilde{a},\widetilde{x},\widetilde{z})}(1,x,z)+(1-t)f(1,x,z)$
and $f_{t}(x,z)=t{\cal I}_{(\widetilde{x},\widetilde{z})}(x,z)+(1-t)f(x,z)$,
we have $\frac{\partial}{\partial t}f_{t}(1,x,z)={\cal I}_{(\widetilde{a},\widetilde{x},\widetilde{z})}(1,x,z)-f(1,x,z)$
and $\frac{\partial}{\partial t}f_{t}(x,z)={\cal I}_{(\widetilde{x},\widetilde{z})}(x,z)-f(x,z)$.
Therefore, 
\begin{eqnarray*}
\frac{\partial}{\partial t}\pi_{{\cal P}_{t}}(x,z)\Big\vert_{t=0} & = & \frac{1}{f(x,z)}\cdot\frac{\partial}{\partial t}f_{t}(1,x,z)-\frac{f(1,x,z)}{f(x,z)^{2}}\cdot\frac{\partial}{\partial t}f_{t}(x,z)\\
 & = & \frac{1}{f(x,z)}{\cal I}_{(\widetilde{a},\widetilde{x},\widetilde{z})}(1,x,z)-f(1\vert x,z)-\frac{f(1,x,z)}{f(x,z)^{2}}{\cal I}_{(\tilde{x},\tilde{z})}(x,z)+f(1\vert x,z)\\
 & = & \frac{1}{f(x,z)}\big({\cal I}_{(\tilde{a},\widetilde{x},\widetilde{z})}(1,x,z)-{\cal I}_{(\widetilde{x},\widetilde{z})}(x,z)f(1\vert x,z)\big)
\end{eqnarray*}
as required. }{\small\par}
\end{proof}
{\small{}~}{\small\par}
\begin{lem}
{\small{}Under the mixture model ${\cal P}_{t}=t\widetilde{{\cal P}}+(1-t){\cal P}$
where $\widetilde{{\cal P}}$ denotes a fixed distribution that has
a point mass at a single observation $\widetilde{o}=(\tilde{a},\tilde{x},\tilde{z})$,
\[
\frac{\partial}{\partial t}r(x,\pi_{{\cal P}_{t}}(x,z))\Big\vert_{t=0}=\frac{\partial}{\partial\pi}r(x,\pi_{{\cal P}}(x,z))\cdot\frac{1}{f(x,z)}\big({\cal I}_{(\tilde{a},\widetilde{x},\widetilde{z})}(1,x,z)-{\cal I}_{(\widetilde{x},\widetilde{z})}(x,z)\pi_{{\cal P}}(x,z)\big).
\]
}{\small\par}
\end{lem}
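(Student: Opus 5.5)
This is a chain rule applied to the composite map $t\mapsto r(x,\pi_{{\cal P}_t}(x,z))$, combined with the preceding lemma. The components of $r(x,p)$ are known polynomials in $p$ (with coefficients depending only on $x$): $1$, $x$, $p$, $xp$, $p^2,\ldots,p^{S+1}$, $xp^2,\ldots,xp^{S+1}$. So $p\mapsto r(x,p)$ is continuously differentiable for every fixed $x$, with derivative $\frac{\partial}{\partial\pi}r(x,\pi)$ computed componentwise: for example $(0,0',1,x',2\pi,\ldots,(S+1)\pi^{S},2\pi x',\ldots,(S+1)\pi^{S}x')'$.

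The key structural observation is that $r$ depends on the distribution ${\cal P}_t$ only through the propensity score evaluated at the fixed point $(x,z)$. The first argument $x$ is a fixed evaluation point and does not vary with $t$. Hence the composite map is $t\mapsto g(h(t))$, with $g(\pi)=r(x,\pi)$ and $h(t)=\pi_{{\cal P}_t}(x,z)$.

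The steps are as follows.
\begin{enumerate}
\item Note that $h$ is differentiable at $t=0$. Its derivative is given explicitly by the previous lemma as $\frac{1}{f(x,z)}\big({\cal I}_{(\tilde a,\tilde x,\tilde z)}(1,x,z)-{\cal I}_{(\tilde x,\tilde z)}(x,z)\pi_{{\cal P}}(x,z)\big)$, understood in the same formal (distributional) sense used there.
\item Use $h(0)=\pi_{{\cal P}}(x,z)$ and apply the ordinary chain rule componentwise to the vector $r$:
\[
\frac{\partial}{\partial t}g(h(t))\Big\vert_{t=0}=g'(h(0))\,h'(0)=\frac{\partial}{\partial\pi}r(x,\pi_{{\cal P}}(x,z))\cdot h'(0).
\]
\item Substitute the expression for $h'(0)$ from the previous lemma, which gives the stated formula.
\end{enumerate}

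The only step needing care is the rigour of differentiating through the Dirac-delta perturbation. Since the previous lemma is taken as given at the same formal level, the chain rule step is routine. It requires only that $g$ is differentiable at $\pi_{{\cal P}}(x,z)$, which holds because $g$ is polynomial, and that $f(x,z)>0$, which is implicit in the previous lemma. No other obstacle is expected.
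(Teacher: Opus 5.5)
Your proposal is correct and matches the paper's proof, which simply invokes the chain rule applied to $t\mapsto r(x,\pi_{{\cal P}_{t}}(x,z))$ together with Lemma 1 for the derivative of $\pi_{{\cal P}_{t}}(x,z)$ at $t=0$. Your extra remarks are consistent with the paper's formal level of rigour and go slightly beyond it without changing the argument: $r$ is polynomial in $\pi$, hence differentiable, and $f(x,z)>0$ is needed.
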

\begin{proof}
{\small{}Follows immediately from the chain rule and Lemma 1.}{\small\par}
\end{proof}
{\small{}~}{\small\par}
\begin{lem}
{\small{}Under the mixture model ${\cal P}_{t}=t\widetilde{{\cal P}}+(1-t){\cal P}$
where $\widetilde{{\cal P}}$ denotes a fixed distribution that has
a point mass at a single observation $\widetilde{o}=(\tilde{a},\tilde{x},\widetilde{y},\tilde{z})$,
\begin{eqnarray*}
\frac{\partial}{\partial t}E_{{\cal P}_{t}}[r(x,\pi_{{\cal P}_{t}}(x,z))y]\Big\vert_{t=0} & = & \frac{\partial}{\partial\pi}r(\widetilde{x},\pi_{{\cal P}}(\widetilde{x},\widetilde{z}))({\cal I}_{\tilde{a}}(1)-\pi_{{\cal P}}(\widetilde{x},\widetilde{z}))E_{{\cal P}}[y\vert\widetilde{x},\widetilde{z}]+r(\widetilde{x},\pi_{{\cal P}}(\widetilde{x},\widetilde{z}))\widetilde{y}\\
 &  & -E_{{\cal P}}[r(x,\pi_{{\cal P}}(x,z))y].
\end{eqnarray*}
}{\small\par}
\end{lem}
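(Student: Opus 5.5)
We write $E_{{\cal P}_t}[r(x,\pi_{{\cal P}_t}(x,z))y]=\int r(x,\pi_{{\cal P}_t}(x,z))\,y\,f_t(o)\,do$ with $f_t=t{\cal I}_{\widetilde o}+(1-t)f$. We then differentiate under the integral by the product rule, which splits the derivative at $t=0$ into two pieces. The first piece holds the propensity score fixed and perturbs only the measure. The second holds the measure at ${\cal P}$ and perturbs only $\pi$ inside $r$.

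\emph{Measure piece.} Here we have $\partial_t f_t={\cal I}_{\widetilde o}-f$. Integrating $r(x,\pi_{{\cal P}}(x,z))y$ against this gives $r(\widetilde x,\pi_{{\cal P}}(\widetilde x,\widetilde z))\widetilde y-E_{{\cal P}}[r(X,\pi_{{\cal P}}(X,Z))Y]$. These are exactly the last two terms of the claim.

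\emph{Propensity-score piece.} We compute $\int \frac{\partial}{\partial t}r(x,\pi_{{\cal P}_t}(x,z))\big|_{t=0}\,y\,f(a,x,y,z)\,do$ and substitute Lemma 2 for the inner derivative. The derivative depends only on $(x,z)$, so we first integrate out $y$ and $a$ conditional on $(x,z)$. This produces $\int \frac{\partial}{\partial\pi}r(x,\pi_{{\cal P}}(x,z))\,E_{{\cal P}}[Y\vert x,z]\,\frac{1}{f(x,z)}\big({\cal I}_{(\tilde a,\tilde x,\tilde z)}(1,x,z)-{\cal I}_{(\tilde x,\tilde z)}(x,z)\pi_{{\cal P}}(x,z)\big)f(x,z)\,dx\,dz$. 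The factor $f(x,z)$ cancels, and the Dirac deltas then evaluate the integrand at $(\widetilde x,\widetilde z)$. The first delta is nonzero only if $\widetilde a=1$, so it contributes ${\cal I}_{\tilde a}(1)$. The second contributes $-\pi_{{\cal P}}(\widetilde x,\widetilde z)$. Together they give $\frac{\partial}{\partial\pi}r(\widetilde x,\pi_{{\cal P}}(\widetilde x,\widetilde z))({\cal I}_{\tilde a}(1)-\pi_{{\cal P}}(\widetilde x,\widetilde z))E_{{\cal P}}[y\vert\widetilde x,\widetilde z]$, which is the first term of the claim.

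\emph{Main obstacle.} The hard step is justifying these formal manipulations. We need to interchange $\partial_t$ and the integral, and to handle the point-mass "density" ${\cal I}_{\widetilde o}$ on the product of discrete $a$ and continuous $(x,z)$ components. We treat this heuristically, in the spirit of the point-mass contamination strategy of \citet{Ichimura2015} and \citet{Hines2022}. If needed, one could smooth ${\cal I}_{\widetilde o}$ by a kernel with bandwidth $h\to0$, use dominated convergence under boundedness of $r$, $\partial r/\partial\pi$ and $E[Y^2]$, and take the limit. Along the way we must check that the cross term $O(t)$ coming from $\partial_t\pi_{{\cal P}_t}$ multiplied by $(f_t-f)$ vanishes at $t=0$.
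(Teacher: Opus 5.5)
Your proposal is correct and is essentially the paper's own argument: the product rule splits the derivative into a measure piece, giving $r(\widetilde{x},\pi_{{\cal P}}(\widetilde{x},\widetilde{z}))\widetilde{y}-E_{{\cal P}}[r(x,\pi_{{\cal P}}(x,z))y]$, and a propensity-score piece, where Lemma 2, the cancellation $f(x,y,z)/f(x,z)=f(y\vert x,z)$, and evaluating the Dirac deltas at $(\widetilde{x},\widetilde{z})$ give the $E_{{\cal P}}[y\vert\widetilde{x},\widetilde{z}]$ term. Your remarks on rigor go beyond the paper, which also does these manipulations formally; and the cross term needs no separate check, since at $t=0$ the product rule already pairs $\partial_t\pi_{{\cal P}_t}$ with $f_0=f$.
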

\begin{proof}
{\small{}Note that 
\begin{eqnarray*}
\frac{\partial}{\partial t}E_{{\cal P}_{t}}[r(x,\pi_{{\cal P}_{t}}(x,z))y]\Big\vert_{t=0} & = & \frac{\partial}{\partial t}\int r(x,\pi_{{\cal P}_{t}}(x,z))y\cdot f_{t}(x,y,z)\Big\vert_{t=0}\,dxdydz\\
 & = & \int\frac{\partial}{\partial t}r(x,\pi_{{\cal P}_{t}}(x,z))\Big\vert_{t=0}y\cdot f(x,y,z)\,dxdydz\\
 &  & +\int r(x,\pi_{{\cal P}}(x,z))y\cdot\frac{\partial}{\partial t}f_{t}(x,y,z)\Big\vert_{t=0}\,dxdydz\\
 & = & \int\frac{\partial}{\partial\pi}r(x,\pi_{{\cal P}}(x,z))\cdot\frac{1}{f(x,z)}\big({\cal I}_{(\tilde{a},\widetilde{x},\widetilde{z})}(1,x,z)-{\cal I}_{(\widetilde{x},\widetilde{z})}(x,z)\pi_{{\cal P}}(x,z)\big)y\\
 &  & \times f(x,y,z)\,dxdydz\\
 &  & +\int r(x,\pi_{{\cal P}}(x,z))y\cdot{\cal I}_{(\widetilde{x},\widetilde{y},\widetilde{z})}(x,y,z)\,dxdydz-E_{{\cal P}}[r(x,\pi_{{\cal P}}(x,z))y]\\
 & = & \int\frac{\partial}{\partial\pi}r(\widetilde{x},\pi_{{\cal P}}(\widetilde{x},\widetilde{z}))({\cal I}_{\tilde{a}}(1)-\pi_{{\cal P}}(\widetilde{x},\widetilde{z}))y\cdot f(y\vert\widetilde{x},\widetilde{z})\,dy\\
 &  & +r(\widetilde{x},\pi_{{\cal P}}(\widetilde{x},\widetilde{z}))\widetilde{y}-E_{{\cal P}}[r(x,\pi_{{\cal P}}(x,z))y]\\
 & = & \frac{\partial}{\partial\pi}r(\widetilde{x},\pi_{{\cal P}}(\widetilde{x},\widetilde{z}))({\cal I}_{\tilde{a}}(1)-\pi_{{\cal P}}(\widetilde{x},\widetilde{z}))E_{{\cal P}}[y\vert\widetilde{x},\widetilde{z}]+r(\widetilde{x},\pi_{{\cal P}}(\widetilde{x},\widetilde{z}))\widetilde{y}\\
 &  & -E_{{\cal P}}[r(x,\pi_{{\cal P}}(x,z))y]
\end{eqnarray*}
}{\small\par}

{\small{}since ${\cal I}_{\tilde{a}}(1)$, $\pi_{{\cal P}}(\tilde{x},\tilde{z})$,
and $\frac{\partial}{\partial\pi}r(\tilde{x},\pi_{{\cal P}}(\tilde{x},\tilde{z}))$
are constant, and where the third equality follows by Lemma 2.}{\small\par}
\end{proof}
{\small{}~}{\small\par}
\begin{lem}
{\small{}Under the mixture model ${\cal P}_{t}=t\widetilde{{\cal P}}+(1-t){\cal P}$
where $\widetilde{{\cal P}}$ denotes a fixed distribution that has
a point mass at a single observation $\widetilde{o}=(\tilde{a},\tilde{x},\tilde{z})$,
\begin{eqnarray*}
\frac{\partial}{\partial t}E_{{\cal P}_{t}}[r(x,\pi_{{\cal P}_{t}}(x,z))r(x,\pi_{{\cal P}_{t}}(x,z))^{\prime}]\Big\vert_{t=0} & = & 2\frac{\partial}{\partial\pi}r(\widetilde{x},\pi_{{\cal P}}(\widetilde{x},\widetilde{z}))({\cal I}_{\tilde{a}}(1)-\pi_{{\cal P}}(\widetilde{x},\widetilde{z}))r(\widetilde{x},\pi_{{\cal P}}(\widetilde{x},\widetilde{z}))^{\prime}\\
 &  & +r(\widetilde{x},\pi_{{\cal P}}(\widetilde{x},\widetilde{z}))r(\widetilde{x},\pi_{{\cal P}}(\widetilde{x},\widetilde{z}))^{\prime}\\
 &  & -E_{{\cal P}}[r(x,\pi_{{\cal P}}(x,z))r(x,\pi_{{\cal P}}(x,z))^{\prime}].
\end{eqnarray*}
}{\small\par}
\end{lem}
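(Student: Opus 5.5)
The plan follows the proof of Lemma 3 step by step, with the scalar $y$ replaced by the second copy of $r(x,\pi_{{\cal P}_{t}}(x,z))^{\prime}$. I would write
\[
E_{{\cal P}_{t}}[r(x,\pi_{{\cal P}_{t}}(x,z))r(x,\pi_{{\cal P}_{t}}(x,z))^{\prime}]=\int r(x,\pi_{{\cal P}_{t}}(x,z))r(x,\pi_{{\cal P}_{t}}(x,z))^{\prime}f_{t}(x,z)\,dxdz
\]
and differentiate under the integral at $t=0$. The product rule then gives three pieces: one from the derivative of the left factor $r$, one from the derivative of the right factor $r^{\prime}$, and one from $\partial_{t}f_{t}$.

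The density piece is handled exactly as in Lemma 3. Since $\partial_{t}f_{t}(x,z)={\cal I}_{(\tilde{x},\tilde{z})}(x,z)-f(x,z)$, the Dirac delta evaluates the integrand at $(\tilde{x},\tilde{z})$. This produces
\[
r(\tilde{x},\pi_{{\cal P}}(\tilde{x},\tilde{z}))r(\tilde{x},\pi_{{\cal P}}(\tilde{x},\tilde{z}))^{\prime}-E_{{\cal P}}[r(x,\pi_{{\cal P}}(x,z))r(x,\pi_{{\cal P}}(x,z))^{\prime}],
\]
which is the last two lines of the statement.

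For each of the two cross pieces I would substitute Lemma 2 for $\partial_{t}r(x,\pi_{{\cal P}_{t}}(x,z))\vert_{t=0}$. The factor $1/f(x,z)$ cancels against $f(x,z)$. The term ${\cal I}_{(\tilde{a},\tilde{x},\tilde{z})}(1,x,z)-{\cal I}_{(\tilde{x},\tilde{z})}(x,z)\pi_{{\cal P}}(x,z)$ then collapses the integral to the point $(\tilde{x},\tilde{z})$, leaving the constant $({\cal I}_{\tilde{a}}(1)-\pi_{{\cal P}}(\tilde{x},\tilde{z}))$. Unlike Lemma 3, no conditional expectation remains, because the other factor is a deterministic function of $(x,z)$.

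The left-factor piece becomes $\frac{\partial}{\partial\pi}r(\tilde{x},\pi_{{\cal P}}(\tilde{x},\tilde{z}))({\cal I}_{\tilde{a}}(1)-\pi_{{\cal P}}(\tilde{x},\tilde{z}))r(\tilde{x},\pi_{{\cal P}}(\tilde{x},\tilde{z}))^{\prime}$. The right-factor piece becomes the same scalar times $r(\tilde{x},\pi_{{\cal P}}(\tilde{x},\tilde{z}))\,\frac{\partial}{\partial\pi}r(\tilde{x},\pi_{{\cal P}}(\tilde{x},\tilde{z}))^{\prime}$.

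The main obstacle is the final step: combining these two cross pieces into the single term with coefficient $2$ that the statement asserts. The two matrices are transposes of each other, so their sum is automatically symmetric. However, equating the sum with twice the first piece requires $\frac{\partial}{\partial\pi}r\,r^{\prime}$ to be a symmetric matrix, which fails for a generic pair of vectors. What I would try first is to check whether this symmetry holds for the particular polynomial vector $r(x,p)$ of the paper.

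If it does not hold, the literal identity cannot be established as a matrix equality, and the most I can then prove is the two-term expression obtained above, whose symmetric part equals the stated right-hand side. I would flag this to the reader at that point rather than claim the stated identity.
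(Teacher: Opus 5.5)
Your computation is the same as the paper's. The paper also writes the expectation as $\int r r^{\prime} f_t\,dxdz$, differentiates under the integral, and uses Lemma 2 for the $r$-factors and the Dirac delta for $\partial_t f_t$. The only divergence is at the step you flag, and there you are right and the paper is not. Its proof simply writes the two cross terms as $2\int\frac{\partial}{\partial\pi}r\cdots r^{\prime}$, i.e.\ it applies the scalar rule $\frac{d}{dt}r^{2}=2r\dot r$ to an outer product.

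Your pending check can be settled at once, and it fails. Write $\tilde{\pi}=\pi_{{\cal P}}(\tilde{x},\tilde{z})$, $u=\frac{\partial}{\partial\pi}r(\tilde{x},\tilde{\pi})=(0,\boldsymbol{0}^{\prime},1,\tilde{x}^{\prime},2\tilde{\pi})^{\prime}$ and $v=r(\tilde{x},\tilde{\pi})$, whose first entry is $1$. The $(1,3)$ entry of $uv^{\prime}$ is $u_{1}v_{3}=0$, while its $(3,1)$ entry is $u_{3}v_{1}=1$. The left-hand side is the derivative of a symmetric matrix, so it is symmetric. The printed right-hand side is not symmetric whenever ${\cal I}_{\tilde{a}}(1)\neq\tilde{\pi}$. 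So the lemma as printed is false, and the correct statement is your two-term version, $({\cal I}_{\tilde{a}}(1)-\tilde{\pi})(uv^{\prime}+vu^{\prime})+vv^{\prime}-E_{{\cal P}}[r(x,\pi_{{\cal P}}(x,z))r(x,\pi_{{\cal P}}(x,z))^{\prime}]$. State this as your conclusion rather than leaving it conditional. One wording slip: your expression is already symmetric, so it is the printed right-hand side whose symmetrization equals it, not the other way round.

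The symmetrization does not make the error harmless. In Lemma 6 the derivative of $\Omega_{{\cal P}}$ is multiplied on the right by $\gamma$, not used in a quadratic form, and $uv^{\prime}\gamma\neq vu^{\prime}\gamma$ in general. Carrying your version through Lemma 6, the term $u(\tilde{a}-\tilde{\pi})E_{{\cal P}}[Y\vert\tilde{x},\tilde{z}]$ from Lemma 3 cancels $u(\tilde{a}-\tilde{\pi})v^{\prime}\gamma$ by Lemma 5. The first-stage correction then becomes $-\Omega_{{\cal P}}^{-1}v(\tilde{a}-\tilde{\pi})u^{\prime}\gamma=-\Omega_{{\cal P}}^{-1}v(\tilde{a}-\tilde{\pi})\theta_{\text{MTE}}(\tilde{x},\tilde{\pi})$, which is the usual generated-regressor term. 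The paper instead has $-\Omega_{{\cal P}}^{-1}u(\tilde{a}-\tilde{\pi})v^{\prime}\gamma$. Equivalently, $\Gamma_{{\cal P}}$ in the supplement should be replaced by its transpose, and the same factor $2$ reappears in the FSIF calculation for Proposition 1.

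A quick invariance check confirms which version is right. Adding a constant $c$ to $Y$ (and to $\tilde{y}$) shifts $\gamma({\cal P})$ by $c$ in its first coordinate for every ${\cal P}$, so the influence function must be unchanged. The residual term is unchanged, and $u^{\prime}\gamma$ is invariant because $u_{1}=0$, but $v^{\prime}\gamma$ shifts by $c$.
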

\begin{proof}
{\small{}Note that 
\begin{align*}
 & \frac{\partial}{\partial t}E_{{\cal P}_{t}}[r(x,\pi_{{\cal P}_{t}}(x,z))r(x,\pi_{{\cal P}_{t}}(x,z))^{\prime}]\Big\vert_{t=0}\\
 & =\frac{\partial}{\partial t}\int\Big(r(x,\pi_{{\cal P}_{t}}(x,z))r(x,\pi_{{\cal P}_{t}}(x,z))^{\prime}\cdot f_{t}(x,z)\,dxdz\\
 & =2\int\frac{\partial}{\partial\pi}r(x,\pi_{{\cal P}}(x,z))\cdot\frac{1}{f(x,z)}({\cal I}_{\tilde{a}}(1)-{\cal I}_{(\widetilde{x},\widetilde{z})}(x,z)\pi_{{\cal P}}(x,z))r(x,\pi_{{\cal P}}(x,z))^{\prime}\cdot f(x,z)\,dxdz\\
 & +r(\widetilde{x},\pi_{{\cal P}}(\widetilde{x},\widetilde{z}))r(\widetilde{x},\pi_{{\cal P}}(\widetilde{x},\widetilde{z}))^{\prime}-E_{{\cal P}}[r(x,\pi_{{\cal P}}(x,z))r(x,\pi_{{\cal P}}(x,z))^{\prime}]\\
 & =2\Big(\frac{\partial}{\partial\pi}r(\widetilde{x},\pi_{{\cal P}}(\widetilde{x},\widetilde{z}))\Big)({\cal I}_{\tilde{a}}(1)-\pi_{{\cal P}}(\widetilde{x},\widetilde{z}))r(\widetilde{x},\pi_{{\cal P}}(\widetilde{x},\widetilde{z}))^{\prime}\\
 & +r(\widetilde{x},\pi_{{\cal P}}(\widetilde{x},\widetilde{z}))r(\widetilde{x},\pi_{{\cal P}}(\widetilde{x},\widetilde{z}))^{\prime}-E_{{\cal P}}[r(x,\pi_{{\cal P}}(x,z))r(x,\pi_{{\cal P}}(x,z))^{\prime}]
\end{align*}
where the second equality follows by Lemma 2. }{\small\par}
\end{proof}
{\small{}~}{\small\par}
\begin{lem}
{\small{}$E_{{\cal P}}[Y\vert X,Z]=E_{{\cal P}}[Y\vert X,\pi_{{\cal P}}(X,Z)]$.}{\small\par}
\end{lem}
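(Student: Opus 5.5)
We will prove Lemma 5 via the index sufficiency property of the generalized Roy model, using Equations (1)--(3) (or the general model with Equation (1) of the Supplementary Material). Write $Y=Y_{0}+A(Y_{1}-Y_{0})$ with $A=I\{\pi_{{\cal P}}(X,Z)>V\}$. The plan is to show that $E_{{\cal P}}[Y\vert X=x,Z=z]$ depends on $(x,z)$ only through $(x,\pi_{{\cal P}}(x,z))$. The lemma then follows from the tower property, because $\sigma(X,\pi_{{\cal P}}(X,Z))\subseteq\sigma(X,Z)$.

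\textbf{Step 1.} Conditioning on $(X,Z)=(x,z)$, compute
\[
E_{{\cal P}}[Y\vert x,z]=E[Y_{0}\vert x,z]+E[(Y_{1}-Y_{0})I\{V<\pi_{{\cal P}}(x,z)\}\vert x,z].
\]
By Equation (2), $(Y_{0},Y_{1},V)$ is independent of $Z$ given $X$. Hence $E[Y_{0}\vert x,z]=E[Y_{0}\vert x]$, and the second term equals $g(x,\pi_{{\cal P}}(x,z))$, where
\[
g(x,p)=E[(Y_{1}-Y_{0})I\{V<p\}\vert X=x].
\]
We will check this by iterating expectations over $V$: given $X=x$, $V$ is uniform and independent of $Z$, so the second term is $\int_{0}^{p}E[Y_{1}-Y_{0}\vert X=x,V=v]\,dv$ evaluated at $p=\pi_{{\cal P}}(x,z)$. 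Under Equation (3) this integral is explicitly a polynomial $r(x,p)^{\prime}\gamma$ minus the $Y_{0}$ part. Therefore $E_{{\cal P}}[Y\vert X,Z]=h(X,\pi_{{\cal P}}(X,Z))$ for a measurable $h(x,p)=E[Y_{0}\vert x]+g(x,p)$.

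\textbf{Step 2.} Apply the tower property:
\[
E_{{\cal P}}[Y\vert X,\pi_{{\cal P}}(X,Z)]=E\big[E_{{\cal P}}[Y\vert X,Z]\,\big\vert\,X,\pi_{{\cal P}}(X,Z)\big]=E\big[h(X,\pi_{{\cal P}}(X,Z))\,\big\vert\,X,\pi_{{\cal P}}(X,Z)\big]=h(X,\pi_{{\cal P}}(X,Z)),
\]
which equals $E_{{\cal P}}[Y\vert X,Z]$.

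\textbf{Main obstacle.} The only delicate point is the use of the normalization in Step 1. The step requires that $V$ be uniform given $X$ and that it remain so given $(X,Z)$, so that conditioning on $Z$ enters only through the threshold $\pi_{{\cal P}}(x,z)$. I will state this explicitly from Equation (2) together with the probability integral transform discussed in Section 2.1. I will also note that integrability of $Y_{0},Y_{1}$ is needed for Fubini.
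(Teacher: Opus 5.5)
Your argument is correct, and it takes a somewhat different path from the paper's. The paper splits $E_{{\cal P}}[Y\vert X,Z]$ as $E[Y_{1}\vert X,Z,A=1]\,{\cal P}(A=1\vert X,Z)+E[Y_{0}\vert X,Z,A=0]\,{\cal P}(A=0\vert X,Z)$. It uses the tower property to show the propensity factor depends only on $(X,\pi_{{\cal P}}(X,Z))$. It then argues through a chain of conditional-independence statements that $E[Y_{a}\vert X,Z,A=a]=E[Y_{a}\vert X,\pi_{{\cal P}}(X,Z),A=a]$, and reassembles the pieces as $E_{{\cal P}}[Y\vert X,\pi_{{\cal P}}(X,Z)]$.

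You instead write $Y=Y_{0}+A(Y_{1}-Y_{0})$ and never condition on the event $\{A=a\}$. You show directly that $E_{{\cal P}}[Y\vert X,Z]=h(X,\pi_{{\cal P}}(X,Z))$ and finish with a single application of the tower property.

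Your route is tighter exactly where the paper's is loose. The paper's chain starts only from $Y_{1}\independent Z\mid X$. Its last step, adding $A=1$ to the conditioning set, actually needs the joint statement $(Y_{1},V)\independent Z\mid X$, because $A=I\{\pi_{{\cal P}}(X,Z)>V\}$ and $V$ may be dependent on $Y_{1}$. You invoke that joint independence from Equation (2) explicitly. You also avoid conditioning on $\{A=a\}$, whose conditional probability ${\cal P}(A=a\vert X,Z)$ could be zero. In exchange, the paper's route mirrors the familiar treated/untreated decomposition.

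One small correction to your ``main obstacle'' paragraph. It is the joint conditional independence, not the uniformity of $V$, that makes $Z$ enter only through the threshold. Uniformity is used only to write $g(x,p)=\int_{0}^{p}E[Y_{1}-Y_{0}\vert X=x,V=v]\,dv$, and to identify the threshold in Equation (1) with the propensity score ${\cal P}(A=1\vert X,Z)$ that appears in the lemma. Your remark that the integral equals $r(x,p)^{\prime}\boldsymbol{\gamma}$ minus the $Y_{0}$ part is not needed for the lemma at all.
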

\begin{proof}
{\small{}Note that by the law of iterated expectations, 
\begin{eqnarray*}
E_{{\cal P}}[Y\vert X,Z] & = & E[AY_{1}+(1-A)Y_{0}\vert X,Z]\\
 & = & E[AY_{1}\vert X,Z]+E[(1-A)Y_{0}\vert X,Z]\\
 & = & E_{A\vert X,Z}[E[Y_{1}\vert X,Z,A]\cdot A\vert X,Z]+E_{A\vert X,Z}[E[Y_{0}\vert X,Z,A]\cdot(1-A)\vert X,Z]\\
 & = & E[Y_{1}\vert X,Z,A=1]\cdot{\cal P}(A=1\vert X,Z)+E[Y_{0}\vert X,Z,A=0]\cdot(1-{\cal P}(A=1\vert X,Z)).
\end{eqnarray*}
For a mean conditional on the propensity score, we can write 
\begin{eqnarray*}
E_{{\cal P}}[A\vert X,\pi_{{\cal P}}(X,Z)] & = & E_{Z\vert X,\pi_{{\cal P}}(X,Z)}\big[E[A\vert X,\pi_{{\cal P}}(X,Z),Z]\vert X,\pi_{{\cal P}}(X,Z)\big]\\
 & = & E_{Z\vert X,\pi_{{\cal P}}(X,Z)}\big[E_{{\cal P}}[A\vert X,Z]\vert X,\pi_{{\cal P}}(X,Z)\big]\\
 & = & E_{Z\vert X,\pi_{{\cal P}}(X,Z)}\big[\pi_{{\cal P}}(X,Z)\vert X,\pi_{{\cal P}}(X,Z)\big]\\
 & = & \pi_{{\cal P}}(X,Z)\\
 & = & {\cal P}(A=1\vert X,Z)
\end{eqnarray*}
where the first equality follows by the law of iterated expectations,
the second equality follows by $\pi_{{\cal P}}(X,Z)$ being known
given knowledge of $(X,Z)$, and the third and fourth equalities follow
by the definition of the propensity score. Therefore, ${\cal P}(A=1\vert X,Z)={\cal P}(A=1\vert X,\pi_{{\cal P}}(X,Z))$. }{\small\par}

{\small{}Now, by assumption, $Y_{1}\independent Z\vert X$. Therefore,
$Y_{1}\independent(X,Z)\vert X$. Then, $Y_{1}\independent(X,Z)\vert(X,\pi_{{\cal P}}(X,Z))$
since $(X,\pi_{{\cal P}}(X,Z))$ gives no information on $Y$ if $(X,Z)\vert X$
does not. Finally, $Y_{1}\independent(X,Z,A=1)\vert(X,\pi_{{\cal P}}(X,Z),A=1)$.
By identical arguments, $Y_{0}\independent(X,Z,A=0)\vert(X,\pi_{{\cal P}}(X,Z),A=0)$.
Hence, $E[Y_{1}\vert X,Z,A=1]=E[Y_{1}\vert X,\pi_{{\cal P}}(X,Z),A=1]$,
and likewise, $E[Y_{0}\vert X,Z,A=0]=E[Y_{0}\vert X,\pi_{{\cal P}}(X,Z),A=0]$. }{\small\par}

{\small{}Overall, we have shown that 
\begin{eqnarray*}
E_{{\cal P}}[Y\vert X,Z] & = & E[Y_{1}\vert X,Z,A=1]\cdot{\cal P}(A=1\vert X,Z)+E[Y_{0}\vert X,Z,A=0]\cdot(1-{\cal P}(A=1\vert X,Z))\\
 & = & E[Y_{1}\vert X,\pi_{{\cal P}}(X,Z),A=1]\cdot{\cal P}(A=1\vert X,\pi_{{\cal P}}(X,Z))\\
 &  & +E[Y_{0}\vert X,\pi_{{\cal P}}(X,Z),A=0]\cdot{\cal P}(A=0\vert X,\pi_{{\cal P}}(X,Z))\\
 & = & E_{{\cal P}}[Y\vert X,\pi_{{\cal P}}(X,Z)]
\end{eqnarray*}
as required.}{\small\par}
\end{proof}
{\small{}~}{\small\par}
\begin{lem}
{\small{}Under the mixture model ${\cal P}_{t}=t\widetilde{{\cal P}}+(1-t){\cal P}$
where $\widetilde{{\cal P}}$ denotes a fixed distribution that has
a point mass at a single observation $\widetilde{o}=(\tilde{a},\tilde{x},\widetilde{y},\tilde{z})$,
the efficient influence function for $\boldsymbol{\gamma}({\cal P})$
is 
\begin{eqnarray*}
\phi(\widetilde{o},{\cal P}) & = & -E_{{\cal P}}[r(x,\pi_{{\cal P}}(x,z))r(x,\pi_{{\cal P}}(x,z))^{\prime}]^{-1}\frac{\partial}{\partial\pi}r(\widetilde{x},\pi_{{\cal P}}(\widetilde{x},\widetilde{z}))(\widetilde{a}-\pi_{{\cal P}}(\widetilde{x},\widetilde{z}))r(\widetilde{x},\pi_{{\cal P}}(\widetilde{x},\widetilde{z}))^{\prime}\boldsymbol{\gamma}({\cal P})\\
 &  & -E_{{\cal P}}[r(x,\pi_{{\cal P}}(x,z))r(x,\pi_{{\cal P}}(x,z))^{\prime}]^{-1}r(\widetilde{x},\pi_{{\cal P}}(\widetilde{x},\widetilde{z}))r(\widetilde{x},\pi_{{\cal P}}(\widetilde{x},\widetilde{z}))^{\prime}\boldsymbol{\gamma}({\cal P})\\
 &  & +E_{{\cal P}}[r(x,\pi_{{\cal P}}(x,z))r(x,\pi_{{\cal P}}(x,z))^{\prime}]^{-1}r(\widetilde{x},\pi_{{\cal P}}(\widetilde{x},\widetilde{z}))\widetilde{y}.
\end{eqnarray*}
}{\small\par}
\end{lem}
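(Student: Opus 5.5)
The plan is to write $\boldsymbol{\gamma}({\cal P})=M({\cal P})^{-1}b({\cal P})$ and differentiate it along the contamination path ${\cal P}_{t}$. Here $M({\cal P})=E_{{\cal P}}[r(X,\pi_{{\cal P}}(X,Z))r(X,\pi_{{\cal P}}(X,Z))^{\prime}]$ and $b({\cal P})=E_{{\cal P}}[r(X,\pi_{{\cal P}}(X,Z))Y]$; this is the population least-squares coefficient, since $r$ contains a constant. The product rule gives
\[
\phi(\widetilde{o},{\cal P})=-M^{-1}\Big(\frac{\partial}{\partial t}M({\cal P}_{t})\Big\vert_{t=0}\Big)\boldsymbol{\gamma}({\cal P})+M^{-1}\frac{\partial}{\partial t}b({\cal P}_{t})\Big\vert_{t=0},
\]
using $\partial_{t}M_{t}^{-1}=-M^{-1}(\partial_{t}M_{t})M^{-1}$ and $M^{-1}b=\boldsymbol{\gamma}$. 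The two derivatives are given directly by Lemma 4 and Lemma 3.

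Next I will substitute Lemma 3 and Lemma 4 and collect terms. The ``recentring'' pieces cancel: Lemma 4 contributes $+M M^{-1}\ldots$, that is $M^{-1}M\boldsymbol{\gamma}=\boldsymbol{\gamma}$, with a minus sign, while Lemma 3 contributes $-M^{-1}b=-\boldsymbol{\gamma}$. Together these give $-(-\boldsymbol{\gamma})-\boldsymbol{\gamma}=0$. The terms $r(\widetilde{x},\pi)r(\widetilde{x},\pi)^{\prime}\boldsymbol{\gamma}$ and $r(\widetilde{x},\pi)\widetilde{y}$ then appear exactly as the second and third lines of the claimed display.

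The remaining work concerns the propensity-score perturbation terms. From Lemma 4 we get $-2M^{-1}\partial_{\pi}r\,(\widetilde{a}-\pi)\,r^{\prime}\boldsymbol{\gamma}$, writing ${\cal I}_{\tilde{a}}(1)=\widetilde{a}$. From Lemma 3 we get $+M^{-1}\partial_{\pi}r\,(\widetilde{a}-\pi)\,E_{{\cal P}}[Y\vert\widetilde{x},\widetilde{z}]$. The key step is Lemma 5 together with equation (4): they give $E_{{\cal P}}[Y\vert\widetilde{x},\widetilde{z}]=m(\widetilde{x},\pi_{{\cal P}}(\widetilde{x},\widetilde{z}))=r(\widetilde{x},\pi_{{\cal P}}(\widetilde{x},\widetilde{z}))^{\prime}\boldsymbol{\gamma}({\cal P})$. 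Substituting this, the two terms combine as $(-2+1)M^{-1}\partial_{\pi}r(\widetilde{a}-\pi)r^{\prime}\boldsymbol{\gamma}$, which is the first line of the display. Finally, because the model for $\pi_{{\cal P}}$ and the conditional law of $Y$ is nonparametric (no restrictions beyond the definition of $\boldsymbol{\gamma}$ as a functional), the Gateaux derivative is the unique, and hence efficient, influence function. I would state this following \citet{Ichimura2015} and \citet{Hines2022}.

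I expect the main obstacle to be bookkeeping rather than ideas. The factor $2$ in Lemma 4 arises from symmetrising the derivative of $rr^{\prime}$: strictly, it equals $\partial_{\pi}r\,r^{\prime}+r\,\partial_{\pi}r^{\prime}$. When this is multiplied by $\boldsymbol{\gamma}$, the two pieces are not literally equal as vectors. So I would redo that step carefully, keeping both pieces. I would then check that, after using Lemma 5, the piece $r\,(\partial_{\pi}r)^{\prime}\boldsymbol{\gamma}(\widetilde{a}-\pi)$ either cancels or is absorbed, so that the final expression matches the stated form. If it does not match exactly, I would report the symmetric version and note that it reduces to the displayed one under the convention used in Lemma 4.
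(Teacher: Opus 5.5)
Your route is exactly the paper's: the paper's proof of this lemma is the same product-rule expansion of $\boldsymbol{\gamma}=M^{-1}b$, then substitution of Lemmas 3, 4 and 5, with the propensity-score terms combining as $(-2+1)$. The problem is the one you flagged at the end, and it is not bookkeeping. Lemma 4 is wrong as stated, and the paper's proof inherits the error. Write $\tilde\pi=\pi_{\cal P}(\tilde x,\tilde z)$ and $\tilde r=r(\tilde x,\tilde\pi)$. Lemma 4's propensity-score term $2(\tilde a-\tilde\pi)\,\partial_\pi\tilde r\,\tilde r^{\prime}$ is not even symmetric, whereas $\partial_t E_{{\cal P}_t}[rr^{\prime}]$ must be. The correct term is $(\tilde a-\tilde\pi)\{\partial_\pi\tilde r\,\tilde r^{\prime}+\tilde r\,(\partial_\pi\tilde r)^{\prime}\}$. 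Your fallback, that the symmetric version ``reduces to the displayed one under the convention of Lemma 4,'' does not work. The vectors $\partial_\pi r\,(r^{\prime}\boldsymbol{\gamma})$ and $r\,((\partial_\pi r)^{\prime}\boldsymbol{\gamma})$ are genuinely different: the intercept entry of $\partial_\pi r$ is $0$, while that of $r$ is $1$. No convention identifies them.

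If you carry your plan through with the corrected Lemma 4 and then apply Lemma 5, the piece $-(\tilde a-\tilde\pi)M^{-1}\partial_\pi\tilde r\,\tilde r^{\prime}\boldsymbol{\gamma}$ cancels exactly against the Lemma 3 term; the count is $-1+1=0$, not $-2+1$. The piece $-(\tilde a-\tilde\pi)M^{-1}\tilde r\,(\partial_\pi\tilde r)^{\prime}\boldsymbol{\gamma}$ survives, which is the opposite of what the display needs. The result is
\[
\phi(\tilde o,{\cal P})=M^{-1}\tilde r\,\big\{\tilde y-\tilde r^{\prime}\boldsymbol{\gamma}-(\tilde a-\tilde\pi)\,(\partial_\pi\tilde r)^{\prime}\boldsymbol{\gamma}\big\},\qquad(\partial_\pi\tilde r)^{\prime}\boldsymbol{\gamma}=\theta_{\text{MTE}}(\tilde x,\tilde\pi).
\]
This is the familiar generated-regressor correction. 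The derivative of $E[r(Y-r^{\prime}\boldsymbol{\gamma})]$ along a perturbation $\delta$ of $\pi$ is $E[\partial_\pi r\,(Y-r^{\prime}\boldsymbol{\gamma})\,\delta]-E[r\,(\partial_\pi r)^{\prime}\boldsymbol{\gamma}\,\delta]$, and the first term vanishes because $E[Y\vert X,Z]=r^{\prime}\boldsymbol{\gamma}$.

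Here is a quick check that the printed first line cannot be right. Replacing $Y$ by $Y+c$ leaves $M$ and $\pi_{\cal P}$ unchanged and maps $\boldsymbol{\gamma}$ to $\boldsymbol{\gamma}+ce_{1}$, with $e_1$ the first unit vector. Hence the influence function must be unchanged. The corrected formula is unchanged, because $r^{\prime}e_1=1$ and $(\partial_\pi r)^{\prime}e_1=0$. The displayed one instead shifts by $-c(\tilde a-\tilde\pi)M^{-1}\partial_\pi\tilde r\neq0$. So the statement as printed is false in general. What your argument actually establishes, once Lemma 4 is fixed, is the corrected expression above.
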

\begin{proof}
{\small{}Note that
\begin{eqnarray*}
\phi(\widetilde{o},{\cal P}) & = & \frac{\partial}{\partial t}\gamma_{0}({\cal P}_{t})\Big\vert_{t=0}\\
 & = & -E_{{\cal P}}[r(\pi_{{\cal P}}(x,z),x)r(\pi_{{\cal P}}(x,z),x)^{\prime}]^{-1}\frac{\partial}{\partial t}E_{{\cal P}_{t}}[r(\pi_{{\cal P}_{t}}(x,z),x)r(\pi_{{\cal P}_{t}}(x,z),x)^{\prime}]\Big\vert_{t=0}\\
 &  & \times E_{{\cal P}}[r(\pi_{{\cal P}}(x,z),x)r(\pi_{{\cal P}}(x,z),x)^{\prime}]^{-1}E_{{\cal P}}[r(\pi_{{\cal P}}(x,z),x)y]\\
 &  & +E_{{\cal P}}[r(\pi_{{\cal P}}(x,z),x)r(\pi_{{\cal P}}(x,z),x)^{\prime}]^{-1}\frac{\partial}{\partial t}E_{{\cal P}_{t}}[r(\pi_{{\cal P}_{t}}(x,z),x)y]\Big\vert_{t=0}
\end{eqnarray*}
which leads to the required result after substituting in the results
from Lemmas 3, 4 and 5. }{\small\par}
\end{proof}
~

\subsubsection*{Proof of Proposition 1}

{\small{}Using Lemma 6, setting the sample average of the efficient
influence function to zero, we have 
\begin{eqnarray*}
0 & = & E_{\widehat{{\cal P}}}[\phi(o,\widehat{{\cal P}})]\\
 & = & -E_{{\cal \widehat{P}}}[r(x,\pi_{{\cal P}}(x,z))r(x,\pi_{{\cal P}}(x,z))^{\prime}]^{-1}E_{\widehat{{\cal P}}}\Big[\frac{\partial}{\partial\pi}r(x,\pi_{{\cal P}}(x,z))(a-\pi_{{\cal P}}(x,z))r(x,\pi_{{\cal P}}(x,z))^{\prime}\Big]\gamma(\widehat{{\cal P}})\\
 &  & -\gamma(\widehat{{\cal P}})+E_{\widehat{{\cal P}}}[r(x,\pi_{{\cal P}}(x,z))r(x,\pi_{{\cal P}}(x,z))^{\prime}]^{-1}E_{\widehat{{\cal P}}}[r(x,\pi_{{\cal P}}(x,z))y].
\end{eqnarray*}
Solving for $\widetilde{\gamma}=\gamma(\widehat{{\cal P}})$, the
efficient estimator $\widetilde{\gamma}$ has a convenient closed
form expression
\[
\widetilde{\gamma}=(\Omega_{{\cal \widehat{P}}}+\Gamma_{\widehat{{\cal P}}})^{-1}\Upsilon_{\widehat{{\cal P}}}
\]
where $\Omega_{{\cal \widehat{P}}}$, $\Gamma_{\widehat{{\cal P}}}$,
and $\Upsilon_{\widehat{{\cal P}}}$ are 
\begin{eqnarray*}
\Omega_{{\cal \widehat{P}}} & = & E_{{\cal P}}[r(x,\pi_{{\cal P}}(x,z))r(x,\pi_{{\cal P}}(x,z))^{\prime}]\\
\Gamma_{\widehat{{\cal P}}} & = & E_{{\cal P}}\big[\frac{\partial}{\partial\pi}r(x,\pi_{{\cal P}}(x,z))(a-\pi_{{\cal P}}(x,z))r(x,\pi_{{\cal P}}(x,z))^{\prime}]\\
\Upsilon_{\widehat{{\cal P}}} & = & E_{{\cal P}}[r(x,\pi_{{\cal P}}(x,z))y].
\end{eqnarray*}
The asymptotic variance of the efficient estimator $\widetilde{\gamma}$
can be calculated as $E_{\widehat{{\cal P}}}[\phi(o,{\cal P})\phi(o,{\cal P})^{\prime}]$.
From Lemma 6, this variance expression takes the form $E_{\widehat{{\cal P}}}[\phi(o,{\cal P})\phi(o,{\cal P})^{\prime}]=\Omega_{{\cal P}}^{-1}\Phi_{{\cal P}}\Omega_{{\cal P}}^{-1\prime}$
where 
\[
\Phi_{{\cal P}}=E_{{\cal P}}[(g(o,\gamma,{\cal P})+\psi(o,\gamma,{\cal P}))(g(o,\gamma,{\cal P})+\psi(o,\gamma,{\cal P}))^{\prime}]
\]
 for 
\[
g(o,\gamma,{\cal P})=r(x,\pi_{{\cal P}}(x,z))(y-r(x,\pi_{{\cal P}}(x,z))^{\prime}\gamma)
\]
 and 
\[
\psi(o,\gamma,{\cal P})=\frac{\partial}{\partial\pi}r(x,\pi_{{\cal P}}(x,z))(a-\pi_{{\cal P}}(x,z))r(x,\pi_{{\cal P}}(x,z))^{\prime}\gamma
\]
and where $\Omega_{{\cal P}}$ and $\Gamma_{{\cal P}}$ are defined
in the main text. }{\small\par}

{\small{}Now we need to show that the conventional (OLS) estimator
$\widehat{\gamma}=\Omega_{{\cal \widehat{P}}}^{-1}cov_{\widehat{{\cal P}}}(r(\pi_{\widehat{{\cal P}}}(x,z),x),y)$
has the same asymptotic variance as $\widetilde{\gamma}$. To derive
the asymptotic variance of $\widehat{\gamma}$, we start by deriving
the so-called first stage influence function (FSIF; \citealt{Chernozhukov2022}),
of the OLS moment condition $E_{{\cal P}}[g(o,\gamma,{\cal P})]$.
where $g(o,\gamma,{\cal P})=r(x,\pi_{{\cal P}}(x,z))(y-r(x,\pi_{{\cal P}}(x,z))^{\prime}\gamma)$. }{\small\par}

{\small{}The FSIF term, which we denote $\text{FSIF}(o,\gamma,{\cal P})$,
relates to the uncertainty due to first-stage propensity score estimation,
and under standard regularity conditions, the first-order asymptotic
expansion of the conventional estimator $\widehat{\gamma}$ is $\sqrt{n}(\widehat{\gamma}-\gamma)=\Omega_{{\cal P}}^{-1}\sqrt{n}E_{\widehat{{\cal P}}}\big[g(O,\gamma,{\cal P})+\text{FSIF}(o,\gamma,{\cal P})\big]+o_{{\cal P}}(1)$. }{\small\par}

{\small{}Using the same point mass contamination strategy used above,
we calculate the FSIF at observation $\widetilde{o}=(\widetilde{a},\widetilde{x},\widetilde{y},\widetilde{z})$
as the Gateaux derivative 
\[
\text{FSIF}(o,\gamma,{\cal P})=\frac{\partial}{\partial t}E_{{\cal P}}[g(\widetilde{o},\gamma,{\cal P}_{t})]\Big\vert_{t=0}.
\]
Calculating this derivative, we have 
\begin{eqnarray*}
\text{FSIF}(o,\gamma,{\cal P}) & = & \frac{\partial}{\partial t}\int r(x,\pi_{{\cal P}_{t}}(x,z))(y-r(x,\pi_{{\cal P}_{t}}(x,z))^{\prime}\gamma)\cdot f(x,y,z)\,dxdydz\Big\vert_{t=0}\\
 & = & \frac{\partial}{\partial t}E_{{\cal P}}[r(x,\pi_{{\cal P}_{t}}(x,z))y]\Big\vert_{t=0}-\Big(\frac{\partial}{\partial t}E_{{\cal P}}[r(x,\pi_{{\cal P}_{t}}(x,z))r(x,\pi_{{\cal P}_{t}}(x,z))^{\prime}]\Big\vert_{t=0}\Big)\gamma\\
 & = & \frac{\partial}{\partial\pi}r(\widetilde{x},\pi_{{\cal P}}(\widetilde{x},\widetilde{z}))(\widetilde{a}-\pi_{{\cal P}}(\widetilde{x},\widetilde{z}))E_{{\cal P}}[y\vert\widetilde{x},\widetilde{z}]\\
 &  & -2\Big(\frac{\partial}{\partial\pi}r(\widetilde{x},\pi_{{\cal P}}(\widetilde{x},\widetilde{z}))\Big)(\tilde{a}-\pi_{{\cal P}}(\widetilde{x},\widetilde{z}))r(\widetilde{x},\pi_{{\cal P}}(\widetilde{x},\widetilde{z}))^{\prime}\gamma\\
 & = & -\Big(\frac{\partial}{\partial\pi}r(\widetilde{x},\pi_{{\cal P}}(\widetilde{x},\widetilde{z}))\Big)(\tilde{a}-\pi_{{\cal P}}(\widetilde{x},\widetilde{z}))r(\widetilde{x},\pi_{{\cal P}}(\widetilde{x},\widetilde{z}))^{\prime}\gamma\\
 & = & \psi(o,\gamma,{\cal P})
\end{eqnarray*}
using Lemma 5, and where the third equality uses similar arguments
used in Proofs of Lemmas 3 and 4. Thus, given $\sqrt{n}(\widehat{\gamma}-\gamma)=\Omega_{{\cal P}}^{-1}\sqrt{n}E_{\widehat{{\cal P}}}\big[g(O,\gamma,{\cal P})+\text{FSIF}(o,\gamma,{\cal P})\big]+o_{{\cal P}}(1)$,
we have that the asymptotic variance of $\widehat{\gamma}$ is equal
to $\Omega_{{\cal P}}^{-1}\Phi_{{\cal P}}\Omega_{{\cal P}}^{-1\prime}$,
which is the same asymptotic variance as $\widetilde{\gamma}$. \hfill{}$\square$}\\
{\small{}~}{\small\par}

\subsubsection*{Proof of Proposition 2}

\textsc{(i) IV estimand}

{\small{}By Table II of \citet[p.1597]{Mogstad2018a}, we can express
the IV estimand $cov(A,Z)^{-1}cov(Y,Z)$ as a weighted average of
mean response functions $E[Y_{1}\vert X,V]$ and $E[Y_{0}\vert X,V]$.
In our model, these mean response functions take the form $E[Y_{0}\vert X,V]=\alpha_{0}+\beta_{0}^{\prime}X+E[U_{0}\vert X,V]$
and $E[Y_{1}\vert X,V]=\alpha_{1}+\beta_{1}^{\prime}X+E[U_{1}\vert X,V]$.
By Equation (1), $E[U_{0}\vert X,V]=\sum_{s=1}^{S}\zeta_{0s}\Big(V^{s}-\frac{1}{s+1}\Big)+\xi_{0s}^{\prime}X\Big(V^{s}-\frac{1}{s+1}\Big)$
and $E[U_{1}\vert X,V]=\sum_{s=1}^{S}\zeta_{1s}\Big(V^{s}-\frac{1}{s+1}\Big)+\xi_{1s}^{\prime}X\Big(V^{s}-\frac{1}{s+1}\Big)$.
Hence, using the formula given in Table II of \citet[p.1597]{Mogstad2018a},
we can write 
\begin{eqnarray*}
cov(Y,Z) & = & E\bigg[\int_{p}^{1}\Big\{\alpha_{0}+\beta_{0}^{\prime}X+\sum_{s=1}^{S}\zeta_{0s}\Big(v^{s}-\frac{1}{s+1}\Big)+\xi_{0s}^{\prime}X\Big(v^{s}-\frac{1}{s+1}\Big)\Big\} dv\cdot(Z-E[Z])\bigg]\\
 &  & +E\bigg[\int_{0}^{p}\Big\{\alpha_{1}+\beta_{1}^{\prime}X+\sum_{s=1}^{S}\zeta_{1s}\Big(v^{s}-\frac{1}{s+1}\Big)+\xi_{1s}^{\prime}X\Big(v^{s}-\frac{1}{s+1}\Big)\Big\} dv\cdot(Z-E[Z])\bigg]\\
 & = & \alpha_{0}E[Z-E[Z]]+\beta_{0}^{\prime}E[X(Z-E[Z])]+\Big(\alpha_{1}-\alpha_{0}-\sum_{s=1}^{S}\frac{1}{s+1}(\zeta_{1s}-\zeta_{0s})\Big)E[p(Z-E[Z])]\\
 &  & +\Big(\beta_{1}-\beta_{0}-\sum_{s=1}^{S}\frac{1}{s+1}(\xi_{1s}-\xi_{0s})\Big)^{\prime}E[Xp(Z-E[Z])]\\
 &  & +\sum_{s=1}^{S}\frac{1}{s+1}(\zeta_{1s}-\zeta_{0s})E[p^{s+1}(Z-E[Z])]+\sum_{s=1}^{S}\frac{1}{s+1}(\xi_{1s}-\xi_{0s})E[Xp^{s+1}(Z-E[Z])].
\end{eqnarray*}
Therefore, $\theta_{IV}({\cal P})=cov_{{\cal P}}(A,Z)^{-1}cov_{{\cal P}}(Y,Z)=cov_{{\cal P}}(A,Z)^{-1}E_{{\cal P}}[r(X,\pi_{{\cal P}}(X,Z))(Z-E_{{\cal P}}[Z])]^{\prime}\gamma({\cal P})$. }{\small\par}

{\small{}The efficient influence function of $\theta_{IV}({\cal P})$
can be calculated by perturbing $\theta_{IV}({\cal P})$ in the direction
$\widetilde{{\cal P}}$ using the mixture model ${\cal P}_{t}=t\widetilde{{\cal P}}+(1-t){\cal P}$,
where $\widetilde{{\cal P}}$ denotes a fixed distribution that has
a point mass at a single observation $\widetilde{o}=(\tilde{y},\tilde{a},\tilde{x},\tilde{z})$.
The efficient influence function at $\tilde{o}$ is 
\begin{eqnarray}
\phi_{IV}(\tilde{o},{\cal P}) & = & \frac{\partial}{\partial t}\theta_{IV}({\cal P}_{t})\Big\vert_{t=0}\nonumber \\
 & = & \Big(\frac{\partial}{\partial t}E_{{\cal P}_{t}}[r(X,\pi_{{\cal P}}(X,Z))(Z-E_{{\cal P}}[Z])]\Big\vert_{t=0}\Big)^{\prime}\gamma({\cal P})\cdot cov_{{\cal P}}(A,Z)^{-1}\nonumber \\
 &  & +E_{{\cal P}}[r(X,\pi_{{\cal P}}(X,Z))(Z-E_{{\cal P}}[Z])]^{\prime}\cdot\frac{\partial}{\partial t}\gamma({\cal P}_{t})\Big\vert_{t=0}\cdot cov_{{\cal P}}(A,Z)^{-1}\nonumber \\
 &  & +E_{{\cal P}}[r(X,\pi_{{\cal P}}(X,Z))(Z-E_{{\cal P}}[Z])]^{\prime}\gamma({\cal P})\cdot\frac{\partial}{\partial t}\big(cov_{{\cal P}_{t}}(A,Z)^{-1}\big)\Big\vert_{t=0}.
\end{eqnarray}
We will deal with each of the three terms on the RHS of Equation (7)
in turn. First, 
\begin{eqnarray*}
\frac{\partial}{\partial t}E_{{\cal P}_{t}}[r(X,\pi_{{\cal P}}(X,Z))(Z-E_{{\cal P}}[Z])]\Big\vert_{t=0} & = & r(\tilde{x},\pi_{{\cal P}}(\tilde{x},\tilde{z}))(\tilde{z}-E_{{\cal P}}[Z])-E_{{\cal P}}[r(X,\pi_{{\cal P}}(X,Z))(Z-E_{{\cal P}}[Z])]\\
 &  & +E_{{\cal P}}\Big[\frac{\partial}{\partial t}r(X,\pi_{{\cal P}_{t}}(X,Z))(Z-E_{{\cal P}_{t}}[Z])\Big\vert_{t=0}\Big].
\end{eqnarray*}
Now, 
\begin{eqnarray*}
E_{{\cal P}}\Big[\frac{\partial}{\partial t}r(X,\pi_{{\cal P}_{t}}(X,Z))(Z-E_{{\cal P}_{t}}[Z])\Big\vert_{t=0}\Big] & = & \int\frac{\partial}{\partial t}r(x,\pi_{{\cal P}_{t}}(x,z))(z-E_{{\cal P}_{t}}[Z])\Big\vert_{t=0}\cdot f(x,z)\,dxdz\\
 & = & \int\frac{\partial}{\partial\pi}r(x,\pi_{{\cal P}}(x,z))({\cal I}_{\tilde{a}}(1)-\pi_{{\cal P}}(x,z)){\cal I}_{(\widetilde{x},\widetilde{z})}(x,z)(z-E_{{\cal P}}[Z])\,dxdz\\
 &  & -\int r(x,\pi_{{\cal P}}(x,z))(\tilde{z}-E_{{\cal P}}[Z])f(x,z)\,dxdz\\
 & = & \frac{\partial}{\partial\pi}r(\tilde{x},\pi_{{\cal P}}(\tilde{x},\tilde{z}))(\tilde{a}-\pi_{{\cal P}}(\tilde{x},\tilde{z}))(\tilde{z}-E_{{\cal P}}[Z])\\
 &  & -E_{{\cal P}}[r(X,\pi_{{\cal P}}(X,Z))](\tilde{z}-E_{{\cal P}}[Z]),
\end{eqnarray*}
where the second equality follows by using Lemma 2, and noting that
$\partial E_{{\cal P}_{t}}[Z]/\partial t=\int z({\cal I}_{\tilde{z}}(z)-f(z))dz=\tilde{z}-E_{{\cal P}}[Z]$. }{\small\par}

{\small{}Overall, 
\begin{eqnarray*}
\frac{\partial}{\partial t}E_{{\cal P}_{t}}[r(X,\pi_{{\cal P}}(X,Z))(Z-E_{{\cal P}}[Z])]\Big\vert_{t=0} & = & r(\tilde{x},\pi_{{\cal P}}(\tilde{x},\tilde{z}))(\tilde{z}-E_{{\cal P}}[Z])-E_{{\cal P}}[r(X,\pi_{{\cal P}}(X,Z))(Z-E_{{\cal P}}[Z])]\\
 &  & +\frac{\partial}{\partial\pi}r(\tilde{x},\pi_{{\cal P}}(\tilde{x},\tilde{z}))(\tilde{a}-\pi_{{\cal P}}(\tilde{x},\tilde{z}))(\tilde{z}-E_{{\cal P}}[Z])\\
 &  & -E_{{\cal P}}[r(X,\pi_{{\cal P}}(X,Z))](\tilde{z}-E_{{\cal P}}[Z]).
\end{eqnarray*}
}{\small\par}

{\small{}For the second term on the RHS of Equation (7), Lemma 6 tells
us 
\begin{eqnarray*}
\frac{\partial}{\partial t}\gamma({\cal P}_{t})\Big\vert_{t=0} & = & -\Omega_{{\cal P}}^{-1}\frac{\partial}{\partial\pi}r(\widetilde{x},\pi_{{\cal P}}(\widetilde{x},\widetilde{z}))(\widetilde{a}-\pi_{{\cal P}}(\widetilde{x},\widetilde{z}))r(\widetilde{x},\pi_{{\cal P}}(\widetilde{x},\widetilde{z}))^{\prime}\gamma({\cal P})\\
 &  & -\Omega_{{\cal P}}^{-1}r(\widetilde{x},\pi_{{\cal P}}(\widetilde{x},\widetilde{z}))r(\widetilde{x},\pi_{{\cal P}}(\widetilde{x},\widetilde{z}))^{\prime}\gamma({\cal P})+\Omega_{{\cal P}}^{-1}r(\widetilde{x},\pi_{{\cal P}}(\widetilde{x},\widetilde{z}))\widetilde{y}.
\end{eqnarray*}
Therefore, the second term on the RHS of Equation (7) is equal to
\begin{align*}
cov_{{\cal P}}(A,Z)^{-1}E_{{\cal P}}[r(X,\pi_{{\cal P}}(X,Z))(Z-E_{{\cal P}}[Z])]^{\prime}\Omega_{{\cal P}}^{-1}r(\widetilde{x},\pi_{{\cal P}}(\widetilde{x},\widetilde{z}))\big(\widetilde{y}-r(\widetilde{x},\pi_{{\cal P}}(\widetilde{x},\widetilde{z}))^{\prime}\gamma({\cal P})\big)\\
-cov_{{\cal P}}(A,Z)^{-1}E_{{\cal P}}[r(X,\pi_{{\cal P}}(X,Z))(Z-E_{{\cal P}}[Z])]^{\prime}\Omega_{{\cal P}}^{-1}\frac{\partial}{\partial\pi}r(\widetilde{x},\pi_{{\cal P}}(\widetilde{x},\widetilde{z}))(\widetilde{a}-\pi_{{\cal P}}(\widetilde{x},\widetilde{z}))r(\widetilde{x},\pi_{{\cal P}}(\widetilde{x},\widetilde{z}))^{\prime}\gamma({\cal P}).
\end{align*}
For the third and final term on the RHS of Equation (7), note that
\begin{eqnarray*}
\frac{\partial}{\partial t}E_{{\cal P}_{t}}[(A-E_{{\cal P}_{t}}[A])(Z-E_{{\cal P}_{t}}[Z])]\Big\vert_{t=0} & = & ({\cal I}_{\tilde{a}}(1)-E_{{\cal P}}[A])(\tilde{z}-E_{{\cal P}}[Z])-cov_{{\cal P}}(A,Z)\\
 &  & +E_{{\cal P}}\Big[\frac{\partial}{\partial t}(A-E_{{\cal P}_{t}}[A])(Z-E_{{\cal P}_{t}}[Z])\Big\vert_{t=0}\Big]\\
 & = & ({\cal I}_{\tilde{a}}(1)-E_{{\cal P}}[A])(\tilde{z}-E_{{\cal P}}[Z])-cov_{{\cal P}}(A,Z).
\end{eqnarray*}
Hence, the third term on the RHS of Equation (7) is 
\begin{align*}
-E_{{\cal P}}[r(X,\pi_{{\cal P}}(X,Z))(Z-E_{{\cal P}}[Z])]^{\prime}\gamma({\cal P})\cdot cov_{{\cal P}}(A,Z)^{-2}\Big\{({\cal I}_{\tilde{a}}(1)-E_{{\cal P}}[A])(\tilde{z}-E_{{\cal P}}[Z])-cov_{{\cal P}}(A,Z)\Big\}\\
=-E_{{\cal P}}[r(X,\pi_{{\cal P}}(X,Z))(Z-E_{{\cal P}}[Z])]^{\prime}\gamma({\cal P})\cdot cov_{{\cal P}}(A,Z)^{-2}({\cal I}_{\tilde{a}}(1)-E_{{\cal P}}[A])(\tilde{z}-E_{{\cal P}}[Z])+\theta_{IV}({\cal P})
\end{align*}
Combining the above results, we calculate the efficient influence
function at $\tilde{o}$ as 
\begin{eqnarray*}
\phi_{IV}(\tilde{o},{\cal P}) & = & cov_{{\cal P}}(A,Z)^{-1}\big(r(\tilde{x},\pi_{{\cal P}}(\tilde{x},\tilde{z}))(\tilde{z}-E_{{\cal P}}[Z])\big)^{\prime}\gamma({\cal P})-\theta_{IV}({\cal P})\\
 &  & +cov_{{\cal P}}(A,Z)^{-1}\Big(\frac{\partial}{\partial\pi}r(\tilde{x},\pi_{{\cal P}}(\tilde{x},\tilde{z}))(\tilde{a}-\pi_{{\cal P}}(\tilde{x},\tilde{z}))(\tilde{z}-E_{{\cal P}}[Z])\Big)^{\prime}\gamma({\cal P})\\
 &  & -cov_{{\cal P}}(A,Z)^{-1}(\tilde{z}-E_{{\cal P}}[Z])E_{{\cal P}}[r(X,\pi_{{\cal P}}(X,Z))]^{\prime}\gamma({\cal P})\\
 &  & +cov_{{\cal P}}(A,Z)^{-1}E_{{\cal P}}[r(X,\pi_{{\cal P}}(X,Z))(Z-E_{{\cal P}}[Z])]^{\prime}\Omega_{{\cal P}}^{-1}r(\widetilde{x},\pi_{{\cal P}}(\widetilde{x},\widetilde{z}))\big(\widetilde{y}-r(\widetilde{x},\pi_{{\cal P}}(\widetilde{x},\widetilde{z}))^{\prime}\boldsymbol{\gamma}({\cal P})\big)\\
 &  & -cov_{{\cal P}}(A,Z)^{-1}E_{{\cal P}}[r(X,\pi_{{\cal P}}(X,Z))(Z-E_{{\cal P}}[Z])]^{\prime}\Omega_{{\cal P}}^{-1}\frac{\partial}{\partial\pi}r(\widetilde{x},\pi_{{\cal P}}(\widetilde{x},\widetilde{z}))(\widetilde{a}-\pi_{{\cal P}}(\widetilde{x},\widetilde{z}))r(\widetilde{x},\pi_{{\cal P}}(\widetilde{x},\widetilde{z}))^{\prime}\boldsymbol{\gamma}({\cal P})\\
 &  & -cov_{{\cal P}}(A,Z)^{-1}({\cal I}_{\tilde{a}}(1)-E_{{\cal P}}[A])(\tilde{z}-E_{{\cal P}}[Z])\cdot\theta_{IV}({\cal P})+\theta_{IV}({\cal P}).
\end{eqnarray*}
Setting the empirical mean of $\phi_{IV}(O,{\cal P})$ to zero, we
find an efficient estimator as 
\begin{eqnarray*}
\theta_{IV}(\widehat{{\cal P}}) & = & cov_{\widehat{{\cal P}}}(A,Z)^{-1}E_{\widehat{{\cal P}}}[r(X,\pi_{\widehat{{\cal P}}}(X,Z))(Z-E_{\widehat{{\cal P}}}[Z])]^{\prime}\widehat{\gamma}\\
 &  & +cov_{\widehat{{\cal P}}}(A,Z)^{-1}E_{\widehat{{\cal P}}}\Big[\frac{\partial}{\partial\pi}r(X,\pi_{\widehat{{\cal P}}}(X,Z))(A-\pi_{\widehat{{\cal P}}}(X,Z))(Z-E_{\widehat{{\cal P}}}[Z])\Big]^{\prime}\widehat{\gamma}\\
 &  & -cov_{\widehat{{\cal P}}}(A,Z)^{-1}E_{\widehat{{\cal P}}}[r(X,\pi_{\widehat{{\cal P}}}(X,Z))(Z-E_{\widehat{{\cal P}}}[Z])]^{\prime}\Omega_{\widehat{{\cal P}}}^{-1}\Gamma_{\widehat{{\cal P}}}\widehat{\gamma}
\end{eqnarray*}
which is the expression for the efficient IV estimand relating to
Equation (6).} \\
~

\textsc{(ii) ATE estimand}

{\small{}Analogously to the derivation of the IV estimand above, this
time we substitute the MTE expression from our model into ATE expression
in Table I of \citet[p.1595]{Mogstad2018a} to get $\theta_{ATE}({\cal P})=E_{{\cal P}}[r_{ATE}(X)]^{\prime}\gamma({\cal P})$
as defined in Equation (2). }{\small\par}

{\small{}We find the efficient influence function of $\theta_{ATE}({\cal P})$
as
\begin{eqnarray*}
\frac{\partial}{\partial t}\theta_{ATE}({\cal P}_{t})\Big\vert_{t=0} & = & \Big(\frac{\partial}{\partial t}E_{{\cal P}_{t}}[r_{ATE}(X)]\Big\vert_{t=0}\Big)^{\prime}\gamma({\cal P})+E_{{\cal P}}[r_{ATE}(X)]^{\prime}\frac{\partial}{\partial t}\gamma({\cal P}_{t})\Big\vert_{t=0}\\
 & = & (r_{ATE}(\tilde{x})-E_{{\cal P}}[r_{ATE}(X)])^{\prime}\gamma({\cal P})+E_{{\cal P}}[r_{ATE}(X)]^{\prime}\Omega_{{\cal P}}^{-1}r(\widetilde{x},\pi_{{\cal P}}(\widetilde{x},\widetilde{z}))\big(\tilde{y}-r(\widetilde{x},\pi_{{\cal P}}(\widetilde{x},\widetilde{z}))^{\prime}\gamma({\cal P})\big)\\
 &  & -E_{{\cal P}}[r_{ATE}(X)]^{\prime}\Omega_{{\cal P}}^{-1}\frac{\partial}{\partial\pi}r(\widetilde{x},\pi_{{\cal P}}(\widetilde{x},\widetilde{z}))(\widetilde{a}-\pi_{{\cal P}}(\widetilde{x},\widetilde{z}))r(\widetilde{x},\pi_{{\cal P}}(\widetilde{x},\widetilde{z}))^{\prime}\gamma({\cal P}).
\end{eqnarray*}
Setting the empirical mean of the efficient influence function to
zero, we find an efficient estimator as 
\[
\theta_{ATE}(\widehat{{\cal P}})=E_{\widehat{{\cal P}}}[r_{ATE}(X)]^{\prime}\widehat{\gamma}-E_{\widehat{{\cal P}}}[r_{ATE}(X)]^{\prime}\Omega_{\widehat{{\cal P}}}^{-1}\Gamma_{\widehat{{\cal P}}}\widehat{\gamma}.
\]
}\\
{\small{}~}{\small\par}

\textsc{(iii) ATT estimand}

{\small{}Again we substitute the MTE expression from our model into
the ATT expression in Table II of \citet[p.1595]{Mogstad2018a} to
get $\theta_{ATT}({\cal P})={\cal P}(A=1)^{-1}E_{{\cal P}}[r_{ATT}(X,Z,{\cal P})]^{\prime}\gamma({\cal P})$
as defined in Equation (3). }{\small\par}

{\small{}We find the efficient influence function of $\theta_{ATT}({\cal P})$
as 
\begin{eqnarray*}
\frac{\partial}{\partial t}\theta_{ATT}({\cal P}_{t})\Big\vert_{t=0} & = & \frac{\partial}{\partial t}{\cal P}_{t}(A=1)^{-1}\Big\vert_{t=0}\cdot E_{{\cal P}}[r_{ATT}(X,Z,{\cal P})]^{\prime}\gamma({\cal P})+{\cal P}(A=1)^{-1}\Big(\frac{\partial}{\partial t}E_{{\cal P}_{t}}[r_{ATT}(X,Z,{\cal P}_{t})]\Big\vert_{t=0}\Big)^{\prime}\gamma({\cal P})\\
 &  & +{\cal P}(A=1)^{-1}E_{{\cal P}}[r_{ATT}(X,Z,{\cal P})]^{\prime}\frac{\partial}{\partial t}\gamma({\cal P}_{t})\Big\vert_{t=0}.
\end{eqnarray*}
We deal with each of the three terms on the RHS of this expression
in turn. For the first term, note that 
\[
\frac{\partial}{\partial t}{\cal P}_{t}(A=1)^{-1}\Big\vert_{t=0}=-P(A=1)^{-2}({\cal I}_{\tilde{a}}(1)-P(A=1))
\]
so that 
\[
\frac{\partial}{\partial t}{\cal P}_{t}(A=1)^{-1}\Big\vert_{t=0}\cdot E_{{\cal P}}[r_{ATT}(X,Z,{\cal P})]^{\prime}\gamma({\cal P})=\theta_{ATT}({\cal P})-\theta_{ATT}({\cal P})P(A=1)^{-1}{\cal I}_{\tilde{a}}(1).
\]
}{\small\par}

{\small{}Next, for the second term, 
\begin{eqnarray*}
\frac{\partial}{\partial t}E_{{\cal P}_{t}}[r_{ATT}(X,Z,{\cal P}_{t})]\Big\vert_{t=0} & = & r_{ATT}(\tilde{x},\pi_{{\cal P}}(\tilde{x},\tilde{z}))-E_{{\cal P}}[r_{ATT}(X,\pi_{{\cal P}}(X,Z))]\\
 &  & +\frac{\partial}{\partial\pi}r_{ATT}(\tilde{x},\pi_{{\cal P}}(\tilde{x},\tilde{z}))({\cal I}_{\tilde{a}}(1)-\pi_{{\cal P}}(\tilde{x},\tilde{z}))
\end{eqnarray*}
by Lemma 1. Hence, 
\begin{eqnarray*}
{\cal P}(A=1)^{-1}\Big(\frac{\partial}{\partial t}E_{{\cal P}_{t}}[r_{ATT}(X,Z,{\cal P}_{t})]\Big\vert_{t=0}\Big)^{\prime}\gamma({\cal P}) & = & {\cal P}(A=1)^{-1}r_{ATT}(\tilde{x},\pi_{{\cal P}}(\tilde{x},\tilde{z}))^{\prime}\gamma({\cal P})-\theta_{ATT}({\cal P})\\
 &  & +{\cal P}(A=1)^{-1}\Big(\frac{\partial}{\partial\pi}r_{ATT}(\tilde{x},\pi_{{\cal P}}(\tilde{x},\tilde{z}))({\cal I}_{\tilde{a}}(1)-\pi_{{\cal P}}(\tilde{x},\tilde{z}))\Big)^{\prime}\gamma({\cal P}).
\end{eqnarray*}
Using Lemma 6, the third term is equal to 
\begin{align*}
{\cal P}(A=1)^{-1}E_{{\cal P}}[r_{ATT}(X,Z,{\cal P})]^{\prime}\Omega_{{\cal P}}^{-1}r(\widetilde{x},\pi_{{\cal P}}(\widetilde{x},\widetilde{z}))\big(\widetilde{y}-r(\widetilde{x},\pi_{{\cal P}}(\widetilde{x},\widetilde{z}))^{\prime}\gamma({\cal P})\big)\\
-{\cal P}(A=1)^{-1}E_{{\cal P}}[r_{ATT}(X,Z,{\cal P})]^{\prime}\Omega_{{\cal P}}^{-1}\frac{\partial}{\partial\pi}r(\widetilde{x},\pi_{{\cal P}}(\widetilde{x},\widetilde{z}))(\widetilde{a}-\pi_{{\cal P}}(\widetilde{x},\widetilde{z}))r(\widetilde{x},\pi_{{\cal P}}(\widetilde{x},\widetilde{z}))^{\prime}\gamma({\cal P}).
\end{align*}
Combining the above results and setting the mean of the efficient
influence function to zero, we find an efficient estimator as 
\begin{eqnarray*}
\theta_{ATT}({\cal P}) & = & \widehat{{\cal P}}(A=1)^{-1}E_{\widehat{{\cal P}}}[r_{ATT}(X,\pi_{\widehat{{\cal P}}}(X,Z))]^{\prime}\widehat{\gamma}\\
 &  & +\widehat{{\cal P}}(A=1)^{-1}E_{\widehat{{\cal P}}}\Big[\frac{\partial}{\partial\pi}r_{ATT}(X,\pi_{{\cal \widehat{P}}}(X,Z))(A-\pi_{\widehat{{\cal P}}}(X,Z))\Big]^{\prime}\widehat{\gamma}\\
 &  & -\widehat{{\cal P}}(A=1)^{-1}E_{\widehat{{\cal P}}}[r_{ATT}(X,Z,\widehat{{\cal P}})]^{\prime}\Omega_{{\cal \widehat{P}}}^{-1}\Gamma_{\widehat{{\cal P}}}\widehat{\gamma}.
\end{eqnarray*}
}{\small\par}

\textsc{(iv) ATU estimand}

{\small{}The efficient ATU estimator can be derived using very similar
arguments as used for the efficient ATT estimator above. }\\
{\small{}~}{\small\par}

\textsc{(v) ASG estimand}

{\small{}Follows immediately from the expressions of the ATT and ATU
estimands.\hfill{}$\square$}{\small\par}

\newpage{}

\subsection*{Additional Simulation Results: Bias Performance and Bandwidth Choice}
\noindent \begin{center}
\includegraphics[height=5cm]{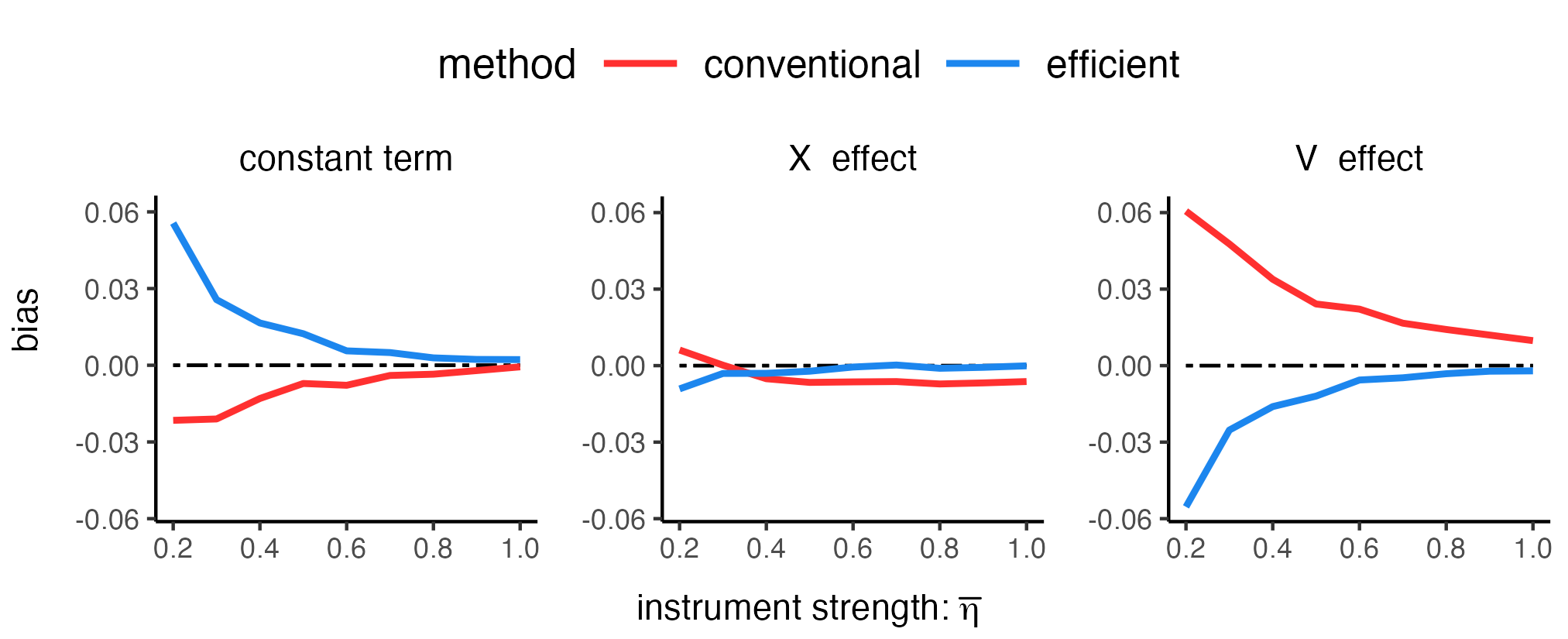}~\\
{\small{}Figure S1. Average bias of MTE parameter estimates over 1000
simulation experiments for varying instrument strength. }{\small\par}
\par\end{center}

~
\noindent \begin{center}
\includegraphics[height=5cm]{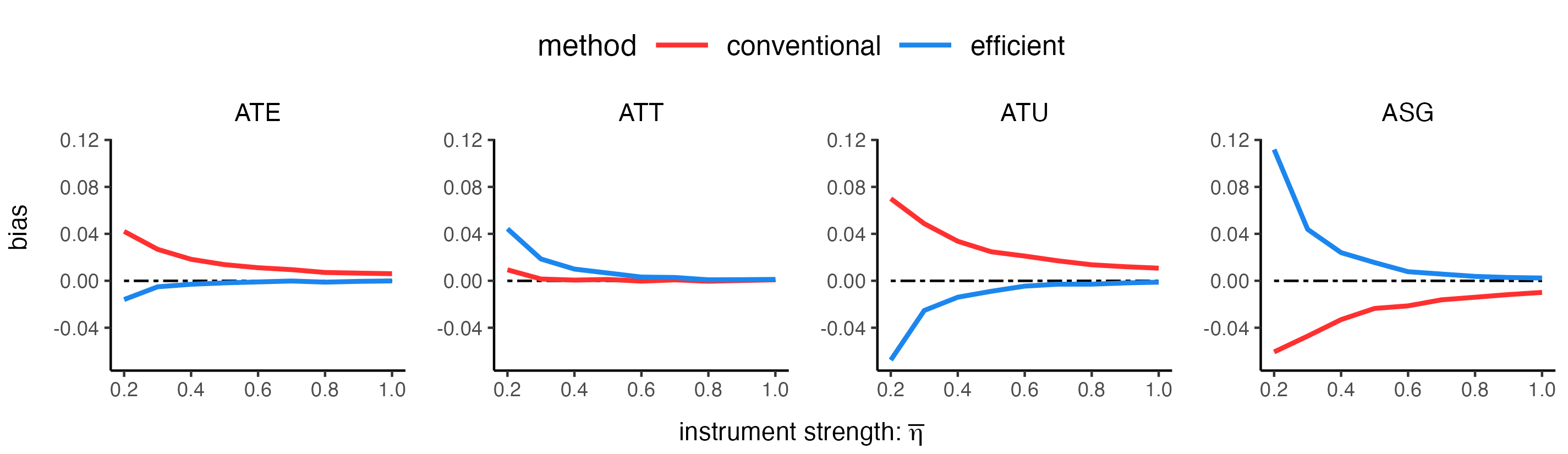}~\\
{\small{}Figure S2. Average bias of target parameter estimates over
1000 simulation experiments for varying instrument strength. }{\small\par}
\par\end{center}

~
\noindent \begin{center}
\includegraphics[height=5cm]{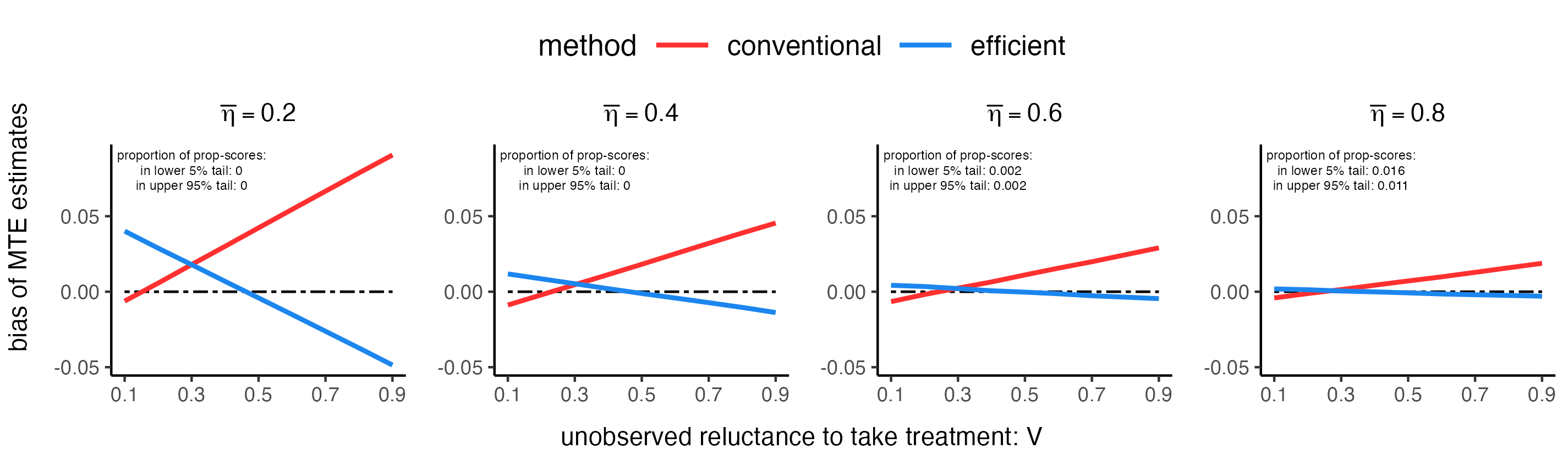}~\\
{\small{}Figure S3. Average bias of MTE estimates averaged over covariates,
$E[Y_{1}-Y_{0}\vert V]$. }{\small\par}
\par\end{center}

\newpage{}

In Figures S4 and S5 below, we considered the sensitivity of estimates
to the choice of bandwidth. Let $h_{0}$ be the averaged bandwidth
using cross-validation on $3$ random samples of size $1,000$ as
discussed in Section 4.2 of the main text. We present the bias performance
of the conventional and efficient estimates as the selected bandwidth
$h=\max\{0.005,h_{0}+0.5\kappa\}$ moved away from $h_{0}$ by varying
the parameter choice $\kappa$ away from zero. \\
~
\noindent \begin{center}
\includegraphics[height=5cm]{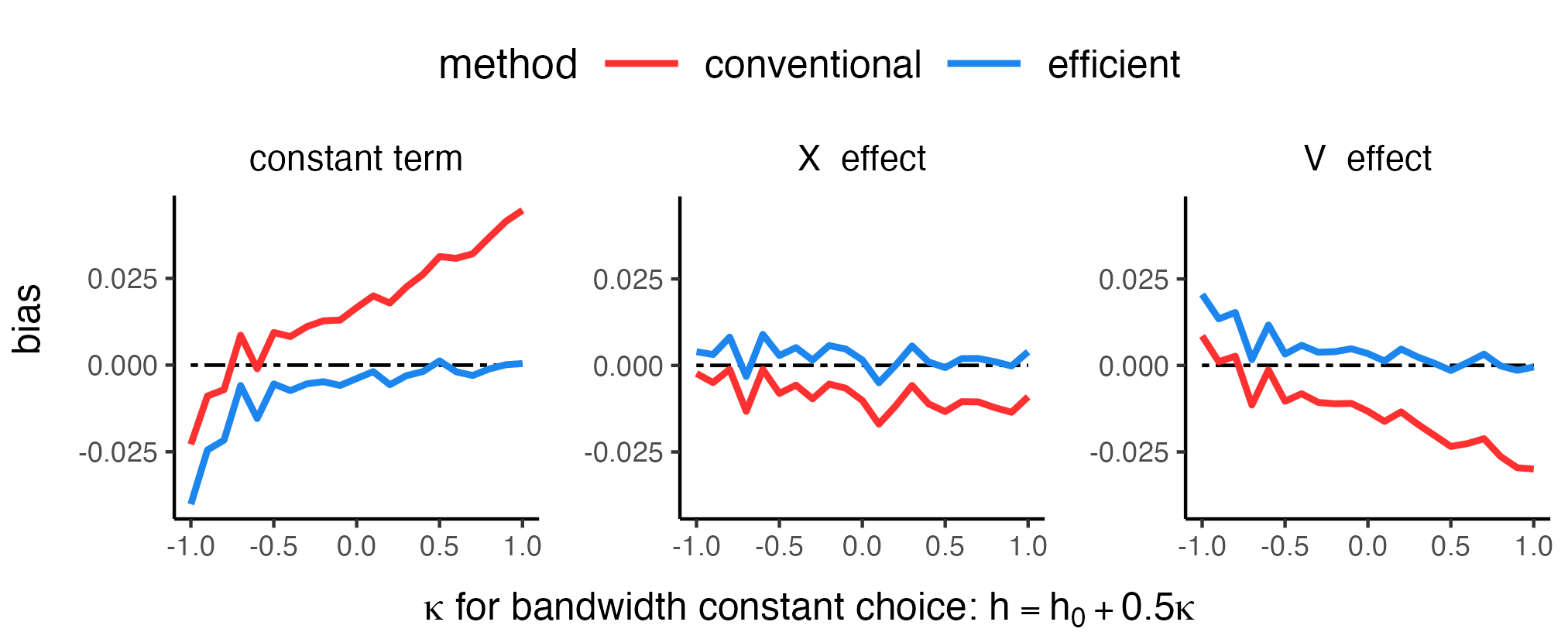}~\\
{\small{}Figure S4. Average bias of MTE parameter estimates over 1000
simulation experiments for bandwidth constant choice $h$. }{\small\par}
\par\end{center}

\noindent \begin{center}
~
\par\end{center}

\noindent \begin{center}
\includegraphics[height=5cm]{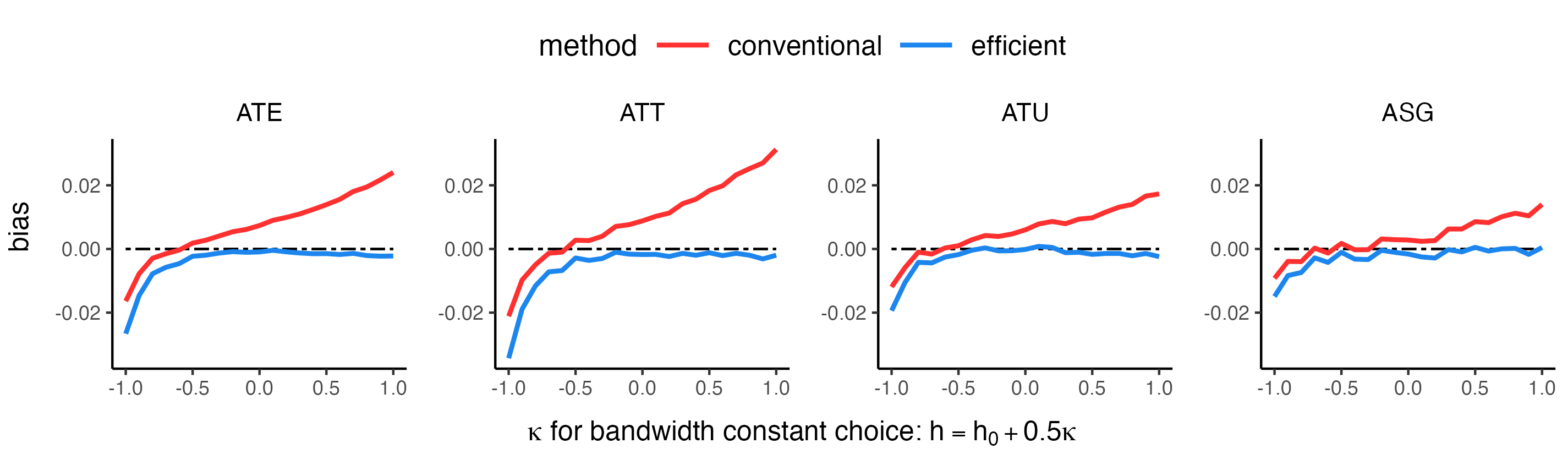}~\\
{\small{}Figure S5. Average bias of target estimand point estimates
over 1000 simulation experiments for bandwidth constant choice $h$. }{\small\par}
\par\end{center}

\newpage{}

\subsection*{Additional Empirical Results: MTE Model with Interaction Effects}
\noindent \begin{center}
\begin{tabular}{c|c|c|c|c}
\hline 
 & \multicolumn{4}{c}{{\small{}Model: MTE with an interaction $X\times V$ effect}}\tabularnewline
\hline 
{\small{}MTE Parameter} & {\small{}constant term} & {\small{}woman $(X)$} & \multicolumn{1}{c|}{{\small{}health consciousness $(V)$}} & \multicolumn{1}{c}{{\small{}interaction effect $(X\times V)$}}\tabularnewline
\hline 
{\small{}Estimate} & {\small{}2.454} & {\small{}-1.797} & {\small{}-3.032} & {\small{}2.001}\tabularnewline
{\small{}Standard Error} & {\small{}0.896} & {\small{}1.123} & {\small{}1.329} & {\small{}2.266}\tabularnewline
{\small{}95\% CI-Lower} & {\small{}0.697} & {\small{}-3.998} & {\small{}-5.638} & {\small{}-2.440}\tabularnewline
{\small{}95\% CI-Upper} & {\small{}4.211} & {\small{}0.404} & {\small{}-0.427} & {\small{}6.441}\tabularnewline
\hline 
\end{tabular}\\
{\small{}~}\\
{\small{}Table S1. Efficient estimates of MTE parameters in a model
with interaction effects, $\theta_{\text{MTE}}(X,V)\sim\text{constant term}+X_{\text{woman}}+V_{\text{health-consciousness}}+(X_{\text{woman}}\times V_{\text{health-consciousness}})$.}\\
{\small{}~}{\small\par}
\par\end{center}

\noindent \begin{center}
\begin{tabular}{c|c|c|c|c}
\hline 
 & \multicolumn{4}{c}{{\small{}Model: MTE with an interaction $X\times V$ effect}}\tabularnewline
\hline 
{\small{}Estimand} & {\small{}ATE} & {\small{}ATT} & \multicolumn{1}{c|}{{\small{}ATU}} & \multicolumn{1}{c}{{\small{}ASG}}\tabularnewline
\hline 
{\small{}Estimate} & {\small{}0.525} & {\small{}1.044} & {\small{}-0.073} & {\small{}1.117}\tabularnewline
{\small{}Standard Error} & {\small{}0.202} & {\small{}0.351} & {\small{}0.482} & {\small{}0.730}\tabularnewline
{\small{}95\% CI-Lower} & {\small{}0.128} & {\small{}0.356} & {\small{}-1.018} & {\small{}-0.314}\tabularnewline
{\small{}95\% CI-Upper} & {\small{}0.921} & {\small{}1.731} & {\small{}0.871} & {\small{}2.549}\tabularnewline
\hline 
\end{tabular}\\
{\small{}~}\\
{\small{}Table S2. Efficient estimates of target estimands in a model
with interaction effects, $\theta_{\text{MTE}}(X,V)\sim\text{constant term}+X_{\text{woman}}+V_{\text{health-consciousness}}+(X_{\text{woman}}\times V_{\text{health-consciousness}})$.}\\
{\small{}~}{\small\par}
\par\end{center}

\noindent \begin{center}
\includegraphics[height=5.5cm]{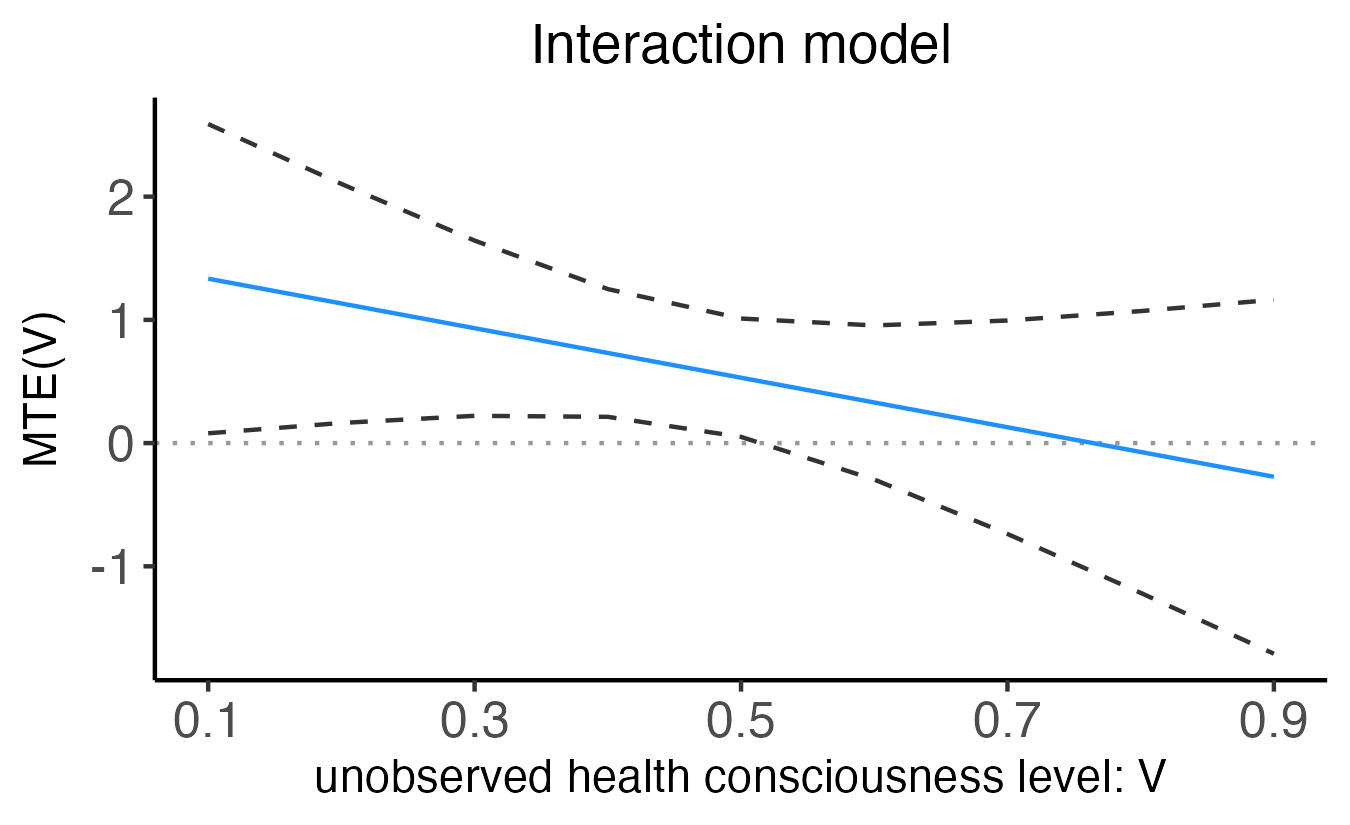}~\\
{\small{}Figure S6. Efficient MTE estimates averaged over covariates,
$E[Y_{1}-Y_{0}\vert V]$ (solid blue line) under a model that allows
the MTE slope in $V$ to differ for men and women, however no sex
difference in the MTE slope in $V$ was estimated (see Table S1).
The dashed black curves indicate 95\% asymptotic confidence intervals. }{\small\par}
\par\end{center}
\end{document}